\newtheorem{fact}[theorem]{Fact}
\newcolumntype{C}{>{\centering\arraybackslash}p{0.35\textwidth}}
\newcommand{\dims}{d}
\newcommand{\cD}{{\mathcal{D}}}
\newcommand{\cV}{{\mathcal{V}}}
\newcommand{\povmset}{{\mathfrak{M}}}
\newcommand{\nqubits}{{N}}
\newcommand{\opnorm}[1]{{\left\|#1\right\|}_{\text{op}}}
\newcommand{\tracenorm}[1]{{\left\|#1\right\|}_{1}}
\newcommand{\hsnorm}[1]{{\left\|#1\right\|}_{\text{HS}}}
\newcommand{\barDelta}{{\overline{\Delta}}}
\newcommand{\ptb}{{z}}
\newcommand{\ptbDistr}{{\mathcal{D}_{\ell,\cd}}}
\newcommand{\cd}{{c}}
\newcommand{\isthestate}{\texttt{YES}}
\newcommand{\notthestate}{\texttt{NO}}
\newcommand{\cA}{\mathcal{A}}
\newcommand{\cB}{\mathcal{B}}
\newcommand{\cC}{\mathcal{C}}
\newcommand{\bfP}{\mathbf{P}}
\newcommand{\out}{{x}}
\def\multiset#1#2{\ensuremath{\left(\kern-.3em\left(\genfrac{}{}{0pt}{}{#1}{#2}\right)\kern-.3em\right)}}
\newcommand{\ham}[2]{\operatorname{d}_{\rm Ham}\left(#1,#2\right)}
\newcommand{\EMD}[2]{\operatorname{d}_{\rm EM}\left(#1,#2\right)}
\newcommand{\variance}[2]{\var_{#1}{\mleft[#2\mright]}}
\newcommand{\qmm}{{\rho_{\text{mm}}}} % maximally mixed state
\newcommand{\qkn}{{\rho_0}} % known reference state
\newcommand{\HH}{\mathbb{H}}
\newcommand{\Herm}[1]{{\HH_{#1}}}
\newcommand{\qbit}[1]{|{#1}\rangle}
\newcommand{\qadjoint}[1]{\langle{#1}|}
\newcommand{\qproj}[1]{\qbit{#1}\qadjoint{#1}}
\newcommand{\qoutprod}[2]{\qbit{#1}\qadjoint{#2}}
\newcommand{\qdotprod}[2]{\langle#1|#2\rangle}
\newcommand{\hdotprod}[2]{\left\langle#1,#2\right\rangle}
\newcommand{\matdotprod}[3]{\langle#1|#2|#3\rangle}
\newcommand{\eye}{\mathbb{I}}
\newcommand{\img}{\text{i}}
\newcommand{\rk}{{r}}
\newcommand{\VecOp}{\text{vec}}
\newcommand{\vvec}[1]{|#1\rangle\rangle}
\newcommand{\vadj}[1]{\langle\langle#1|}
\newcommand{\vvdotprod}[2]{\left\langle\left\langle#1|#2\right\rangle\right\rangle}
\newcommand{\bx}{\mathbf{x}}
\newcommand{\outset}{{\mathcal{X}}}
\newcommand{\Luders}{\mathcal{H}}
\newcommand{\Choi}{{\mathcal{C}}}
\newcommand{\hbasis}{{\mathcal{V}}}
\newcommand{\qest}{{\hat{\rho}}}
\newcommand{\dm}{\mathrm{d}}
\newcommand{\Sp}{\mathbb{S}}
\newcommand{\Sim}{\mathcal{S}}
\newcommand{\Haar}[1]{{\mathcal{U}_{#1}}}
\newcommand{\POVM}{\mathcal{M}}
\newcommand{\cycle}{\mathcal{C}}
\newcommand{\constr}{{\mathcal{A}}}
\title{Adversarially robust quantum state learning and testing}
\author{
    \begin{tabular}[t]{C@{\extracolsep{2em}} C}
   Maryam Aliakbarpour\thanks{Computer science department and Ken Kennedy Institute.} &Vladimir Braverman\thanks{Part of the work was completed at Rice University.} \\
 Rice University & Johns Hopkins University\\ 
\small \texttt{maryama@rice.edu} &\small \texttt{vova@cs.jhu.edu} 
\end{tabular}
\vspace{2ex}\\
\begin{tabular}[t]{C@{\extracolsep{2em}} C}
    Nai-Hui Chia & Yuhan Liu \\
Rice University & Rice University\\ 
\small \texttt{nc67@rice.edu} &\small \texttt{yuhan-liu@rice.edu} 
\end{tabular}
}
\begin{document}
\maketitle
\begin{abstract}
Quantum state learning is a fundamental problem in physics and computer science. As near-term quantum devices are error-prone, it is important to design error-resistant algorithms. Apart from device errors, other unexpected factors could also affect the algorithm, such as careless human read-out error, or even a malicious hacker deliberately altering the measurement results. Thus, we want our algorithm to work even in the worst case when things go against our favor. 

We consider the practical setting of single-copy measurements and propose the $\gamma$-adversarial corruption model where an imaginary adversary can arbitrarily change $\gamma$-fraction of the measurement outcomes. This is stronger than the $\gamma$-bounded SPAM noise model, where the post-measurement state changes by at most $\gamma$ in trace distance. Under our stronger model of corruption, we design an algorithm using non-adaptive measurements that can learn an unknown rank-$r$ state up to $\tilde{O}(\gamma\sqrt{r})$ in trace distance, provided that the number of copies is sufficiently large. We further prove an information-theoretic lower bound of $\Omega(\gamma\sqrt{r})$ for non-adaptive measurements, demonstrating the optimality of our algorithm. 
Our upper and lower bounds also hold for quantum state testing, where the goal is to test whether an unknown state is equal to a given state or far from it.

Our results are intriguingly optimistic and pessimistic at the same time. For general states, the error is dimension-dependent and $\gamma\sqrt{d}$ in the worst case, meaning that only corrupting a very small fraction ($1/\sqrt{d}$) of the outcomes could totally destroy any non-adaptive learning algorithm. However, for constant-rank states that are useful in many quantum algorithms, it is possible to achieve dimension-independent error, even in the worst-case adversarial setting.

\end{abstract}

\section{Introduction}

% \ynote{
% Introduction bullet points
% \begin{enumerate}
%     \item Near-term quantum devices are noisy \cite{preskill2018quantum}. Need to consider noise in quantum state learning
%     \item Limitations in existing works in quantum state learning
%     \begin{itemize}
%         \item Many works assume all states are identical and measurements are perfect.
%         \item Some works assume independent state preparation and measurement noise in the system, but in practice may not be true. 
%         \item Error mitigation may be possible with prior knowledge of the noise model, but in general, we cannot guarantee that the model is accurate.
%         \item Security concerns: if someone with sufficient quantum knowledge hacks into the system, how much damage can the adversary cause to the final result?
%     \end{itemize}
%     \item We are the first to consider robustness in quantum state learning in the worst-case scenario. To Vova: brief summary of existing literature on adversarial robustness.
%     \item Works on distribution estimation shows that robust learning is possible when $\gamma$-fraction of the samples are adversarially corrupted. For well-structured distributions and problems, the error only depends on $\gamma$ but not on the dimensionality.
% \end{enumerate}
% }

Learning properties of quantum states is a fundamental task in quantum computing and physics. In the canonical setting, we are given identical copies of an unknown state $\rho$ and need to design measurements and algorithms to learn information about $\rho$. Commonly studied problems include tomography~\cite{KRT14,ODonnellW17, HaahHJWY17,chen2023does,ADLY2025Paulinot}, where the goal is to learn the entire state description, and testing~\cite{BubeckC020,ChenLO22instance,liu2024role}, where the goal is to test whether $\rho$ satisfies some specific property. 

While extensive research has been done on these problems, a common assumption in many works (e.g., all the aforementioned works) is that the $\ns$ copies of the state are identical. This means that the copies are generated through identical and independent physical processes. They also assume that measurements are accurate and work exactly as described by their mathematical formulations. 
However, neither assumption may be true in practice, as the state preparation and measurement device may be prone to unknown error, especially in the current NISQ era~\cite{preskill2018quantum} of quantum computing.

To this end, many works have considered robustness under state preparation and measurement (SPAM) error \cite{brandao2020fast, yu2023robust,stilck2024efficient}. A general formulation considered in \cite{brandao2020fast} for single-copy measurements is that the state preparation and measurement lead to a total of $\gamma$ error in trace distance of post-measurement states. They show that $\Theta(\gamma)$ error is tight with sufficiently many samples.

We argue that SPAM noise may not be able to characterize all possible errors in the system. 
Unexpected physical factors, such as an unstable power source or even a subway train passing beneath the laboratory, may cause unwanted effects to the device. There may also be human factors. Some may be benign, like 
 inadvertent errors in measurement reading. Others may be malicious, like a hacker with sufficient quantum knowledge who tries to cause as much damage as possible. When any of these occur, the SPAM error assumptions may be violated, and the algorithms proposed by previous works may fail to work.
 % For example, when an unknown small $\gamma$-fraction of the states is lost but others remain intact, the work of~\cite{brandao2020fast} cannot provide meaningful error guarantees. 

One may argue that improved quantum error correction may eventually reduce all physical errors below an acceptable threshold, especially given the recent progress~\cite{google2023suppressing,acharya2024quantum}. However, as discussed, many factors are non-physical and thus cannot be eliminated by error correction.
Therefore, it is necessary to propose a stronger corruption model for quantum state learning that takes into account these unexpected factors and design algorithms that are robust in this more challenging setting. The main question is as follows,
\begin{center}
    \fbox{Can we design quantum learning algorithms that still work even when things go against our favor?}
\end{center}

We focus on single-copy tomography and quantum state testing, and describe a corruption model suitable for these problems.

\subsection{Adversarial corruption model for single-copy tomography}
There are $\ns$ copies of $\rho$ and a random seed $R$, which can be viewed as an infinite random binary string. We can apply measurements $\POVM^\ns = (\POVM_1, \ldots, \POVM_\ns)$ to each copy, where $\POVM_i=\{M_x^i\}_{x\in\cX}$. For $i\ge 1$ let $\out_i$ be the outcome of measuring the $i$th copy with $\POVM_i$. All measurements can be chosen based on the random seed $R$. Define $\out^t=(\out_0,\out_1, \ldots, \out_t)$. 

\begin{description}
\item[Tomography.] The goal is to design a measurement scheme $\POVM^\ns$ and an estimator $\qest:\outset^\ns\mapsto \C^{\dims\times\dims}$ such that for all unknown state $\rho$,
\[
\probaOf{\tracenorm{\qest(\out^\ns)-\rho}\le \eps}\ge0.99.
\]
The probability is over the random seed $R$ and randomness in the measurement outcomes.

\end{description}

We also define a related problem of quantum state testing, which is useful to establish our lower bounds,
\begin{description}
    \item[Testing.]  Given a known target state $\qkn$, the goal is to test whether $\rho=\qkn$ or $\tracenorm{\rho-\qkn}>\eps$ with probability at least 0.8.
\end{description}

It is a standard fact that tomography is a harder problem than quantum state testing in terms of computation,
\begin{fact}
    Any tomography algorithm that learns an unknown state up to an accuracy of $\eps/2$ can be used to perform testing with an accuracy level of $\eps$.
    \label{fact:tomo-testing}
\end{fact}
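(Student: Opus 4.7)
The plan is a textbook reduction via the triangle inequality. Suppose we have a tomography algorithm $\cA$ that, for any unknown state $\rho$, outputs an estimator $\qest$ satisfying $\tracenorm{\qest-\rho}\le \eps/2$ with probability at least $0.99$. To build a tester for the problem of distinguishing $\rho=\qkn$ from $\tracenorm{\rho-\qkn}>\eps$, I would run $\cA$ on the input copies to obtain $\qest$, then output \isthestate{} if $\tracenorm{\qest-\qkn}\le \eps/2$ and \notthestate{} otherwise.

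To verify correctness, condition on the event $E=\{\tracenorm{\qest-\rho}\le \eps/2\}$, which holds with probability at least $0.99$. In the \isthestate{} case, $\rho=\qkn$, so $\tracenorm{\qest-\qkn}=\tracenorm{\qest-\rho}\le \eps/2$ and the tester outputs \isthestate{}. In the \notthestate{} case, the reverse triangle inequality gives
\[
\tracenorm{\qest-\qkn}\ge \tracenorm{\rho-\qkn}-\tracenorm{\qest-\rho}> \eps - \eps/2 = \eps/2,
\]
so the tester outputs \notthestate{}. Hence the tester is correct whenever $E$ holds, i.e., with probability at least $0.99 \ge 0.8$.

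Since the reduction is essentially one line after the triangle inequality, I do not anticipate any real obstacle; the only subtlety worth flagging in the write-up is that the reduction preserves the measurement model (the tester uses exactly the same single-copy non-adaptive measurements as $\cA$) and the adversarial corruption model (the adversary acts identically on the outcomes regardless of whether we post-process them for tomography or testing), so the fact transfers to the $\gamma$-adversarial setting considered later in the paper. This will be useful when converting the $\Omega(\gamma\sqrt{r})$ lower bound for testing into a lower bound for tomography.
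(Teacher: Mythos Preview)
Your argument is correct and is exactly the standard triangle-inequality reduction one would expect. The paper itself does not give a proof of this statement at all---it simply labels it a ``standard fact''---so there is nothing to compare against beyond noting that your write-up is the canonical justification.
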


When the state is $\rho$, the distribution of $\out_i,i\ge 1$ is determined by Born's rule,
\begin{equation}
    \p_\rho^{i}(x)=\Tr[M_x^i\rho],\label{equ:distr-i-cond}
\end{equation}
For $1\le t\le \ns$, we further define $\p_\rho^{\out^t}$ as the distribution of $\out^t$ when the state is $\rho$. For non-adaptive measurements, $\p_\rho^{\out^t}$ is a product distribution conditioned on the random seed $R$. The measurements $\POVM^\ns$ can also be chosen adaptively, i.e., $\POVM_t$ depends on previous outcomes $x^{t-1}$. In this case, $\p_\rho^{\out^t}$ would not be a product distribution in general.

In practice, we may have restrictions on the types of measurements that can be applied. We use $\povmset$ to denote the set of allowable measurements for each copy. 

\paragraph{Adversarial corruption.} We imagine that there is an adversary $\cA$ that can arbitrarily corrupt $\gamma$-fraction of the outcomes.
\begin{definition}[$\gamma$-adversarial corruption model]
    For non-adaptive measurement schemes, the interaction between the adversary and the algorithm is as follows,
\begin{enumerate}
    \item Measurements are applied to all copies of $\rho$ to obtain outcomes $(x_1, \ldots, x_\ns)$.
    \item The adversary $\cA$ arbitrarily changes a $\gamma$-fraction of them. It has perfect knowledge about the measurements used by the algorithm and can perform the corruption based on that knowledge. 
    \item The algorithm uses the corrupted outcomes $(y_1, \ldots, y_\ns)$ to learn about the state $\rho$.
\end{enumerate}
\end{definition}

The adversarial model is inspired by formulations in the robust statistical estimation literature. Many works \cite{CDG17,kothari2018robust,DiakonikolasKP20, BakshiDHKKK20robustSoS, ilias2024SoSsubG} consider the case when  $\gamma$-fraction of the samples can be arbitrarily changed. ~\cite{acharya21manipulation} studied distributed estimation where a central server tries to learn the distribution from information-constrained messages sent by a group of users. In this setting, the adversary has the power to directly change the messages sent by the users. Beyond statistical estimation, there is also a vast literature on adversarial robust streaming algorithms for many fundamental problems, such as frequency estimation~\cite{Ben-EliezerJWY22, HassidimKMMS22}, linear regression~\cite{CherapanamjeriS23}, and clustering~\cite {DBravermanHMSSZ21}. 

In stark contrast to the abundance of adversarially robust classical algorithms, to our knowledge, there have been very few works, at least in quantum state learning, that consider a similar adversarial setting. Thus, we believe our work is an important step in understanding the adversarial robustness of quantum algorithms.

\paragraph{Adversarial corruption is stronger than SPAM noise.} 
We first review the $\gamma$-bounded SPAM model in~\cite{brandao2020fast}. Let $\gamma_s$ be the error due to state preparation, where the actual state we measure is $\tilde{\rho}$ that satisfies $\tracenorm{\tilde{\rho}-\rho}\le \gamma_s$.

The measurement error is characterized by the maximum induced error in trace norm. For a measurement $\POVM$ applied to a state $\rho$, we abuse the notation a bit by denoting the post-measurement state as $\POVM(\rho)$. Then
\[
\sup_{\rho\text{ state}}\tracenorm{\POVM(\rho)-\POVM'(\rho)}\le \gamma_{m}.
\]
Here $\POVM'$ is the actual measurement that is applied. Note that $\tilde{\rho}$ and $\POVM'$ could depend on measurements and differ for different copies. Bounded SPAM noise requires that $\gamma_s+\gamma_m\le \gamma$.

In our formulation, it may seem at first glance that our model is incomparable or even weaker than SPAM noise, since we still assume all states and measurements are perfect. However, because bounded SPAM noise only changes each outcome distribution by at most $\gamma$ total variation distance, the adversary can nearly simulate a $\gamma$-bounded SPAM noise by only corrupting at most $2\gamma$-fraction of the outcomes. See \cref{sec:adv-strong-formal} for formal proof.

This is similar to how in robust statistics, the strong contamination model where $\gamma$-fraction of the samples are adversarially changed, is stronger than the case when each sample is replaced by a sample from another distribution that is $\gamma$-far in total variation distance~\cite[Section 1.2.3]{diakonikolas2023algorithmic}. 
%  The idea is that since bounded SPAM noise only changes each outcome distribution by at most $\gamma$ total variation distance, by maximal coupling, the adversary can change a clean sample $x_i$ with probability at most $\gamma$ to make it look exactly as if it is generated from SPAM noise. By Hoefdding's inequality, the probability that the adversary needs to change more than $2\gamma\ns$ outcomes is $\exp(-\Omega(\ns))$.We give a formal proof in \cref{sec:lower-framework-overview}.

\subsection{Results}
We now state our results for robust quantum state learning under adversarial corruption.

\begin{theorem}
    Let $\rho$ be an unknown state with rank $r$. Under $\gamma$-corruption, there exists an algorithm using non-adaptive measurements that achieves an error of $\tildeO{\gamma\sqrt{r}}$, provided that the number of copies $\ns$ is sufficiently large.
\end{theorem}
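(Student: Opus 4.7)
My approach is to use non-adaptive random Pauli measurements to robustly estimate each Pauli expectation $\tr[P\rho]$, then perform a rank-$r$ recovery on top of these estimates, and finally convert the Hilbert--Schmidt error of the recovery into the trace-norm error claimed in the theorem via $\tracenorm{A}\le\sqrt{2r}\,\hsnorm{A}$, valid whenever $\rank A\le 2r$. Since $\rho$ is rank $r$ and the recovery will be constrained to rank at most $r$, the difference $\hat\rho-\rho$ automatically has rank at most $2r$, so the only real work is obtaining a sufficiently strong Hilbert--Schmidt bound.

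\paragraph{Step 1: Robust per-Pauli estimation.} Using the shared random seed $R$, partition the $\ns$ copies into $d^2-1$ equal-size batches, one per non-identity Pauli $P$, and estimate $\alpha_P:=\tr[P\rho]$ by the empirical mean (or a trimmed mean) of the $\pm1$ outcomes of that batch, giving $\hat\alpha_P$. The adversary's total budget of $\gamma\ns$ flips may be distributed across Paulis arbitrarily, but writing $e_P:=\hat\alpha_P-\alpha_P$ one always has $\|e\|_\infty\le 2$ and $\|e\|_1=O(\gamma d^2)$, and in particular $|e_P|=O(\gamma)$ for any Pauli on which the adversary has not concentrated its attack. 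Taking $\ns$ sufficiently large makes the statistical fluctuations negligible compared to this adversarial contribution.

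\paragraph{Step 2: Rank-$r$ recovery and main obstacle.} With these estimates in hand, recover a rank-$r$ density matrix by solving a nuclear-norm-constrained SDP of the form
\[
\hat\rho=\argmin\Bigl\{\tracenorm{X}\,:\,X\succeq 0,\ \tr X=1,\ \sum_{P\in S}\bigl(\tr[PX]-\hat\alpha_P\bigr)^2\le\epsilon^2\Bigr\},
\]
where $S$ is a random subcollection of $K=\tildeO{rd}$ Paulis chosen so that the sensing operator $X\mapsto(\tr[PX])_{P\in S}$ satisfies the matrix restricted-isometry property on rank-$2r$ Hermitians (Liu 2011; Gross--Krueger--Eisert--Liu 2010; Candes--Plan 2011). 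If one can show $\hsnorm{\hat\rho-\rho}=\tildeO\gamma$ (dimension-free) on the rank-$\le 2r$ difference, then $\tracenorm{\hat\rho-\rho}\le\sqrt{2r}\,\hsnorm{\hat\rho-\rho}=\tildeO{\gamma\sqrt r}$ finishes the proof. The main technical obstacle is precisely this dimension-free Hilbert--Schmidt bound: feeding the worst-case $\|e\|_2=O(\sqrt\gamma\,d)$ naively into the RIP recovery guarantee yields only $\hsnorm{\hat\rho-\rho}=O(\sqrt{\gamma d})$ and hence $O(\sqrt{\gamma rd})$ in trace norm, which is a full $\sqrt d$ factor too large. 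Closing this gap requires either (i) a mixed $\ell_1/\ell_2$ robust-regression formulation that separately handles the concentrated and spread regimes of the adversary's corruption pattern, or (ii) a refined RIP analysis that exploits the low-rank constraint to project the noise onto the $O(rd)$-dimensional effective subspace (tangent space of the rank-$r$ manifold) before bounding it. This combined adversarial/low-rank noise analysis is where the algorithmic novelty lies, and is the step I would expect to spend the bulk of the proof on.
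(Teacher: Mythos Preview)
Your final step—project to rank $r$ and use $\tracenorm{\cdot}\le\sqrt{2r}\,\hsnorm{\cdot}$ on the rank-$\le 2r$ difference—is exactly what the paper does, and the paper likewise reduces everything to a dimension-free bound $\hsnorm{\hat\rho-\rho}=\tilde O(\gamma)$. The gap is that with two-outcome Pauli measurements this bound is \emph{information-theoretically unachievable}, so the ``main technical obstacle'' you flag in Step~2 cannot be closed by any recovery procedure, RIP-based, mixed $\ell_1/\ell_2$, or otherwise. Concretely, take rank-$1$ states $\rho=\qproj{0\cdots0}$ and $\rho_\theta=\qproj{\psi_\theta}$ with $\qbit{\psi_\theta}=\cos\theta\,\qbit{0\cdots0}+\sin\theta\,\qbit{10\cdots0}$. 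Their Pauli expectations differ on only $d$ of the $d^2$ Paulis (those with $X$ or $Z$ on qubit~$1$ and $I/Z$ elsewhere), each by at most $O(\theta)$. With batches of size $\ns/d^2$, shifting every such $\hat\alpha_P$ by $O(\theta)$ costs $O(\theta\,\ns/d)$ flips in total, i.e.\ $\gamma=\Theta(\theta/d)$; the adversary sees the random seed, so this works for any allocation of copies and any RIP subset $S$ you pick. The corrupted data are then indistinguishable from clean data generated by $\rho_\theta$, forcing trace-norm error $\tracenorm{\rho-\rho_\theta}=2|\sin\theta|=\Theta(\gamma d)$, while the rank-$1$ target is $\tilde O(\gamma)$. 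You are off by a full factor of $d$, not because the analysis is hard but because the measurement is too weak.

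The paper avoids this by using the uniform POVM $\{d\,\qproj{v}\}_{v\in\Sp^{\dims}}$ rather than Pauli observables: each outcome is a Haar-random unit vector, so a single corrupted sample cannot concentrate its damage in one Hermitian direction. The structural property that drives the Hilbert--Schmidt bound is \emph{certifiable hypercontractivity} of the outcome distribution—the $h$-th moment of $\matdotprod{v}{M}{v}$ is at most $(Ch/d)^{h}(\hsnorm{M}^2+\Tr[M]^2)^{h/2}$ for every Hermitian $M$, with a degree-$O(h)$ sum-of-squares proof—which is then fed into an SoS robust covariance estimator to obtain $\hsnorm{\hat\rho-\rho}=\tilde O(\gamma)$. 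After that, your rank-projection step applies verbatim.
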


Thus, for constant rank states, we can achieve dimension-independent error that nearly matches the $\Omega(\gamma)$ lower bound. Intuitively, this shows that prior knowledge of the structure of the state helps to achieve better performance under adversarial corruption. 

However, it leaves the question of whether it is possible to achieve dimension-independent error for general states whose rank may be large. We answer this question negatively.

\begin{theorem}
    Under $\gamma$-corruption, any quantum state testing algorithm using non-adaptive measurements must incur an error of $\eps=\Omega(\gamma\sqrt{r})$. In particular when $r=\dims$, the lower bound is $\Omega(\gamma\sqrt{\dims})$.
\end{theorem}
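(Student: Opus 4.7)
The plan is a Bayesian two-point indistinguishability argument in the spirit of Le Cam. The target is $\qkn = \Pi/r$, with $\Pi$ a fixed rank-$r$ projector onto a subspace $V \subseteq \C^d$, and the alternative is a random perturbation $\rho_\theta = \qkn + \Delta_\theta$ chosen so that an adversary can corrupt a $\gamma$-fraction of the outcomes to make $\rho_\theta$ look identical to $\qkn$, while $\tracenorm{\rho_\theta - \qkn} = \Omega(\gamma\sqrt r)$ with high probability. The reduction to a per-copy TV condition is standard: if $\sum_{i=1}^{\ns} \mathrm{TV}(\p_{\qkn, \POVM_i}, \p_{\rho_\theta, \POVM_i}) \le c\gamma\ns$ for a small absolute constant $c<1$, then by the maximal coupling of the two product distributions together with a Chernoff bound, the adversary can (with probability $1 - e^{-\Omega(\gamma\ns)}$) alter at most $\gamma\ns$ outcomes so that the corrupted samples have \emph{exactly} the $\qkn$-distribution, forcing the error probability of any tester to exceed $1/4 > 0.2$.

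For the hard family, fix an HS-orthonormal basis $\{E_j\}_{j=1}^{r^2-1}$ of the traceless Hermitian operators on $V$ and set
\[
\Delta_\theta \;=\; \frac{c'\gamma}{\sqrt{r^2-1}} \sum_{j=1}^{r^2-1} \theta_j\, E_j,
\]
with $\theta_j$ i.i.d.\ standard Gaussian and $c'$ a small absolute constant. By orthonormality $\hsnorm{\Delta_\theta}$ concentrates at $c'\gamma$; standard GUE/GOE-type random matrix theory on $V \cong \C^r$ yields $\opnorm{\Delta_\theta} = O(c'\gamma/\sqrt r)$ and hence $\tracenorm{\Delta_\theta} \ge \hsnorm{\Delta_\theta}^2 / \opnorm{\Delta_\theta} = \Omega(c'\gamma\sqrt r)$ with high probability, guaranteeing that $\rho_\theta$ is a valid rank-$r$ state (in the relevant regime $\gamma = O(1/\sqrt r)$) and has the required trace-distance separation.

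The main and hardest step is to bound $\E_\theta[\mathrm{TV}(\p_{\qkn, \POVM}, \p_{\rho_\theta, \POVM})]$ uniformly over all POVMs $\POVM = \{M_x\}$. Since both states live on $V$, only the effective POVM $\{\tilde M_x := \Pi M_x \Pi\}$ on $V \cong \C^r$ matters, with $\sum_x \tilde M_x = I_V$, hence $\sum_x \tr(\tilde M_x) = r$. A direct computation of the chi-squared divergence, using the orthonormality of $\{E_j\}$ and $\p_{\qkn,\POVM}(x) = \tr(\tilde M_x)/r$, gives
\[
\E_\theta\!\left[\chi^2(\p_{\rho_\theta, \POVM}\,\Vert\,\p_{\qkn, \POVM})\right] \;=\; \frac{c'^2 \gamma^2 \, r}{r^2-1} \sum_x \frac{\hsnorm{\tilde M_x^\perp}^2}{\tr(\tilde M_x)},
\]
where $\tilde M_x^\perp$ denotes the traceless part of $\tilde M_x$. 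The inequalities $\hsnorm{\tilde M_x^\perp}^2 \le \opnorm{\tilde M_x}\tr(\tilde M_x)$ (valid for any PSD $\tilde M_x$) and the universal bound $\sum_x \opnorm{\tilde M_x} \le \sum_x \tr(\tilde M_x) = r$ then yield $\E_\theta[\chi^2] = O(c'^2 \gamma^2)$, and Pinsker's inequality together with Jensen's inequality gives $\E_\theta[\mathrm{TV}] = O(c'\gamma)$. Summing over $i$ and applying Markov jointly over $\theta$ and the algorithm's random seed produces, with positive probability, a $\theta$ for which both $\sum_i \mathrm{TV}_i \le c\gamma\ns$ and $\tracenorm{\rho_\theta - \qkn} = \Omega(\gamma\sqrt r)$ hold, completing the $\Omega(\gamma\sqrt r)$ lower bound.
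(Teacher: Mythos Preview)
Your proposal is correct and follows essentially the same route as the paper: a Le Cam two-point argument with a random traceless perturbation of the maximally mixed state on an $r$-dimensional subspace, bounding the expected per-copy $\chi^2$ (hence TV) by $O(\gamma^2)$ via the identity $\sum_x \hsnorm{\tilde M_x}^2/\tr(\tilde M_x)\le r$, then a per-coordinate maximal-coupling argument to let the adversary erase the difference with $\le\gamma\ns$ edits. The only differences are cosmetic: the paper uses Rademacher coefficients with an explicit clipping $\barDelta_z=\Delta_z\min\{1,\tfrac{1}{2d\opnorm{\Delta_z}}\}$ to guarantee a valid state (your Gaussian construction needs the analogous truncation made explicit, since $\opnorm{\Delta_\theta}$ is unbounded), packages the $\chi^2$ bound through the measurement-information-channel abstraction $\tracenorm{\Luders_\POVM}\le r$ rather than your direct $\opnorm{\tilde M_x}\le\tr(\tilde M_x)$ step, and uses an earth-mover/Markov formulation in place of your Chernoff bound.
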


Since tomography is harder than testing (\cref{fact:tomo-testing}), the above error upper and lower bounds hold for both tomography and testing up to a constant of 2. This shows that our tomography algorithm is nearly optimal, at least when $\ns$ is sufficiently large. 

In contrast to bounded SPAM noise~\cite{brandao2020fast}, a dimension-dependent error is inevitable for adversarial corruption and can be as large as $\gamma\sqrt{\dims}$ for full-rank states. Since trace distance is at most a constant, and $\dims$ is exponential in the number of qubits in $\rho$, the adversary only needs to change \emph{exponentially small} fraction of the outcomes to entirely ruin any learning algorithm using non-adaptive measurements. 

Our lower bound also deviates from many works on robust mean estimation (with direct access to all potentially corrupted samples), where dimension-independent error~\cite{CDG17, acharya21manipulation} can be achieved even in high dimensions. Thus, non-adaptive single-copy quantum state learning is more susceptible to adversarial attacks than classical distribution estimation.

\paragraph{Why is dimension-independent error inevitable?} The main reason is that single-copy measurements are inherently information-constrained: any measurement outcome we obtain only tells very limited information about the state. Thus, to fully take advantage of the limited information, the post-processing algorithm must be sensitive to small changes. High sensitivity generally implies less robustness. This is similar to distributed estimation under local differential privacy and communication constraints~\cite{acharya21manipulation}, where a dimension-dependent error is also inevitable.

\paragraph{Quantum state testing} For the simpler problem of quantum state testing, we also propose an algorithm with provable error guarantees for all parameter regimes. See \cref{sec:testing}.

\subsection{Related work}

\paragraph{Quantum state learning}
Full-state quantum state tomography has been extensively studied in various measurement settings, including fully entangled measurement~\cite{ODonnellW17,HaahHJWY17}, single-copy measurements, both adaptive~\cite{chen2023does} and non-adaptive~\cite{HaahHJWY17}, and Pauli measurements~\cite{Yu2020Pauli,de_Gois_2024,keenan2025randommatrixtheorypauli,ADLY2025Paulinot}. Fully entangled measurements require a large and coherent quantum device and thus are less practical than single-copy measurements.

There is also extensive work on quantum state testing/certification under similar measurement settings \cite{ODonnellW15,BadescuO019,BubeckC020,Chen0HL22,liu2024role,liu2024restricted}. 
\cite{brandao2020adversarial} studies adversarial hypothesis testing. Although bearing the same word, their adversary only has the power to choose states in pre-defined hypothesis classes that the algorithm needs to distinguish. Thus, it is a different problem formulation from ours.

Another related problem is shadow tomography~\cite{Aaronson20,huang2020predicting, chen2024pauli}, which aims to learn the expectation values of a finite set of observables of interest.

\paragraph{Quantum state learning with noise}
Many works have considered learning with noise in the system, particularly SPAM noise. The noise models in \cite{yu2023robust,stilck2024efficient} require problem and device-specific assumptions.  \cite{rambach2021robust} designs and implements an algorithm robust to specific types of statistical and environmental noise. The bounded SPAM model in~\cite{brandao2020fast} is the most general formulation among these works.

Error mitigation \cite{endo2018practical, cai2023quantum} is another practical approach to reduce the effect of noise. However, such an approach usually also requires prior assumptions on the noise model.
\cite{Jayakumar2024universalframework} studies simultaneous tomography of state and noise. However, due to the inherent ill-condition of the problem, the state and noise model can only be learned up to an ambiguity factor. To pinpoint the unknown state thus requires prior knowledge of the noise model.

We point out that these works mainly focus on physical errors and do not address other unexpected and even adversarial factors.

\paragraph{Agnostic learning and tolerant testing of quantum states} In the quantum agnostic learning framework, we consider a class of quantum states, $\mathcal{C}$. Given a target state $\rho$ and an error parameter $\epsilon$, the goal is to identify a state $\sigma \in \mathcal{C}$ such that
$$
\|\rho-\sigma\|_1 \leq \alpha\cdot \min_{\sigma^*\in \mathcal{C}} \|\rho-\sigma^*\|_1 + \epsilon
$$
for some constant $\alpha$. Chung and Lin~\cite{chung2018sample} and Badescu and O’Donnell~\cite{buadescu2021improved} showed that the sample complexity for agnostic learning is logarithmic in the size of the concept class $\mathcal{C}$. Moreover, existing works~\cite{bakshi2024learning, grewal2024agnostic, chen2024stabilizer} have studied agnostic learning for specific classes—such as stabilizer states and product states—and have demonstrated the existence of polynomial-time algorithms. A related setting is tolerant testing: given a property $\mathcal{P}$ of quantum states and two error parameters $\epsilon < \epsilon'$ (e.g., for product states), the task is to decide whether a quantum state $\rho$ is $\epsilon$-close to $\mathcal{P}$ or $\epsilon'$-far from it. This setting was recently examined for stabilizer states, yielding algorithms with running time $\mathsf{poly}(1/\epsilon', n)$~\cite{arunachalam2024note,arunachalam2024polynomial,iyer2024tolerant}.

The robust learning model considered in this work is fundamentally distinct from the agnostic learning and tolerant testing frameworks. In those settings, it is typically assumed that all copies of the state $\rho$ are affected by the same noise that causes a deviation from the ideal concept class. In contrast, our robust setting allows arbitrary, adversarial noise to be introduced in a $\gamma$-fraction of the input copies. In other words, an algorithm is deemed robust if it succeeds with high probability regardless of the noise pattern or the specific $\gamma$-fraction of corrupted input copies.

\paragraph{Classical distribution estimation}
Quantum state tomography and testing share many resemblance with classical distribution estimation. Single-copy measurement is similar to the distributed setting, where samples are distributed across multiple devices and limited information is sent to the central server~\cite{duchi2013local,barnes2019lower,ACLST22iiuic}. The works of \cite{liu2024role,liu2024restricted,ADLY2025Paulinot,ADLY2025PauliOpt} generalize the information-theoretic lower bound framework for distributed inference of discrete distributions \cite{AcharyaCT19,ACLST22iiuic} to single-copy quantum state tomography and testing, formally showing the deep connection between the two problems.

Robust statistics~\cite{huber1964robust,lugosi2021robust} is a long-standing  field of active research. There has been a large body of recent work that aims to design time-efficient algorithms for various statistical tasks such as mean estimation~\cite{CDG17,DiakonikolasKP20,7782980,Hopkins2020robustheavy,hopkins2025subGmean}, learning mixtures of Gaussians~\cite{BakshiDHKKK20robustSoS,Liu2023robustmixture}, and covariance estimation~\cite{kothari2018robust,BakshiDHKKK20robustSoS,ilias2024SoSsubG}, among others. As will be discussed in~\cref{sec:tomography-overview}, our tomography algorithm relies on robust covariance estimation.

\section{Technical overview}
\subsection{Lower bound }
Under our strong corruption model, the adversary has the power to arbitrarily change a small fraction of the outcomes to make them appear as if they come from some other states that are far away. The key to proving a lower bound is then to quantify how much damage could be caused given a measurement scheme. To this end, we need (1) an information-theoretical framework that takes adversarial corruption into account, and (2) a mathematical tool that quantifies the weakness of any measurement under adversarial attack.

\paragraph{Information-theoretic framework under adversarial corruption}\label{sec:lower-framework-overview}
The general recipe to prove lower bounds is to find a reference state $\rho$ and a distribution $\cD$ over states sufficiently far from $\rho$, say with trace distance at least $\eps$. Then we argue that for all measurements, the distribution of measurement outcomes would be close. By a standard hypothesis testing argument, e.g., Le Cam's method~\cite{LeCam73,yu1997assouad}, we can conclude that it is hard for any measurement to distinguish $\rho$ and states that are $\eps$-far. This establishes a \emph{testing} lower bound with an error of $\eps$, which also holds for tomography. 

Under our corruption model, since the adversary can arbitrarily change a fraction of the outcomes, it gains the power to make the outcome distributions even closer, and thus cause more difficulty for the algorithm to distinguish between different states. 
Then, we need to characterize how much closer the outcome distributions can be under adversarial corruption. This is achieved through the earth-mover distance, or a coupling argument.

To illustrate our argument, let us fix a set of measurements $\POVM_1, \ldots \POVM_{\ns}$\footnote{We assume that the adversary knows the shared random seed, so all measurements are fixed to the adversary}. We obtain outcomes $x^\ns=(x_1, \ldots, x_\ns)$ when measuring the quantum states. When the state is $\rho$, we denote the outcome distribution as $\p_{\rho}^{x^\ns}$. To successfully test between quantum states that are $\eps$ far, we need to solve the following hypothesis testing problem,
\[
H_0: \p_{\rho}^{x^\ns},\quad H_1 :\expectDistrOf{\sigma\sim \cD}{\p_{\sigma}^{x^\ns}}.
\]
How much fraction of the outcomes $x^\ns$ one needs to change to make one distribution $H_0$ look exactly like the other $H_1$? This can be characterized using coupling. A coupling $\Pi$ between $H_0$ and $H_1$ is a joint distribution $(X^\ns, Y^\ns)$ such that each marginal is distributed according to $H_0$ and $H_1$, respectively. For any coupling $\Pi$, we can define
\begin{equation}\label{equ:coupling-distance}
        \expectDistrOf{(X^\ns, Y^{\ns})\sim \Pi}{\ham{X^{\ns}}{Y^{\ns}}},
\end{equation}
which is the expected number of coordinates that $X^\ns, Y^\ns$ differ. We can view $\Pi$ as a randomized mapping. Then an adversary can use $\Pi$ as a strategy to change the outcomes.  If the above distance is at most $\gamma$,  then using $\Pi$, the adversary only changes $\gamma$-fraction of the outcomes \emph{in expectation} to make outcomes from $H_0$ look exactly as if they come from $H_1$.

One caveat is that we enforce a hard constraint on the number of outcomes that an adversary can change, so $\Pi$ cannot be directly used. However, we can make small changes to the adversarial strategy to meet the hard constraint while still ensure the corrupted outcomes are close in total variation distance.

Usually, to prove the tighter lower bound for tomography, we need to design a more difficult decision problem via more advanced arguments such as Fano's, Holevo~\cite{holevo1973statistical}, or Assouad~\cite{Assouad83,ADLY2025Paulinot}. However, since the power of the adversary is so strong, a testing lower bound suffices even for tomography. This can also be seen from bounded SPAM noise
~\cite{rambach2021robust} and robust statistical estimation literature \cite{diakonikolas2023algorithmic,acharya21manipulation}.

\paragraph{Measurement information channel quantifies the weakness under adversarial attack}
Having established a general framework, we just need to upper bound the minimum achievable expected coupling distance \eqref{equ:coupling-distance} for any measurement. The \emph{measurement information channel}~\cite{liu2024restricted} is helpful for our purpose.

\begin{definition}
    For a POVM $\POVM=\{M_x\}_{x\in \outset}$, the measurement information channel $\Luders_{\POVM}:\C^{\dims\times\dims}\mapsto\C^{\dims\times\dims}$ is defined as 
    \[
    \Luders_{\POVM}(\rho)=\sum_{x\in \outset}\frac{M_x}{\Tr[M_x]}\Tr[M_x\rho].
    \]
    It has a matrix form of 
    \[
    \Choi_{\POVM}=\sum_{x\in\outset}\frac{\vvec{M_x}\vadj{M_x}}{\Tr[M_x]},
    \]
    which satisfies $\Choi_{\POVM}\vvec{\rho}=\vvec{\Luders_{\POVM}(\rho)}$. We sometimes drop the subscript when the measurement is clear from context.
\end{definition}

We instantiate our lower bound framework with $\qmm=\eye_\dims/\dims$ as the reference state and a distribution $\cD$ parameterized by binary vectors in $z=(z_1, \ldots, z_\ell)\{-1, 1\}^\ell$,
\[
\sigma_z\eqdef\qmm+ \frac{\eps}{\sqrt{\dims\ell}}\sum_{i=1}^\ell z_iV_i, \quad \ell\sim \{-1,1\}^\ell.
\]
Here $V_i$'s are trace-0 orthonomal Hermitian matrices, $\Tr[V_i]=0$ and $\Tr[V_iV_j]=\indic{i=j}$. There are many ways we can choose the matrices: one option is the (normalized) Pauli observables. We choose $\ell=\dims^2/2$, which roughly matches the dimension of full-rank states. Using random matrix theory, we can argue that the above state is a valid quantum state with overwhelming probability (at least $1-\exp(-\dims)$).

For our choice of $\cD$, we can upper bound the minimum coupling distance in terms of the measurement information channel,
\[
\min_{\Pi}\expectDistrOf{(X^\ns, Y^{\ns})\sim \Pi}{\ham{X^{\ns}}{Y^{\ns}}}=\bigO{\frac{\ns\eps}{\dims}\sqrt{\sup_{\POVM}\tracenorm{\Luders_{\POVM}}}}.
\]
For all measurements, $\tracenorm{\Luders_{\POVM}}\le \dims$~\cite{liu2024restricted}. By our argument in \cref{sec:lower-framework-overview}, $\ns\gamma$ cannot be larger than the right-hand-side, else learning would not be possible. This proves that $\eps=\Omega(\gamma\sqrt{\dims})$. 

To obtain a lower bound that depends on the rank $r$, we simply restrict the construction to an $r$-dimensional subspace. Our lower bound techniques have regularity conditions that only apply to $r\ge 200$. However, since a lower bound of $\gamma$ hold for all states, we conclude that the error lower bound is $\Omega(\gamma\sqrt{r})$ for all rank-$r$ states.

\subsection{Tomography algorithm}\label{sec:tomography-overview}
\paragraph{Reduction to covariance estimation}
We first review the tomography algorithm in~\cite{guctua2020fast}, which reduces to estimating the (complex) covariance matrix of a vector-valued random variable. Thus to make it adversarially robust, it suffices to apply techniques in robust covariance estimation~\cite{kothari2018robust,diakonikolas2023algorithmic,BakshiDHKKK20robustSoS}.

We apply the uniform POVM to all copies, which consist of all suitably normalized rank-1 projectors, 
\[
\POVM_{unif}=\{\dims\qproj{v}\}_{v\in \Sp^{\dims} }.
\]
Each $\qbit{v}$ is drawn uniformly from the Haar measure, i.e., a uniform distribution over the complex unit sphere.
We can confirm using \eqref{equ:haar-k-moment} that $\int\dims\qproj{v}\dm v=\eye_\dims$ and thus it is a valid POVM. When applying the measurement to $\rho$, by Born's rule, we obtain a vector $\qbit{v}$ as an outcome which occurs with a probability density of 
\begin{equation}
\label{equ:uniform-povm-density}
    \dims\matdotprod{v}{\rho}{v}\dm v.
\end{equation}
We denote this distribution as $\cD(\rho)$. Again using \eqref{equ:haar-k-moment}, the expectation of $\qproj{v}$ is directly related to the unknown state $\rho$,
\begin{equation}
    \Sigma_\rho \eqdef \expectDistrOf{v\sim\cD(\rho)}{\qproj{v}} =\int\qproj{v}\matdotprod{v}{\rho}{v}\dm v= \frac{\eye_\dims+\rho}{\dims+1}.
    \label{equ:sigma-rho}
\end{equation}

Note that since $\expectDistrOf{v\sim \cD(\rho)}{\qbit{v}}=0$ by symmetry, $\expectDistrOf{v\sim\cD(\rho)}{\qproj{v}}$ is exactly the (complex) covariance matrix of the random variable $\qbit{v}$. The algorithm then estimates $\rho$ using the empirical covariance matrix,
\[
\hat{\rho}=(\dims+1)\frac{1}{\ns}\sum_{i=1}^\ns \qproj{v_i}-\eye_\dims.
\]

\paragraph{Robust covariance estimation in Hilbert-Schmidt/Frobenius distance}
To achieve the desired $\gamma\sqrt{r}$ error, we note that it suffices to obtain $\gamma$ error in terms of Hilbert-Schmidt distance. To see this, let $\rho$ be the unknown state and $\hat{\rho}$ be our estimate, and assume $\hsnorm{\rho-\hat{\rho}}\le \gamma$. By Cauchy-Schwarz,
\[
\tracenorm{\rho-\hat{\rho}}\le \sqrt{\dims}\hsnorm{\rho-\hat{\rho}}\le \gamma\sqrt{\dims}.
\]
Thus, we recover the worst-case error bound. To show a better bound when $\rho$ is low rank, it suffices to find the best rank-$r$ approximation of $\hat{\rho}$, call it $\tilde{\rho}$. Then $\tilde{\rho}-\rho$ has rank at most $2r$, and $\hsnorm{\rho-\tilde{\rho}}\le 2\gamma$ by triangle inequality. Rank-$2r$ Hermitian matrices only have at most $2r$ non-zero eigenvalues. Thus, applying Cauchy Schwarz, we obtain the $O(\gamma\sqrt{r})$ error as desired.

However, even though $\cD(\rho)$ is a sub-gaussian distribution, estimating the covariance with (relative) Frobenius distance is a challenging task. For general sub-gaussian distributions, dimension-dependent Frobenius error is unavoidable. Only a (relative) spectral error, i.e. operator norm distance, can be guaranteed ~\cite{kothari2018robust,ilias2024SoSsubG}, which is not sufficient for our purpose.

To our knowledge, only highly structured distributions can achieve dimension-independent Frobenius error. These include Gaussian \cite{diakonikolas2023algorithmic}, affine transform of uniform spherical distribution, distributions with sub-gaussian marginals \cite{BakshiDHKKK20robustSoS}, and Poincar\'e distributions\footnote{A distribution $\mathcal{D}$ over $\R^\dims$ is Poincar\'e with parameter $\sigma$ if for all differentiable functions $f:\R^\dims\mapsto \R$, $\variance{x\sim\mathcal{D}}{f(x)}\le \sigma^2\norm{\nabla f(x)}_2^2$.} \cite[Lemma 9.7]{BakshiKRTV24anti}.

Although the outcomes $\qbit{v}$ solely lies on the unit sphere, the distribution has a complicated structure. We start with the simple case where $\rho=\qproj{\psi}$ is a pure state. Then from \eqref{equ:uniform-povm-density}, the distribution has density of $0$ for vectors $\qbit{v}$ orthogonal to $\psi$, and highest density when $|\qdotprod{\psi}{v}|=1$. 
Moreover, we can show that $\expectDistrOf{\qproj{\psi}}{|\qdotprod{\psi}{v}|^2}\simeq 1/\dims^2$, so most of the probability mass should lie in some rings such that $|\qdotprod{\psi}{v}|\simeq 1/\dims$. When $\rho$ is a mixed state, then the distribution would be a mixture of these rings. 
It is unclear how to characterize this complicated distribution via affine transforms and/or projections of Gaussian or uniform spherical distributions. 

\paragraph{Hypercontractivity of outcomes from the uniform POVM}
The key to designing a robust tomography algorithm is to upper bound the high-order moments of $\qproj{v}$. We prove that the outcome distributions from the uniform POVM satisfies \emph{hypercontractivity},
\begin{definition}
     For all even integers $h\ge 2$, $\cD(\rho)$ satisfies $C$-hypercontractivity, i.e., there exists constant $C>0$ for all Hermitian matrix $M\in\C^\dims$, 
\[
\left(\expectDistrOf{v\sim\cD(\rho)}{\matdotprod{v}{M}{v}^{h}}\right)^2\le \frac{(Ch)^{2h}}{\dims^{2h}}(\Tr[M^2]+\Tr[M]^2)^h.
\]
Mathematically, one can view $\matdotprod{v}{M}{v}$ as a degree-2 polynomial in the real and imaginary parts of $\qbit{v}$ (which satisfies some algebraic structure defined by complex numbers). Physically, we can view the measurement outcome as a pure state $\qproj{v}$, and $\matdotprod{v}{M}{v}$ is the expectation value of an \emph{observable} $M$ (in quantum mechanics, anything that can be observed about the state must be represented as a Hermitian matrix). Regardless of how one chooses to interpret $\matdotprod{v}{M}{v}$, it should have well-bounded high-order moments. This is similar to hyper-contractivity proposed in~\cite{BakshiDHKKK20robustSoS}. We extend the notion to complex random vectors for complex covariance estimation.

To see how this property comes into play, we describe a highly inefficient algorithm and ``prove'' its guarantees using a slightly hand-wavy argument. 

We fix some $h=2t$ to be chosen later. Let $\qbit{u_1},\ldots,\qbit{u_\ns}$ be a set of clean samples. With a sufficiently large number of samples $\ns$, we can guarantee that with high probability, the empirical average is close to its true mean, so the finite samples also satisfy hyper-contractivity.
\[
\frac{1}{\ns}\sum_{i=1}^\ns\matdotprod{v_i}{M}{v_i}^h\le \frac{(Ch)^{h}}{\dims^{h}}(\Tr[M^2]+\Tr[M]^2)^{h/2}.
\]

With an even larger $\ns$ (perhaps exponentially large), we can further guarantee that hypercontractivity holds for any subset of the samples $S\subseteq[\ns]$ with $|S|=(1-\gamma)\ns$. 

Let $\qbit{v_1},\ldots,\qbit{v_\ns}$ be the $\gamma$-corrupted, so $\qbit{v_i}=\qbit{u_i}$ for at least $(1-\gamma)\ns$ of the samples. 
We can brute-force search for a subset $S$ that satisfies hyper-contractivity. This may not be feasible since we need to verify for all Hermitian $M$, but suppose we are able to find such a subset. Then we output our estimate $\hat{\Sigma}$ as the empirical covariance over the set $S$, $\hat{\Sigma}=\frac{1}{|S|}\sum_{i\in S}\qproj{v_i}$. 

Note that $\ns$ is very large, so we can assume that the true covariance is roughly the empirical average of clean samples, i.e. $\Sigma_\rho\simeq \frac{1}{|S|}\sum_{i\in S}\qproj{u_i}$ . Then for all Hermitian $M$,
\begin{align*}
    \Tr[M(\hat{\Sigma}-\Sigma_\rho)]&\simeq  \frac{1}{|S|}\sum_{i\in S}^\ns(\matdotprod{v_i}{M}{v_i}-\matdotprod{u_i}{M}{u_i})\\
    &=\frac{1}{|S|}\sum_{i\in S} \indic{\qbit{u_i}\ne\qbit{v_i}}(\matdotprod{v_i}{M}{v_i}-\matdotprod{u_i}{M}{u_i}).
\end{align*} 
Raising to the power of $h=2t$ and using Jensen's inequality,
\begin{align*}
    \Tr[M(\hat{\Sigma}-\Sigma_\rho)]^{2t}&\le 2^{2t-1}\Paren{\frac{1}{|S|}\sum_{i\in S} \indic{\qbit{u_i}\ne\qbit{v_i}}\matdotprod{v_i}{M}{v_i}}^{2t}+ 2^{2t-1}\Paren{\frac{1}{|S|}\sum_{i\in S} \indic{\qbit{u_i}\ne\qbit{v_i}}\matdotprod{u_i}{M}{u_i}}^{2t}
\end{align*}
Using H\"older's inequality that for any measure $\dm x$, $\int |fg| \dm x \le (\int |f|^p \dm x)^{1/p}(\int |g|^{q}\dm x)^{1/q}$ for $1/p+1/q=1$, and that $\indic{\qbit{u_i}\ne\qbit{v_i}}^2=\indic{\qbit{u_i}\ne\qbit{v_i}}$, we have
\begin{align*}
    \Paren{\frac{1}{|S|}\sum_{i\in S} \indic{\qbit{u_i}\ne\qbit{v_i}}\matdotprod{v_i}{M}{v_i}}^{2t}&\le  \Paren{\frac{1}{|S|}\sum_{i\in S}\indic{\qbit{u_i}\ne\qbit{v_i}}^{2t}}^{2t-1}\Paren{\frac{1}{|S|}\sum_{i\in S}\matdotprod{v_i}{M}{v_i}^{2t}}\\
    &\le \frac{(C'\gamma)^{2t-1}t^{2t}}{\dims^{2t}}(\hsnorm{M}^2+\Tr[M]^2)^t.
\end{align*}
Here we used that the potentially corrupted samples in $|S|$ satisfies hypercontractivity, and that  $\sum_{i\in S}\indic{\qbit{u_i}\ne v_i}\le \gamma\ns$ since the adversary can only change $\gamma$-fraction. We can bound the second term similarly. Therefore combining the two parts, we have for some constant $c$,
\[
\Tr[M(\hat{\Sigma}-\Sigma_\rho)]^{2t}\le\frac{(c\gamma t)^{2t}}{\gamma\dims ^{2t}} (\hsnorm{M}^2+\Tr[M]^2)^t.
\]
Since it holds \emph{for all} $M$, we can set $M=\hat{\Sigma}-\Sigma_\rho$. Note that $\Tr[M]=0$. Thus,
\[
\hsnorm{\hat{\Sigma}-\Sigma_\rho}^{4t}\le \frac{(c\gamma t)^{2t}}{\gamma\dims ^{2t}}\hsnorm{\hat{\Sigma}-\Sigma_\rho}^{2t}\implies \hsnorm{\hat{\Sigma}-\Sigma_\rho}\le \frac{c\gamma t}{\gamma^{1/(2t)}\dims}.
\]
Therefore, setting $t=O(\log(1/\gamma))$ yields an error of $\tildeO{\gamma /\dims}$. Setting $\hat{\rho}=(\dims +1)\hat{\Sigma}-\eye_\dims$, we conclude that the Frobenius error for estimating $\rho$ is $\tildeO{\gamma}$ as desired.

The major issue in our analysis is how to verify that hypercontractivity holds for all Hermitian matrices. To this end, we use the sum-of-squares technique~\cite{hopkins2017SoS,bakshi2020outlierrobust, BakshiDHKKK20robustSoS}. As long as we can show that hypercontractivity can be written as a degree $O(h)$ sum-of-squares polynomial in the matrix indeterminant $M$, then we can design a sum-of-squares algorithm to verify and ``find'' a subset $S$ that satisfies hypercontractivity. The full algorithm description and proof are given in \cref{sec:tomography-algorithm}.

\end{definition}

\section{Preliminaries}
\subsection{Quantum state and POVM}
We use the Dirac notation $\qbit{\psi}$ to denote a vector in $\C^{\dims}$. We use $\qbit{j}$ to denote the vector with 1 at the $j$th coordinate and 0 everywhere else. 

$\qadjoint{\psi}\eqdef(\qbit{\psi})^\dagger$ is a row vector. $\qdotprod{\psi}{\phi}$ is the inner product of $\qbit{\psi}$ and $\qbit{\phi}$. We denote the set of all $\dims\times\dims$ Hermitian matrices by $\Herm{\dims}$. A $\dims$-dimensional quantum system is described by a positive-semidefinite Hermitian matrix $\rho\in\Herm{\dims}$ with $\Tr[\rho]=1$. We assume $\dims=2^{\nqubits}$ where $\nqubits$ is the number of qubits in the system. 

A special case is the maximally mixed state $\qmm\eqdef \eye_\dims /\dims$. 

Measurements are formulated as \emph{positive operator-valued measure} (POVM). Let $\outset$ be an outcome set. Then a POVM $\POVM=\{M_x\}_{x\in \outset}$, where $M_x$ is p.s.d. and $\sum_{x\in \outset}M_x=\eye_\dims$. Let $X$ be the outcome of measuring $\rho$ with $\POVM$, then the probability observing $x\in\outset$ is given by the \emph{Born's rule},
\[
\probaOf{X=x}=\Tr[\rho M_x].
\]
This definition can be extended to an infinite outcome set.

\subsection{Complex matrices and operators}

\paragraph{Complex matrices}
Let $A,B\in\C^{\dims\times\dims}$. Inner product between $A$ and $B$ is defined as $\hdotprod{A}{B}\eqdef\Tr[A^\dagger B]$. For Hermitian matrices $A,B$, $\hdotprod{A}{B}=\hdotprod{B}{A}\in \R$. Thus the subspace of Hermitian matrices $\Herm{\dims}$ is a \textit{real} Hilbert space (i.e. the associated field is $\R$) with the same matrix inner product. 

Let $\{\qbit{j}\}_{j=0}^{\dims-1}$ be the natural basis for $\C^\dims$, we define vectorization as $\VecOp(\qoutprod{i}{j})\eqdef \qbit{j}\otimes \qbit{i}$. For convenience we denote $\vvec{A}\eqdef\VecOp(A)$. Matrix inner product satisfies $\hdotprod{A}{B}=\vvdotprod{A}{B}$. 

\paragraph{Linear superoperators} Let $\mathcal{N}:\C^{\dims\times \dims}\mapsto \C^{\dims\times \dims}$ be a linear operator over $\C^{\dims\times \dims}$. Every superoperator $\mathcal{N}$ has a matrix representation $\Choi(\mathcal{N})\in \C^{\dims^2\times\dims^2}$ that satisfies $\vvec{\mathcal{N}(X)}=\Choi(\mathcal{N})\vvec{X}$ for all matrices $X\in\C^{\dims\times\dims}$.

\paragraph{Schatten norms} Let $\Lambda=(\lambda_1, \ldots, \lambda_\dims)\ge 0$ be the \emph{singular values} of a linear operator $A$, which can be a matrix or a superoperator. {For Hermitian matrices, the singular values are the absolute values of the eigenvalues.} Then for $p\ge 1$, the \emph{Schatten $p$-norm} is defined as 
$
\|A\|_{S_p}\eqdef \|\Lambda\|_p
$. The Schatten norms of a superoperator $\mathcal{N}$ and its matrix representation $\Choi(\mathcal{N})$ match exactly, $\|\mathcal{N}\|_{S_p}=\|\Choi(\mathcal{N})\|_{S_p}$. Some important examples are trace norm $\tracenorm{A}\eqdef\|A\|_{S_1}$, Hilbert-Schmidt norm $\hsnorm{A}\eqdef\|A\|_{S_2}=\sqrt{\hdotprod{A}{A}}$, and operator norm $\opnorm{A}\eqdef\|A\|_{S_\infty}=\max_{i=1}^\dims\lambda_i$.

\subsection{Probability distances}
Let $\p$ and $\q$ be distributions over a finite domain $\mathcal{X}$. The \emph{total variation distance} is defined as 
\[
\totalvardist{\p}{\q}\eqdef\sup_{S\subseteq\mathcal{X}}(\p(S)-\q(S))=\frac{1}{2}\sum_{x\in\mathcal{X}}|\p(x)-\q(x)|.
\]
The KL-divergence  is
\[
\kldiv{\p}{\q}\eqdef\sum_{x\in\mathcal{X}}\p(x)\log\frac{\p(x)}{\q(x)}.
\]
The chi-square divergence
\[
\chisquare{\p}{\q}\eqdef \sum_{x\in\mathcal{X}}\frac{(\p(x)-\q(x))^2}{\q(x)}.
\]
By Pinsker's inequality and concavity of logarithm,
\[
2\totalvardist{\p}{\q}^2\le \kldiv{\p}{\q}\le \chisquare{\p}{\q}.
\]
We define $\ell_p$ distance as $
\norm{\p-\q}_p\eqdef\Paren{\sum_{x\in\mathcal{X}}{|\p(x)-\q(x)|^p}}^{1/p}.
$
All the definitions above can be extended to general probability measures.

\subsection{Haar measure}
The analysis of testing and tomography algorithms requires computing Haar integrals. Let $\Sp^{\dims}$ denote the complex unit sphere in $\C^{\dims}$. We use $\Haar{d}$ to denote the Haar measure over $\dims\times\dims$ unitary matrices. The Haar measure induces a unique unitarily invariant measure on  $\Sp^{\dims}$, which we denote as $\dm v$.

Detailed exposition of the representation theoretic techniques is beyond the scope of this paper. We only present the necessary definitions and results.

\paragraph{Permutation and cycles}
A permutation $\pi:[n]\mapsto[n]$ is a bijection over $[n]$. Let $\Sim_n$ be the set of all permutations over $[n]$. Every permutation can be decomposed into cycles. For example, the permutation $\pi=(1, 4, 2, 3)$ has cycles
\[
(1), (2,3,4).
\]
We use $\cycle(\pi)$ to denote the set of cycles in $\pi$. For each cycle $c\in \cycle(\pi)$, we use $|c|$ to denote the length of the cycle. In the example above $|(2,3,4)|=3$ and $|(1)|$=1.

\paragraph{Haar measure moments}
Let $\qbit{u}\in \C^{\dims}$ be a unit vector drawn uniformly from the unit sphere. The $\ab$-th order moment $\qproj{u}^{\otimes\ab}$ can be computed using Schur-Weyl duality, \cite[Eq.(14)]{guctua2020fast}.
\begin{equation}
    \expectDistrOf{\qbit{u}\sim\Haar{\dims}}{\qproj{u}^{\otimes\ab}}=\binom{\dims+k-1}{k}^{-1}P_{\text{Sym}^{(k)}}.
    \label{equ:haar-k-moment}
\end{equation}
Here $P_{\text{Sym}^{(k)}}=\frac{1}{k!}\sum_{\pi\in \Sim_{k}}P_{\pi}$ is the projection matrix onto the symmetric subspace of $(\C^{\dims})^{\otimes k}$, where $P_\pi$ is a permutation operator defined as
\[
P_\pi\qbit{\psi_1}\otimes\qbit{\psi_{\ab}}=\qbit{\psi_{\pi^{-1}(1)}}\otimes\cdots\otimes\qbit{\psi_{\pi^{-1}(\ab)}}.
\]
Permutation operators are unitary with $P_{\pi}^\dagger = P_{\pi}^{-1}=P_{\pi^{-1}}$.

Let $M_1, \ldots, M_k\in \C^{\dims\times\dims}$. For $\pi=(2,3,4,\ldots,k,1)$ (which circular shifts all numbers by one place), we have
\begin{equation}
    \Tr[P_\pi M_1\otimes\cdots\otimes M_k]=\Tr[M_1\cdots M_k].
    \label{equ:perm-trace}
\end{equation}

Using \eqref{equ:haar-k-moment}, we can compute the $k$-th moment of $\matdotprod{u}{M}{u}$ for any Hermitian matrix $M$,
\begin{align*}
    \expectDistrOf{\qbit{u}\sim\Haar{\dims}}{\matdotprod{u}{M}{u}^k}&=\expectDistrOf{\qbit{u}\sim\Haar{\dims}}{\Tr[\qproj{u}^{\otimes k}M^{\otimes k}]}\\
    &=\Tr[M^{\otimes k}\expectDistrOf{\qbit{u}\sim\Haar{\dims}}{\qproj{u}^{\otimes k}}]\\
    &=\binom{\dims+k-1}{k}^{-1}\Tr\left[P_{\text{Sym}^{(k)}}M^{\otimes k}\right].
\end{align*}
From \eqref{equ:perm-trace}, we have for each permutation $\pi\in \Sim_k$, 
\[
\Tr[P_\pi M^{\otimes k}]=\prod_{c\in \cycle(\pi)}\Tr[M^{|c|}].
\]
See also \cite[Equation (4)]{ChenLO22instance} for reference. Therefore,
\begin{equation}
    \expectDistrOf{\qbit{u}\sim\Haar{\dims}}{\matdotprod{u}{M}{u}^k}=\binom{\dims+k-1}{k}^{-1}\frac{1}{k!}\sum_{\pi\in\Sim_k}\prod_{c\in \cycle(\pi)}\Tr[M^{|c|}].
    \label{equ:haar-trace-moment}
\end{equation}

\subsection{Sum-of-squares and pseudo-distributions}
The sum-of-squares method is a powerful tool for robust statistical estimation tasks. We only describe the necessary preliminaries needed in our tomography algorithm and refer the readers to \cite{hopkins2017SoS, kothari2018robust, BakshiDHKKK20robustSoS} for more details.

\subsubsection{Pseudo-distributions}
Pseudo-distributions are a generalization of probability distributions. 
\begin{definition}
Let $\cX\subseteq \R^\dims$ be a finite set of vectors. A level-$\ell$ (or degree-$\ell$) pseudo-distribution over $\cX$ has a ``mass function'' $D(x)$ such that,
\[
\sum_{x\in \cX}D(x)=1,\quad \sum_{x\in \cX}D(x)f(x)^2\ge 0,
\]
for all polynomial $f$ such that $\deg(f)\le \ell/2$.
\end{definition}

For all function $f$ over $\R^\dims$, we define the pseudo-expectation as 
\[
\expectDistrOf{D}{f(x)}\eqdef\sum_{x}D(x)f(x).
\]

Given a set of polynomial constraints $\cA=\{f_1\ge0, \ldots f_m\ge 0\}$, we say that $D$ satisfies $\cA$ at degree $r$ if for every $S\subseteq[m]$ and every sum-of-squares polynomial $h$ with $\deg(h)=\sum_{i\in S}\max\{\deg(f_i),r\}$, 
\[
\expectDistrOf{D}{h(x)\prod_{i\in S}f_i(x)}\ge 0.
\]
We say that $D$ satisfies $\cA$ approximately at degree $r$ if the above inequalities are satisfied up to an error of $2^{-\dims \ell}\norm{h}_2\prod_{i\in S}\norm{f_i}_2$, where $\norm{\cdot}_2$ is the 2-norm of the coefficients of each monomial.

Due to the existence of an efficient separation oracle for moment tensors of pseudo-distributions \cite{nesterov2000squared,lasserre2001new}, we can efficiently find a level-$\ell$ pseudo-distribution that approximately satisfies a given set of constraints $\cA$ in time $(d+m)^{O(\ell)}$. See \cite[Fact 3.9]{BakshiDHKKK20robustSoS}.

Pseudo-expectations satisfy the Cauchy-Schwarz inequality,
\begin{fact}
    Let $f,g$ be polynomials of degree at most $d$ in $x\in \R^d$. Then for any degree-$d$ pseudo-distribution $D$, $\expectDistrOf{D}{fg}^2\le \expectDistrOf{D}{f^2}\expectDistrOf{D}{g^2}$.
\end{fact}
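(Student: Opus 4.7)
The plan is to mimic the classical proof of Cauchy--Schwarz, which works for pseudo-distributions because the only property used is that expectations of squares are nonnegative --- and this is precisely what a pseudo-distribution guarantees by definition (for squares of polynomials of degree at most half the level).

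First, I would introduce the auxiliary polynomial $h_t(x) \eqdef (t f(x) - g(x))^2$ for a real parameter $t \in \R$. By construction, $h_t$ is a square of a polynomial of degree at most $d$, hence is itself a polynomial of degree at most $2d$ that is a sum of squares. Applying the defining property of a level-$2d$ pseudo-distribution $D$ to the polynomial $t f - g$ yields
\[
\expectDistrOf{D}{(t f - g)^2} \;\ge\; 0 \qquad \text{for every } t \in \R.
\]
(If one adopts the convention in which ``degree-$d$ pseudo-distribution'' means the pseudo-expectation is defined on polynomials up to degree $2d$, as is standard in the SoS literature, this is immediate; otherwise one promotes the level accordingly.)

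Next, I would expand the square and use linearity of pseudo-expectation --- which holds because $\expectDistrOf{D}{\cdot}$ is the linear functional $\sum_x D(x)\,\cdot\,$ --- to obtain the quadratic inequality in $t$:
\[
t^2\, \expectDistrOf{D}{f^2} \;-\; 2t\, \expectDistrOf{D}{fg} \;+\; \expectDistrOf{D}{g^2} \;\ge\; 0, \qquad \forall\, t \in \R.
\]
In the nondegenerate case $\expectDistrOf{D}{f^2} > 0$, the left side is a nonnegative quadratic in $t$ with positive leading coefficient, so its discriminant must be nonpositive:
\[
4 \expectDistrOf{D}{fg}^2 \;-\; 4\, \expectDistrOf{D}{f^2}\, \expectDistrOf{D}{g^2} \;\le\; 0,
\]
which rearranges to the desired inequality.

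The only subtlety is the degenerate case $\expectDistrOf{D}{f^2} = 0$, where the argument via the discriminant does not directly apply. Here I would handle it by noting that the displayed inequality reduces to $-2 t\, \expectDistrOf{D}{fg} + \expectDistrOf{D}{g^2} \ge 0$ for all $t \in \R$; since this linear-in-$t$ function is bounded below, its coefficient must vanish, forcing $\expectDistrOf{D}{fg} = 0$, whereupon the claimed inequality $0 \le 0$ holds trivially. Likewise, if $\expectDistrOf{D}{g^2} = 0$ we repeat the argument with the roles of $f$ and $g$ swapped, by considering $(f - t g)^2$. This edge case is really the only delicate point in the argument; the rest is a direct translation of the classical proof, and no special structure of $D$ beyond the squares-nonnegativity axiom is invoked.
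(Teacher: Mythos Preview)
Your proof is correct and is precisely the standard argument for Cauchy--Schwarz in the pseudo-distribution setting. The paper itself does not supply a proof: this statement appears in the preliminaries as a cited \emph{Fact} from the sum-of-squares literature, so there is no proof in the paper to compare against. Your remark about the degree convention is apt --- as stated, a ``degree-$d$'' pseudo-distribution would need to assign nonnegative expectation to squares of degree-$d$ polynomials (hence degree-$2d$ objects), which matches the standard SoS convention you invoke rather than the paper's literal definition; this is a minor sloppiness in the paper's statement, not a flaw in your argument.
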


\subsubsection{Sum-of-squares proofs}
In this section, we define sum-of-squares proofs and introduce the necessary theoretical tools.
\begin{definition}[Sum-of-squares (SoS) proof]
    For polynomials $p,q$ in $x\in \R^\dims$, if there exists polynomials $s_1, \ldots, s_t$ of degree at most $k/2$ such that
    \[
    p-q=\sum_{i=1}^ts_i^2,
    \]
    we say that $p\ge q$ has a degree-$k$ sum-of-squares proof, denoted as $\sststile{k}{x}\{p\ge q\}$.
    
    Given axioms $\cA=\{f_i=0\}_{i\in [m]}\cup \{g_j\ge0\}_{j\in[n]}$ where $f_i,g_i$ are polynomials, we say that there exists a degree-$k$ proof of $p\ge q$ modulo $\cA$  if there exists polynomials $a_i,b_j,s_t$ such that $\deg(a_if_i)\le k$, $\deg(b_j^2g_j)\le k$, and $\deg(s_t^2)\le k$ for all $i,j,t$, and
    \[
    p-q=\sum_{t}s_t^2+\sum_{i=1}^m a_if_i+\sum_{j=1}^{n}b_j^2g_j,
    \]
    which we denote as $\cA\sststile{k}{x}\{p\ge q\}$.
\end{definition}

We now introduce some inference rules for SoS proofs.

\begin{fact}[SoS inference rules] SoS proofs satisfy the following rules.

\begin{description}
     \item[Addition] If $\cA\sststile{\ell}{x}\{f\ge 0, g\ge0\}$, then $\cA\sststile{\ell}{x}\{f+g\ge0\}$.

\item[Multiplication]  If $\cA\sststile{\ell}{x}\{f\ge 0\}$ and $\cA\sststile{\ell'}{x}\{g\ge 0\}$, then $\cA\sststile{\ell+\ell'}{x}\{fg\ge0\}$.

\item[Transitivity] If $\cA\sststile{\ell}{x}\cB$ and $\cB\sststile{\ell'}{x}\cC$, then $\cA\sststile{\ell\ell'}{x}\cC$.

\item[Substitution] Let $F:\R^{n}\mapsto \R^{m}, G:\R^{n}\mapsto \R^{k}, H:\R^p\mapsto\R^n$ be vector-valued polynomials. If $\{F\ge 0\}\sststile{\ell}{x}\{G\ge 0\}$, then $\{F(H)\ge 0\}\sststile{\ell\deg(H)}{x}\{G(H)\ge 0\}$.
\end{description}
   
\end{fact}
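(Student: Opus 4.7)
The plan is to unpack the definition of $\cA\sststile{k}{x}\{p\ge 0\}$ in each of the four cases and explicitly construct a certificate for the conclusion out of the given certificates. Since the degree of every term in an SoS certificate is tracked syntactically, the whole exercise reduces to careful bookkeeping of how degrees compose under the three basic operations ``sum'', ``product'', and ``substitute''. Each rule then follows by direct algebraic verification against the template
$p - q = \sum_t s_t^2 + \sum_i a_i f_i + \sum_j b_j^2 g_j$.

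For \emph{Addition}, I would write both hypothesis certificates $f = \sum_t s_t^2 + \sum_i a_i f_i + \sum_j b_j^2 g_j$ and the analogue for $g$, all of whose terms have degree $\le \ell$. Term-by-term addition produces a certificate of $f+g$ of exactly the same shape and degree $\le \ell$, since a sum of two SoS polynomials is SoS. For \emph{Multiplication}, I would expand $fg$ monomial-by-monomial and check that the product of two terms of the templated form is still of the templated form: the all-SoS piece $(\sum_t s_t^2)(\sum_{t'}(s'_{t'})^2) = \sum_{t,t'}(s_t s'_{t'})^2$ is SoS; a product of an SoS with an equality axiom $f_i$ stays of the form $(\text{polynomial})\cdot f_i$; a product of an SoS with an inequality axiom stays of the form $(\text{square})\cdot g_j$; a product $(b_j^2 g_j)\cdot((b'_{j'})^2 g'_{j'})$ factors as $(b_j b'_{j'})^2\, g_j g'_{j'}$, which is a square times a product of two non-negative polynomials and is therefore a valid combined axiom term. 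All degrees come out bounded by $\ell+\ell'$.

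For \emph{Transitivity}, write $\cB = \{q_1 \ge 0,\ldots,q_m \ge 0\}$ (equality axioms are analogous). The degree-$\ell'$ $\cB$-proof of each element of $\cC$ writes it as a polynomial combination of the $q_k$'s together with SoS multipliers, with every monomial of degree $\le \ell'$. For each $k$, the $\cA$-proof of $q_k \ge 0$ expresses $q_k$ modulo $\cA$ as a sum of squares of degree $\le \ell$. Substituting these $\cA$-expressions for every occurrence of $q_k$ in the $\cB$-proof yields a pure $\cA$-certificate of $\cC$; in the worst case a monomial in the $\cB$-proof multiplies several $q_k$'s together and each is replaced by a degree-$\ell$ object, so the resulting degree is $\le \ell\cdot\ell'$.

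For \emph{Substitution}, apply $H$ coordinate-wise to every polynomial appearing in the degree-$\ell$ proof of $\{G\ge 0\}$ from $\{F\ge 0\}$: squares map to squares, the hypothesis $F\ge 0$ becomes $F(H)\ge 0$, and each polynomial of degree $d$ in the original proof becomes a polynomial of degree $d\cdot\deg(H)$, giving a new certificate of degree $\le \ell\cdot\deg(H)$. The main obstacle will be the Transitivity step: its multiplicative $\ell\cdot\ell'$ bound demands careful handling of the fact that a single $\cB$-proof may multiply several axioms together, and every such axiom must be replaced simultaneously by a full degree-$\ell$ $\cA$-certificate; the other three rules are essentially direct algebraic verifications.
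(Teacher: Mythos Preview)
The paper states this result as a \emph{Fact} without proof---it is a standard set of inference rules from the sum-of-squares literature, and the paper simply cites it as background. So there is no paper proof to compare against; your sketch is essentially what such a proof looks like.

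One point of caution in your Multiplication argument: the certificate template given in the paper's Definition of SoS proof has the form $\sum_t s_t^2 + \sum_i a_i f_i + \sum_j b_j^2 g_j$, where each inequality axiom $g_j$ appears \emph{singly}, multiplied by a square. When you multiply two such certificates, the cross term $(b_j^2 g_j)\cdot ((b'_{j'})^2 g_{j'}) = (b_j b'_{j'})^2\, g_j g_{j'}$ is a square times the \emph{product} $g_j g_{j'}$, which is not itself one of the $g_j$'s and therefore does not literally fit the template as written. Under the stricter definition the paper gives, one either needs the axiom system $\cA$ to be closed under products of inequality axioms, or one must adopt the more general Positivstellensatz-style template that allows $\sum_{S} b_S^2 \prod_{j\in S} g_j$. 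Most SoS references tacitly work with the latter, and the Multiplication rule is standard under that convention; but since you are being careful about matching the paper's exact definition, you should flag this. The remaining cross terms (equality times anything) are fine, since the multiplier on an equality axiom is an arbitrary polynomial.
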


\begin{fact}
    Let $p$ be a univariate polynomial in $x\in \R$ and $p(x)\ge 0$ for all $x\in \R$. Then $\sststile{\deg(p)}{x}\{p(x)\ge 0\}$.
\end{fact}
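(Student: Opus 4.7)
The plan is to prove the classical fact that any univariate nonnegative real polynomial is a sum of at most two squares of polynomials, each of half the degree. Once we have such a representation, the SoS proof of $p(x) \ge 0$ is immediate from the definition, with proof degree equal to $\deg(p)$.

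First, I would observe that if $p(x) \ge 0$ for all $x \in \R$, then $n \eqdef \deg(p)$ is even and the leading coefficient $c > 0$; otherwise $p$ would go to $-\infty$ at one end of the real line. Next, I would factor $p$ over $\C$. Since $p$ has real coefficients, its non-real roots occur in complex conjugate pairs $\{z_j, \bar z_j\}$, while its real roots $r_i$ must have even multiplicity (else $p$ would change sign there, contradicting nonnegativity). Grouping terms, we get
\[
p(x) = c \cdot \Bigl(\prod_i (x - r_i)^{m_i}\Bigr)^2 \cdot \prod_j \bigl((x - a_j)^2 + b_j^2\bigr),
\]
where $z_j = a_j + \img b_j$ with $b_j \neq 0$, and the real-root factor is already a perfect square.

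The key step is to convert $\prod_j ((x - a_j)^2 + b_j^2)$ into a sum of two squares of real polynomials. I would do this by induction on the number of factors using the Brahmagupta-Fibonacci identity
\[
(A^2 + B^2)(C^2 + D^2) = (AC - BD)^2 + (AD + BC)^2,
\]
which holds as a polynomial identity in $A, B, C, D$. Applied inductively with $A, B, C, D$ themselves polynomials in $x$, this shows $\prod_j ((x-a_j)^2 + b_j^2) = f(x)^2 + g(x)^2$ for some $f, g \in \R[x]$ whose degrees are at most half the degree of the product. Combining with the real-root factor, we obtain
\[
p(x) = \bigl(\sqrt{c}\, q(x) f(x)\bigr)^2 + \bigl(\sqrt{c}\, q(x) g(x)\bigr)^2,
\]
where $q(x) = \prod_i (x - r_i)^{m_i}$ and each summand on the right is the square of a polynomial of degree at most $n/2$.

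Since $p(x) - 0$ is exhibited as a sum of at most two squares of polynomials, each of degree at most $\deg(p)/2$, the witness $s_1 = \sqrt{c}\, q f$, $s_2 = \sqrt{c}\, q g$ satisfies the definition of $\sststile{\deg(p)}{x}\{p(x) \ge 0\}$ directly. I don't expect a real obstacle here; the only mildly subtle point is recognizing that the real roots must have even multiplicity and the leading coefficient must be positive, but both follow from a straightforward sign-analysis argument on the factorization over $\C$.
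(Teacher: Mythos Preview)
Your proposal is correct and is precisely the standard argument for this classical fact: factor over $\C$, note real roots have even multiplicity and the leading coefficient is positive, then use the two-square identity to collapse the quadratic factors into $f^2+g^2$. The paper itself states this as a known Fact in the preliminaries and does not give a proof, so there is no alternative approach to compare against; your write-up simply supplies the omitted (standard) justification.
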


\begin{fact}[Operator norm bound]
    Let $A$ be a $\dims\times\dims$ symmetric matrix and $v\in \R^\dims$. Then
    \[
    \sststile{2}{v}\{v^\top Av\le \opnorm{A}\norm{v}_2^2\}.
    \]
\end{fact}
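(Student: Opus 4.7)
The plan is to reduce the claim to the elementary fact that a real positive semidefinite matrix admits a Cholesky / square-root factorization, which directly exhibits the quadratic form as a finite sum of squares of linear forms in $v$.

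\textbf{Step 1: Reformulate as a PSD statement.} I would first rewrite the desired inequality as
\[
\opnorm{A}\norm{v}_2^2 - v^\top A v \;=\; v^\top(\opnorm{A}I_\dims - A)v.
\]
Here $B := \opnorm{A}I_\dims - A$ is symmetric, and since $A$ is symmetric its eigenvalues $\lambda_1,\ldots,\lambda_\dims$ are real with $|\lambda_i| \le \opnorm{A}$, so every eigenvalue $\opnorm{A}-\lambda_i$ of $B$ is nonnegative. Hence $B \succeq 0$.

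\textbf{Step 2: Exhibit $B$ as a sum of rank-one PSD pieces.} By the real spectral theorem, I can write $B = \sum_{i=1}^\dims \mu_i\, w_i w_i^\top$ with $\mu_i \ge 0$ and $w_i \in \R^\dims$ (equivalently, $B = L^\top L$ for some real $L$). Each vector $\sqrt{\mu_i}\,w_i$ has real entries, so defining the linear forms $\ell_i(v) := \sqrt{\mu_i}\, w_i^\top v$ gives polynomials in $v$ of degree exactly $1$.

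\textbf{Step 3: Assemble the degree-2 SoS certificate.} Combining the previous two steps,
\[
\opnorm{A}\norm{v}_2^2 - v^\top A v \;=\; v^\top B v \;=\; \sum_{i=1}^\dims \mu_i\,(w_i^\top v)^2 \;=\; \sum_{i=1}^\dims \ell_i(v)^2.
\]
This is an explicit representation as a sum of squares of polynomials of degree $\le 1$, each square therefore of degree $\le 2$, so by definition $\sststile{2}{v}\{\opnorm{A}\norm{v}_2^2 - v^\top A v \ge 0\}$, which is exactly the claim.

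\textbf{Main obstacle.} There is no serious obstacle: the only nontrivial ingredient is the spectral theorem for real symmetric matrices (equivalently, the equivalence between PSD matrices and Gram matrices over $\R$), and the degree bookkeeping is immediate because each $\ell_i$ is linear in $v$. The one point to be careful about is working over $\R$ rather than $\C$, so that the eigenvectors $w_i$ and scalars $\sqrt{\mu_i}$ are real and the resulting polynomials $\ell_i(v)$ have real coefficients; this is guaranteed by the hypothesis that $A$ is a real symmetric matrix.
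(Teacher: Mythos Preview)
Your proof is correct and is exactly the standard argument: $\opnorm{A}I_\dims - A \succeq 0$, so its quadratic form is a sum of squares of linear forms via the spectral decomposition. The paper does not actually supply a proof of this statement; it is listed as a standard Fact in the sum-of-squares preliminaries alongside other unproved facts (SoS H\"older, almost-triangle, etc.), so there is nothing further to compare against.
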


\begin{fact}[SoS H\"older's inequality]
    Let $f_i, g_i$, $i\in [s]$ be indeterminates. Let $p$ be an even positive integer. Then
    \[
    \sststile{p^2}{f,g}\left\{\Paren{\frac1{s}\sum_{i=1}^sf_ig_i^{p-1}}^p\le \Paren{\frac{1}{s}\sum_{i=1}^sf_i^p}\Paren{\frac{1}{s}\sum_{i=1}^2g_i^p}^{p-1}\right\}.
    \]
    Setting $p=2$ yields the SoS Cauchy-Schwarz inequality.
\end{fact}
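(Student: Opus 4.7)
The plan is to establish the inequality via iterated SoS Cauchy--Schwarz on the power sums $T_k \eqdef \frac{1}{s}\sum_{i=1}^s f_i^k g_i^{p-k}$, which reduces the target to $\sststile{p^2}{f,g}\{T_1^p \le T_0^{p-1}T_p\}$. My approach mirrors the standard log-convexity proof of H\"older's inequality, adapted to produce explicit sum-of-squares certificates.

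The first step is to establish the family of \emph{dyadic} SoS log-convexity inequalities $\sststile{2p}{f,g}\{T_k^2 \le T_0 T_{2k}\}$ for each integer $k \in [1, p/2]$. These follow from SoS Cauchy--Schwarz applied with the polynomial splitting $a_i = g_i^{p/2}$, $b_i = f_i^k g_i^{p/2-k}$ (well-defined because $p$ is even and $k \le p/2$), yielding the explicit identity
\[
T_0 T_{2k} - T_k^2 = \frac{1}{2s^2}\sum_{i,j}\bigl(g_i^{p/2} f_j^k g_j^{p/2-k} - g_j^{p/2} f_i^k g_i^{p/2-k}\bigr)^2,
\]
which is manifestly a sum of squares of degree at most $2p$.

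The second step is to chain the dyadic inequalities telescopically. Since $T_0$ and $T_0 T_{2k}$ are themselves SoS as polynomials in $(f,g)$, one can raise $T_k^2 \le T_0 T_{2k}$ to any integer power $m$ while preserving SoS: the difference $(T_0 T_{2k})^m - T_k^{2m} = (T_0 T_{2k} - T_k^2)\sum_{j=0}^{m-1}(T_0 T_{2k})^{m-1-j} T_k^{2j}$ is SoS as a product of SoS polynomials. For $p = 2^\ell$, applying this with $(k,m) = (1, p/2), (2, p/4), \ldots, (p/2, 1)$ in sequence and multiplying the resulting inequalities telescopically gives $T_1^p \le T_0^{p/2 + p/4 + \cdots + 1}T_p = T_0^{p-1}T_p$ as desired, with certificate degree exactly $p^2$.

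For general even $p$ not a power of two, the dyadic chain stops short of $T_p$, so one must supplement it with SoS certificates for ``mixed'' inequalities such as $T_{2k}T_{2k'} \le T_0 T_{2(k+k')}$. These admit explicit factorizations via iterated difference-of-squares; for instance, when $p = 6$,
\[
T_0 T_6 - T_2 T_4 = \frac{1}{2s^2}\sum_{i,j}(g_i f_j - f_i g_j)^2 (g_i f_j + f_i g_j)^2 (g_i^2 f_j^2 + g_j^2 f_i^2),
\]
which in turn yields the chain $T_1^6 \le T_0^3 T_2^3 \le T_0^4 T_2 T_4 \le T_0^5 T_6$. The hard part is producing these mixed SoS certificates uniformly across all even $p$, since the combinatorial bookkeeping grows with $p$; however, the underlying structure is always a symmetrized difference-of-squares in the variables $\{g_i f_j - f_i g_j\}$ that can be unraveled by repeated application of $a^n - b^n = (a-b)(a^{n-1} + \cdots + b^{n-1})$ combined with symmetrization over the index pair $(i,j)$, and the overall proof degree stays within $p^2$ because each Cauchy--Schwarz step contributes degree $O(p)$ and we perform $O(\log p)$ chained multiplications.
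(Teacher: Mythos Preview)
The paper does not prove this statement; it is listed as a preliminary ``Fact'' imported from the sum-of-squares literature without proof, so there is no paper argument to compare against. Your approach via iterated Cauchy--Schwarz on the power sums $T_k$ is the standard one and is correct in outline.

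Two points deserve tightening. First, the general even-$p$ case is not fully executed: you assert that the mixed inequalities $T_{2k}T_{2k'} \le T_0 T_{2(k+k')}$ admit SoS certificates but only verify $p=6$. In fact you need far less than the full family. Once you have $T_1^2 \le T_0 T_2$ (hence $T_1^p \le T_0^{p/2} T_2^{p/2}$), it suffices to show $T_2^m \le T_0^{m-1} T_{2m}$ for $m = p/2$, and this follows by induction on $m$ using only the single ``adjacent'' mixed inequality $T_2 T_{2m} \le T_0 T_{2(m+1)}$. The latter has a clean certificate: after symmetrizing in $(i,j)$ and factoring out the square $(g_i g_j)^{p-2(m+1)}$, the summand becomes $(a^2-b^2)^2\bigl(a^{2m-2}+a^{2m-4}b^2+\cdots+b^{2m-2}\bigr)$ with $a=g_if_j$, $b=f_ig_j$, which is manifestly SoS. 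This replaces your hand-wave about ``repeated application of $a^n-b^n$'' with a uniform argument.

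Second, your degree justification at the end (``each Cauchy--Schwarz step contributes degree $O(p)$ and we perform $O(\log p)$ chained multiplications'') is confused: degrees add under multiplication, and you are raising to powers of total exponent $p$, not performing $\log p$ multiplications. The correct accounting is simply that the target polynomial $T_0^{p-1}T_p - T_1^p$ has degree $p^2$, and every intermediate SoS expression in the chain stays at or below this degree, so the certificate degree is exactly $p^2$.
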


\begin{fact}[SoS almost triangle inequality]
    Let $f_1, \ldots, f_r$ be indeterminate. Then for all $t\ge 1$
    \[
    \sststile{2t}{f_1, \ldots, f_r}\left\{\Paren{\sum_{i=1}^rf_i}^{2t}\le r^{2t-1}\Paren{\sum_{i=1}^rf_i^{2t} } \right\}.
    \]
\end{fact}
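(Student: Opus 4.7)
I would prove the claim by induction on $t$, using the SoS Cauchy--Schwarz inequality (the $p=2$ case of the preceding SoS H\"older fact) as the only nontrivial input, together with the closure of SoS polynomials under addition and multiplication. The core principle I will use repeatedly is: if $B - A$ has a degree-$d$ SoS proof and $C$ is itself an SoS polynomial of degree $d'$, then $(B - A)C$ is SoS of degree $d + d'$, so multiplying by nonnegative SoS factors preserves SoS proofs of inequalities. Since $(\sum_i f_i)^2$, $\sum_i f_i^2$, and $\sum_i f_i^{2(t-1)} = \sum_i (f_i^{t-1})^2$ are all manifestly SoS in $f$, this principle will let me splice together degree-$2t$ certificates.

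\paragraph{Base case and main induction.} For $t = 1$, one has the explicit degree-$2$ SoS identity
\[
r\sum_{i=1}^r f_i^2 - \Bigl(\sum_{i=1}^r f_i\Bigr)^2 = \frac{1}{2}\sum_{i,j=1}^r (f_i - f_j)^2.
\]
For $t \ge 2$, I assume the inductive hypothesis $\sststile{2(t-1)}{f}\bigl\{(\sum_i f_i)^{2(t-1)} \le r^{2t-3}\sum_i f_i^{2(t-1)}\bigr\}$. Writing $(\sum_i f_i)^{2t} = (\sum_i f_i)^{2(t-1)}(\sum_i f_i)^2$ and multiplying the inductive hypothesis by the SoS polynomial $(\sum_i f_i)^2$ of degree $2$ gives $(\sum_i f_i)^{2t} \le r^{2t-3}(\sum_i f_i^{2(t-1)})(\sum_i f_i)^2$ at degree $2t$. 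A second application of SoS Cauchy--Schwarz, $(\sum_i f_i)^2 \le r\sum_i f_i^2$, multiplied by the SoS polynomial $\sum_i f_i^{2(t-1)}$ of degree $2(t-1)$, then yields $(\sum_i f_i)^{2t} \le r^{2t-2}(\sum_i f_i^{2(t-1)})(\sum_i f_i^2)$, again at degree $2t$.

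\paragraph{Auxiliary Chebyshev-type SoS inequality.} It remains to certify $(\sum_i f_i^{2(t-1)})(\sum_i f_i^2) \le r\sum_i f_i^{2t}$ with a degree-$2t$ SoS proof. Symmetrizing over index pairs $(i,j)$, the difference equals
\[
r\sum_{i=1}^r f_i^{2t} - \Bigl(\sum_i f_i^{2(t-1)}\Bigr)\Bigl(\sum_i f_i^2\Bigr) = \frac{1}{2}\sum_{i,j=1}^r \bigl(f_i^{2(t-1)} - f_j^{2(t-1)}\bigr)\bigl(f_i^2 - f_j^2\bigr).
\]
Using the polynomial identity $f_i^{2(t-1)} - f_j^{2(t-1)} = (f_i^2 - f_j^2)\sum_{k=0}^{t-2} f_i^{2k}f_j^{2(t-2-k)}$, each summand simplifies to
\[
(f_i^2 - f_j^2)^2\sum_{k=0}^{t-2}(f_i^k f_j^{t-2-k})^2 = \sum_{k=0}^{t-2}\bigl((f_i^2 - f_j^2)\, f_i^k f_j^{t-2-k}\bigr)^2,
\]
a sum of squares of polynomials of degree exactly $t$, hence SoS of total degree $2t$.

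\paragraph{Conclusion and main obstacle.} Chaining the three bounds produces $(\sum_i f_i)^{2t} \le r^{2t-1}\sum_i f_i^{2t}$ with a degree-$2t$ SoS certificate, closing the induction. The only nontrivial technical step is the auxiliary Chebyshev-type identity: the crucial observation is that the symmetrization produces a factor of $(f_i^2 - f_j^2)$ on each side of the product, so the whole expression collapses to a perfect square times an \emph{even-degree} SoS polynomial in $f_i, f_j$. Without this factoring, a direct appeal to the SoS H\"older's inequality with $p = 2t$ would instead produce a certificate of degree $(2t)^2 = 4t^2$, which is too coarse to match the claimed degree $2t$; the symmetrization-plus-factoring trick is exactly what enforces the tight degree bound.
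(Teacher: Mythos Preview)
Your proof is correct: the base-case identity, the inductive multiplication by the SoS polynomials $(\sum_i f_i)^2$ and $\sum_i f_i^{2(t-1)}$, and the Chebyshev-type symmetrization identity all check out and each step stays at degree exactly $2t$. The paper itself does not prove this statement---it is listed as a standard fact in the preliminaries without argument---so there is no paper proof to compare against; your induction-plus-symmetrization approach is one of the standard ways this inequality is established in the SoS literature.
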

\paragraph{Note on complex vectors and matrices}
In our problem, we mostly deal with complex vectors and Hermitian matrices. However, every complex vector $\qbit{x}\in \C^\dims$ can be written as $\qbit{x}=u+\img v$ and thus can be represented using $(u, v)\in \R^{2\dims}$. Therefore $\qdotprod{x}{x}=\norm{u}_2^2+\norm{v}_2^2$ is a real degree-2 polynomial in $u, v$.

Moreover, for all Hermitian matrix $A, B\in \C^{\dims\times\dims}$, the matrix inner product $\Tr[AB]\in \R$, so it can be viewed as a polynomial in the entries (both the real and imaginary part) of $A,B$. Thus, we can still apply the sum-of-squares technique for complex matrix expressions of this form.

\section{Lower bound}
\label{sec:lower-bound}
We describe the lower bound techniques in this section. The main technical result is a lower bound for testing maximally mixed states under $\gamma$-manipulation attack, expressed in terms of the \emph{measurement information channel}~\cite{liu2024restricted}.
\begin{theorem}
\label{thm:robust-lower}
    Let $\dims\ge 200$. With $\ns$ copies, using non-adaptive measurements, the error for testing maximally mixed state $\qmm$ under $\gamma$-manipulation with probability 0.8 must be at least
    \[
    \bigOmega{\frac{\gamma\dims}{\sqrt{\sup_{\POVM\in\povmset}\tracenorm{\Luders_{\POVM}}}}}.
    \]
\end{theorem}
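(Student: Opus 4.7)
The plan is to instantiate Le Cam's two-point method with $\qmm$ as the reference state and a carefully designed prior $\cD$ over states $\eps$-far from $\qmm$, then quantify the power the adversary extracts from $\gamma$-corruption via a coupling argument, and finally bound the resulting coupling distance in terms of $\tracenorm{\Luders_\POVM}$. Fix an orthonormal basis $V_1,\dots,V_\ell$ of trace-$0$ Hermitian matrices in $\Herm{\dims}$ with $\ell=\dims^2/2$ (e.g., a suitable half of the normalized non-identity Paulis on $\log_2\dims$ qubits), and set
\begin{equation*}
\sigma_z \eqdef \qmm+\frac{\eps}{\sqrt{\dims\ell}}\sum_{i=1}^\ell z_iV_i,\qquad z\sim\{-1,+1\}^\ell\ \text{uniformly}.
\end{equation*}
A matrix Khintchine/Bernstein estimate yields $\opnorm{\sum_iz_iV_i}=O(\sqrt{\dims\log\dims})$ and $\tracenorm{\sum_iz_iV_i}=\Theta(\dims^{3/2})$ with probability at least $1-e^{-\Omega(\dims)}$; the hypothesis $\dims\ge 200$ is exactly what lets these random-matrix estimates hold with honest constants. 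On this event $\sigma_z\succeq 0$ and $\tracenorm{\sigma_z-\qmm}=\Theta(\eps)$, so $\cD$ is essentially supported on valid density matrices $\eps$-far from $\qmm$ in trace norm.

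For any non-adaptive scheme $\POVM_1,\dots,\POVM_\ns$ (possibly randomized by a shared seed $R$ that the adversary sees, so I condition on $R$ throughout), write $H_0=\p_\qmm^{x^\ns}$ and $H_1=\expectDistrOf{z}{\p_{\sigma_z}^{x^\ns}}$. Le Cam's inequality forces any algorithm succeeding with probability $0.8$ to satisfy $\totalvardist{H_0^\cA}{H_1^\cA}\ge 0.6$ for every admissible adversary $\cA$, where $H_i^\cA$ are the post-corruption outcome distributions. Given any coupling $\Pi$ of $H_0,H_1$ with expected Hamming distance $\beta$, the adversary obtains a strategy that shrinks this total variation: on observing $X^\ns$, sample $Y^\ns\sim \Pi(\cdot\mid X^\ns)$ and output $Y^\ns$ if $\ham{X^\ns}{Y^\ns}\le\gamma\ns$, otherwise leave the input unchanged. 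By Markov, the truncation fails with probability at most $\beta/(\gamma\ns)$, so the output has total variation $O(\beta/(\gamma\ns))$ from $H_1$; symmetrically $H_1^\cA$ can be brought close to $H_0$. Le Cam's criterion therefore fails once $\beta$ drops below a small constant fraction of $\gamma\ns$.

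The technical crux is upper bounding the minimum coupling distance. Conditionally on $z$ (and the seed), both $\p_\qmm^{x^\ns}$ and $\p_{\sigma_z}^{x^\ns}$ are product distributions, so the product of per-coordinate maximal couplings achieves expected Hamming distance $\sum_{t=1}^\ns\totalvardist{\p_\qmm^{\POVM_t}}{\p_{\sigma_z}^{\POVM_t}}$. Averaging over $z$ and invoking Khintchine on the Rademacher sum,
\begin{equation*}
\expectDistrOf{z}{\totalvardist{\p_\qmm^{\POVM_t}}{\p_{\sigma_z}^{\POVM_t}}}=\frac{1}{2}\expectDistrOf{z}{\sum_x\left|\frac{\eps}{\sqrt{\dims\ell}}\sum_jz_j\Tr[M_x^tV_j]\right|}\le\frac{\eps}{2\sqrt{\dims\ell}}\sum_x\sqrt{\sum_{j=1}^\ell\Tr[M_x^tV_j]^2}.
\end{equation*}
Since $\{V_j\}\cup\{\eye_\dims/\sqrt{\dims}\}$ is orthonormal in $\Herm{\dims}$, $\sum_j\Tr[M_xV_j]^2\le \Tr[M_x^2]$, and Cauchy--Schwarz with the factorization $\sqrt{\Tr[M_x^2]}=\sqrt{\Tr[M_x^2]/\Tr[M_x]}\cdot\sqrt{\Tr[M_x]}$ gives
\begin{equation*}
\sum_x\sqrt{\Tr[M_x^2]}\le\sqrt{\sum_x\frac{\Tr[M_x^2]}{\Tr[M_x]}}\cdot\sqrt{\sum_x\Tr[M_x]}=\sqrt{\tracenorm{\Luders_{\POVM_t}}\cdot\dims},
\end{equation*}
using the PSDness identity $\tracenorm{\Luders_\POVM}=\Tr[\Choi_\POVM]=\sum_x\Tr[M_x^2]/\Tr[M_x]$ and $\sum_x\Tr[M_x]=\dims$. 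Summing over $t\in[\ns]$ and plugging in $\ell=\dims^2/2$ yields expected coupling distance $O(\ns\eps\sqrt{\sup_{\POVM\in\povmset}\tracenorm{\Luders_\POVM}}/\dims)$; requiring this to exceed $c\gamma\ns$ to preserve Le Cam produces the claimed bound $\eps=\Omega(\gamma\dims/\sqrt{\sup_\POVM\tracenorm{\Luders_\POVM}})$.

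The main obstacle is converting the expectation-level coupling bound into an adversarial kernel that respects the deterministic $\gamma\ns$ budget on every realization without losing more than a constant factor in total variation; the Markov-truncation above works, but it relies on the conditional Hamming distance under the product coupling not having a pathological tail, which one needs to verify for the specific construction at hand. A secondary technicality is absorbing the $e^{-\Omega(\dims)}$ conditioning loss from the PSD event into the Le Cam constants, which is precisely why the theorem requires $\dims$ to be at least a sufficiently large constant, stated as $\dims\ge 200$.
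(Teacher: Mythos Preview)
Your proposal is correct and follows the paper's architecture: Le Cam with $\qmm$ versus a Rademacher-perturbed family, an earth-mover/coupling argument to exploit the $\gamma$-budget, and a per-coordinate bound expressed through $\tracenorm{\Luders_\POVM}$. The only substantive deviation is in that last step: you bound $\expectDistrOf{z}{\totalvardist{\p_\qmm^{\POVM_t}}{\p_{\sigma_z}^{\POVM_t}}}$ directly via Khintchine on $\sum_j z_j\Tr[M_x^tV_j]$ followed by Cauchy--Schwarz over~$x$, whereas the paper passes through Pinsker and computes $\expectDistrOf{z}{\chisquare{\p_z}{\p_\qmm}}=\frac{\eps^2}{\ell}\sum_j\hdotprod{V_j}{\Luders_{\POVM_t}(V_j)}\le \frac{\eps^2}{\ell}\tracenorm{\Luders_{\POVM_t}}$ (its \cref{lem:chi-square-upper}), then applies concavity of $\sqrt{\cdot}$. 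Both routes produce the same $O(\ns\eps\sqrt{\sup_\POVM\tracenorm{\Luders_\POVM}}/\dims)$ bound; yours is mildly more elementary since it bypasses Pinsker entirely, while the paper's chi-square identity makes the appearance of $\Luders$ more transparent.

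Your self-flagged ``main obstacle'' is not one. The Markov truncation needs nothing about the Hamming-distance tail beyond the expectation you already control: if the coupling's expected Hamming cost is at most $\gamma\ns/2$, then $\probaOf{\ham{X^\ns}{Y^\ns}>\gamma\ns}\le 1/2$, and that alone forces the post-corruption total variation to be at most $1/2$, pushing the Le Cam success probability below $0.75<0.8$ (this is exactly the content of the paper's \cref{lem:emd}). No further tail verification is needed. Separately, the paper avoids your PSD-conditioning wrinkle by deterministically truncating the perturbation, taking $\barDelta_z=\Delta_z\min\{1,\tfrac{1}{2\dims\opnorm{\Delta_z}}\}$ so that every $\sigma_z$ is automatically a valid state; your conditioning on the high-probability PSD event also works, but the truncation is cleaner and confines the role of $\dims\ge 200$ to guaranteeing $\tracenorm{\sigma_z-\qmm}\ge\eps$ with the required probability.
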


\begin{corollary}
\label{cor:finit-outcome}
    Since $\tracenorm{\Luders_{\POVM}}\le \min\{\ab, \dims\}$ for $\POVM$ with at most $\ab$ outcomes, the additional cost is $\Omega(\gamma\dims/\sqrt{\min\{\ab, \dims\}})$ for these measurements.
\end{corollary}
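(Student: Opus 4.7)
The plan is to derive the corollary directly from \cref{thm:robust-lower} by establishing the stated bound $\tracenorm{\Luders_{\POVM}}\le \min\{\ab,\dims\}$ and then substituting. The bound $\tracenorm{\Luders_{\POVM}}\le \dims$ is already referenced from~\cite{liu2024restricted}, so the only new ingredient is the bound $\tracenorm{\Luders_{\POVM}}\le \ab$ whenever $\POVM=\{M_x\}_{x\in\outset}$ has $|\outset|\le \ab$.

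To prove that bound, I would first observe that the Choi representation
\[
\Choi_{\POVM}=\sum_{x\in\outset}\frac{\vvec{M_x}\vadj{M_x}}{\Tr[M_x]}
\]
is a sum of $|\outset|$ rank-one positive semidefinite operators (each summand is of the form $\vvec{M_x}\vadj{M_x}$ up to a nonnegative scalar, and $M_x$ is Hermitian so $\vvec{M_x}=\vvec{M_x}^\dagger$ up to conjugation conventions that keep the outer product PSD). Hence $\Choi_{\POVM}\succeq 0$, which gives $\tracenorm{\Choi_{\POVM}}=\Tr[\Choi_{\POVM}]$. Next I would compute
\[
\Tr[\Choi_{\POVM}]=\sum_{x\in\outset}\frac{\vvdotprod{M_x}{M_x}}{\Tr[M_x]}=\sum_{x\in\outset}\frac{\Tr[M_x^2]}{\Tr[M_x]},
\]
and bound each summand by $1$. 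This uses the elementary PSD inequality $\Tr[M_x^2]\le \opnorm{M_x}\cdot\Tr[M_x]$ together with $\opnorm{M_x}\le 1$, which holds because $0\preceq M_x\preceq \sum_{x'}M_{x'}=\eye_\dims$. Summing over at most $\ab$ outcomes yields $\tracenorm{\Luders_{\POVM}}=\tracenorm{\Choi_{\POVM}}\le \ab$.

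Combining this with $\tracenorm{\Luders_{\POVM}}\le \dims$ from~\cite{liu2024restricted} gives $\sup_{\POVM\in\povmset}\tracenorm{\Luders_{\POVM}}\le \min\{\ab,\dims\}$ whenever every allowable POVM has at most $\ab$ outcomes. Substituting this upper bound into the denominator of the lower bound from \cref{thm:robust-lower} immediately yields $\Omega(\gamma\dims/\sqrt{\min\{\ab,\dims\}})$, completing the proof. There is no real obstacle here; the only point that deserves care is verifying positivity of $\Choi_{\POVM}$ and the elementary PSD inequality $\Tr[M_x^2]\le \opnorm{M_x}\Tr[M_x]$, both of which are standard.
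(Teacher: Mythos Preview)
Your proposal is correct and follows exactly the route the paper intends: the paper states the bound $\tracenorm{\Luders_{\POVM}}\le \min\{\ab,\dims\}$ as a given fact in the corollary statement itself (with the $\dims$ part cited from~\cite{liu2024restricted}) and derives the conclusion by direct substitution into \cref{thm:robust-lower}, without providing a separate proof. Your explicit verification of $\tracenorm{\Luders_{\POVM}}\le \ab$ via $\Tr[\Choi_{\POVM}]=\sum_x \Tr[M_x^2]/\Tr[M_x]\le |\outset|$ is a welcome addition that the paper omits, but the overall argument is identical.
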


\begin{corollary}
    For $\dims\ge r\ge 200$, let $\qmm^{(r)}$ be a maximally mixed state in an $r$-dimensional subspace of $\C^\dims$ (for example, if $\dims=2^{\nqubits}$ and $r=2^{N'}$ with $N'<N$, we can define $\qmm^{(r)}=\frac{1}{r}\sum_{i=0}^{r-1}\qproj{i}$.) Then \cref{cor:finit-outcome} holds with $\dims$ replaced by $r$. Thus, the lower bound to test rank-$r$ states is $\Omega(\gamma\sqrt{r})$.
\end{corollary}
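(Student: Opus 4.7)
The plan is to reduce to \cref{thm:robust-lower} in dimension $r$ by restricting the entire problem to an $r$-dimensional subspace. Without loss of generality let $V = \Span\{\qbit{0},\ldots,\qbit{r-1}\} \subseteq \C^\dims$ denote the support of $\qmm^{(r)}$, identify $V$ with $\C^r$ via the natural basis, and let $P_V$ be the orthogonal projector onto $V$. First I would set up hard instances inside $V$: choose $\ell = r^2/2$, let $V_1,\ldots,V_\ell$ be trace-$0$ orthonormal Hermitian matrices supported on $V$ (e.g., normalized nontrivial Pauli-type observables on $\C^r$), and define
\[
\sigma_z \eqdef \qmm^{(r)} + \frac{\eps}{\sqrt{r\ell}}\sum_{i=1}^\ell z_i V_i, \qquad z \in \{-1,+1\}^\ell.
\]
The same random matrix argument used in the ambient-dimension construction, applied inside $V$, certifies $\sigma_z \succeq 0$ with probability $1 - e^{-\Omega(r)}$; this is where the hypothesis $r \ge 200$ is consumed.

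The key observation is that restricting any POVM to $V$ preserves all outcome statistics on these hard instances. For $\POVM = \{M_x\}_{x\in\outset}$, define $M_x' \eqdef P_V M_x P_V$; these are positive semidefinite and $\sum_x M_x' = P_V(\sum_x M_x)P_V = P_V \eye_\dims P_V = \eye_V$, so $\POVM' = \{M_x'\}$ is a valid POVM on $V \cong \C^r$. Since each $\sigma_z$ is supported on $V$, Born's rule gives $\Tr[M_x \sigma_z] = \Tr[M_x' \sigma_z]$, so the joint outcome distribution produced by $\POVM$ on $\sigma_z$ coincides with that produced by $\POVM'$. Hence neither the algorithm nor the adversary (which has full knowledge of the measurement scheme) gains anything from the ambient dimension: both are, in effect, interacting with an $r$-dimensional problem using the POVMs $\POVM'$, and for these the general bound from \cref{thm:robust-lower} gives $\tracenorm{\Luders_{\POVM'}} \le r$.

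Applying the proof of \cref{thm:robust-lower} in dimension $r$, with reference state $\qmm^{(r)}$, alternatives $\{\sigma_z\}$, and POVMs $\{\POVM'\}$, substitutes $\dims \mapsto r$ throughout and yields
\[
\eps = \Omega\!\left(\frac{\gamma r}{\sqrt{\sup_{\POVM'}\tracenorm{\Luders_{\POVM'}}}}\right) = \Omega(\gamma\sqrt{r}).
\]
The step that warrants the most care is verifying that the coupling / Le Cam argument behind \cref{thm:robust-lower} genuinely factors through the restriction $\POVM \mapsto \POVM'$ --- that the expected-Hamming-distance coupling bound on $(\p_{\qmm^{(r)}}^{\out^\ns},\, \EE_z[\p_{\sigma_z}^{\out^\ns}])$ depends on the POVM only through $\tracenorm{\Luders_{\POVM'}}$ rather than $\tracenorm{\Luders_\POVM}$. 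This is essentially automatic because the two distributions in question are identical to those produced by $\POVM'$ on the corresponding states in $V$, so any coupling valid for the restricted problem lifts trivially to the original; once this compatibility is checked, the bound follows by mechanical substitution of $\dims$ by $r$.
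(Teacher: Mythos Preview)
Your proposal is correct and follows exactly the approach the paper sketches: restrict the hard-instance construction and every POVM to the $r$-dimensional support of $\qmm^{(r)}$, then rerun \cref{thm:robust-lower} verbatim in dimension $r$. The paper gives no explicit proof for this corollary beyond the one-line remark in the technical overview that ``we simply restrict the construction to an $r$-dimensional subspace,'' so your detailed verification of the compression step --- that $\POVM \mapsto P_V\POVM P_V$ is a valid POVM on $V$, preserves all outcome distributions on states supported in $V$, and hence the EMD/coupling bound of \cref{thm:emd-upper-general} depends only on $\tracenorm{\Luders_{\POVM'}}\le r$ --- is precisely what fills the gap.
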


\begin{remark}
    For simplicity, we present the definition of measurement information channel and our results for POVM with countable outcome set. However, our result and proof easily extends to continuous POVM by replacing the summation with integration over the outcomes.
\end{remark}

The remainder of this section is devoted to proving this lower bound. For simplicity of presentation, we first prove the lower bound for non-adaptive measurements.

We begin with a general framework that connects robust quantum state testing and the earth-mover distance (EMD).
Then, we describe the lower bound construction and apply the theoretical framework to prove our lower bounds.

\subsection{Robust testing lower bound via earth-mover distance} 
Suppose that the unknown state is $\rho$. The key insight is that if an adversary can change $\gamma$ proportion of the outcomes to make them appear as if they come from another state $\eps$-far from $\rho$, then we cannot hope to achieve better error than $\eps$.

This intuition is formalized in \cref{lem:emd}. We first define earth-mover distance with Hamming distance as the metric,
\begin{definition}
\label{def:emd}
    Let $\bfP_1,\bfP_2$ be two distributions over $\cX^{\ns}$ (note that they may not necessarily be product distributions). Let $\pi(\bfP_1, \bfP_2)$ be the set of couplings between the two distributions. The earth-mover distance is
    \[
    \EMD{\bfP_1}{\bfP_2}\eqdef \min_{\Pi\in \pi(\bfP_1,\bfP_2)}\expectDistrOf{(X^\ns, Y^{\ns})\sim \Pi}{\ham{X^{\ns}}{Y^{\ns}}}.
    \]
This is Kantorovich's formulation of optimal transport with Hamming distance as the cost. The existence of the minimum is due to the lower-semicontinuity of Hamming distance \cite[Chapter 6]{ambrosio2008gradient}.
\end{definition}

\begin{lemma}
\label{lem:emd}
    Let $\rho\in \C^{\dims\times\dims}$ be a fixed state and $\mathcal{D}$ be a distribution of states such that $\probaDistrOf{\sigma\sim\mathcal{D}}{\tracenorm{\rho-\sigma}>\eps}>0.9$. If for all measurements $\POVM^{\ns}=(\POVM_1, \POVM_2, \ldots, \POVM_\ns)$, 
    \[
    \EMD{\expectDistrOf{\sigma\sim\mathcal{D}}{\p_{\sigma}^{x^n}}}{\p_{\rho}^{x^n}}\le \frac{\gamma \ns}{2},
    \]
    then the error for quantum state testing must be at least $\eps$, which also implies the same lower bound for tomography.
\end{lemma}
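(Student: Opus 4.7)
The plan is to argue by contradiction via a reduction in which an adversary uses the optimal EMD coupling to fool any testing algorithm. Suppose a non-adaptive tester $\mathcal{T}$ distinguishes the known target $\qkn = \rho$ from states $\eps$-far in trace distance with probability at least $0.8$ under arbitrary $\gamma$-corruption. Write $\bfP_\rho \eqdef \p_\rho^{x^\ns}$ and $\bfP_{\cD} \eqdef \expectDistrOf{\sigma\sim\mathcal{D}}{\p_\sigma^{x^\ns}}$, and fix any measurement scheme $\POVM^\ns$. By hypothesis and the attainability of the EMD infimum asserted in \cref{def:emd}, there is a coupling $\Pi^*$ of $\bfP_\rho$ and $\bfP_{\cD}$ with $\expectDistrOf{(X^\ns,Y^\ns)\sim\Pi^*}{\ham{X^\ns}{Y^\ns}}\le\gamma\ns/2$.

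In the world where the true state is $\rho$, the adversary acts as follows on the raw outcomes $X^\ns\sim\bfP_\rho$: it samples $Y^\ns$ from the conditional $\Pi^*(\,\cdot\,\mid X^\ns)$, outputs $Y^\ns$ if $\ham{X^\ns}{Y^\ns}\le \gamma\ns$, and outputs $X^\ns$ (no change) otherwise. By Markov's inequality the bad event $E \eqdef \{\ham{X^\ns}{Y^\ns}>\gamma\ns\}$ has probability at most $1/2$, so the hard corruption budget of $\gamma\ns$ outcomes is always respected. Since the $Y^\ns$-marginal under $\Pi^*$ is exactly $\bfP_{\cD}$ and the adversary's output distribution $\bfQ_A$ agrees with $Y^\ns$ off the event $E$, we get $\totalvardist{\bfQ_A}{\bfP_{\cD}}\le \probaOf{E}\le 1/2$. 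This truncation step is exactly where the factor $1/2$ in the hypothesis is needed, and is the principal technical subtlety: converting an expectation-level optimal-transport guarantee into a valid strategy with a strict per-realization budget.

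The contradiction is then arithmetic via the data-processing inequality. In world A (state $\rho$, adversary as above) the tester must output $\notthestate$ with probability at most $0.2$. In world B (state $\sigma\sim\mathcal{D}$, no corruption) the event $\{\tracenorm{\sigma-\rho}>\eps\}$ has probability strictly greater than $0.9$, and conditional on this event the tester must output $\notthestate$ with probability at least $0.8$; hence the unconditional probability exceeds $0.72$. These two probabilities must differ by at most $\totalvardist{\bfQ_A}{\bfP_{\cD}}\le 1/2$, but they differ by more than $0.52$, a contradiction. The tomography lower bound follows from \cref{fact:tomo-testing}. The main obstacle, as noted, is respecting the hard corruption budget — the expectation-based EMD bound and Markov's inequality buy us exactly enough slack, which is why the hypothesis is phrased with $\gamma\ns/2$ rather than $\gamma\ns$.
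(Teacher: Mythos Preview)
Your proposal is correct and follows essentially the same approach as the paper: use the optimal EMD coupling, apply Markov's inequality to bound the probability that the Hamming cost exceeds $\gamma\ns$ by $1/2$, truncate the coupling to respect the hard corruption budget, and derive a total-variation bound of $1/2$ that contradicts the tester's success guarantee. The only cosmetic differences are that the paper corrupts in the other direction (making outcomes from $\sigma\sim\mathcal{D}$ look like outcomes from $\rho$, rather than vice versa) and frames the final contradiction via Le Cam's two-point bound ($0.76$ vs.\ $0.75$) instead of your direct comparison of output probabilities ($0.52>0.5$); both are equivalent.
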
 

\begin{proof}
    Consider a game played by an adversary and a player with access to a testing algorithm. With equal probability (1/2), the adversary either (1) samples $\sigma\sim\mathcal{D}$ and sets all copies to be $\sigma$ or (2) sets the $\ns$ quantum states to $\rho$. In either case, the adversary can arbitrarily change $\gamma$-fraction of the outcomes. The player is asked to distinguish between the two cases using the potentially corrupted measurement outcomes. 
    
    Since $\probaDistrOf{\sigma\sim\mathcal{D}}{\tracenorm{\rho-\sigma}>\eps}>0.9$, if the testing algorithm can test whether $\sigma=\rho$ or $\tracenorm{\sigma-\rho}>\eps$ with probability at least 0.8, then the player should be able to guess correctly with probability at least
    \[
    0.8\cdot\frac{1}{2}(1+0.9)=0.76.
    \]

    We now design a strategy for the adversary to corrupt the measurement outcomes. By \cref{def:emd}, for all measurements $\POVM^\ns$, there exists a coupling $\Pi$ such that 
    \[
    \expectDistrOf{(X^\ns, Y^{\ns})\sim \Pi}{\ham{X^{\ns}}{Y^{\ns}}}\le \frac{\gamma\ns}{2},
    \]
    and the marginals are $X^{\ns}\sim \expectDistrOf{\sigma\sim\mathcal{D}}{\p_{\sigma}^{x^\ns}}, Y^{\ns}\sim\p_{\rho}^{x^n}$. By Markov's inequality,
    \[
    \probaDistrOf{(X^\ns, Y^{\ns})\sim\Pi}{\ham{X^{\ns}}{Y^{\ns}}>\gamma\ns}\le \frac12.
    \]
    
    We can view $Y^\ns$ as a randomized function of $X^\ns$, i.e. $Y^{\ns}=F(X^{\ns})$. This function can be used to alter the outcomes and fool the player. 
    
    In particular, the adversary does not make any changes in case (2) when the state is $\rho$.
    In case (1) when $\sigma$ is sampled from $\mathcal{D}$ and outcomes $X^\ns$ are obtained from the measurements,
    the adversary changes the outcomes to $F(X^{\ns})$ if $\ham{X^{\ns}}{F(x^{\ns})}\le \gamma\ns$, and keeps the original outcomes $X^\ns$ otherwise.  Equivalently, the adversary applies the following function to $X^\ns$,
    \[
    Adv(X^\ns)=\begin{cases}
        F(X^\ns), &\ham{X^{\ns}}{F(X^{\ns})}\le \gamma\ns,\\
        X^{\ns}, &o.w.
    \end{cases}
    \]

    The adversary changes at most $\gamma$-fraction of the outcomes in all scenarios.
    Let $Y^n$ be the outcomes in case (2). Since $Y^n$ has the same distribution as $F(Y^\ns)$, we can upper bound the total variation distance. 
    
    Denote $E$ as the event that $\ham{X^{\ns}}{F(X^{\ns})}\le \gamma\ns$, which happens with probability at least $1/2$. For any random variable $X$, we use the shorthand $X|E$ be the distribution of $X$ conditioned on $E$. Then,
    \begin{align}
        \totalvardist{Adv(X^\ns)}{Y^{\ns}}&= \totalvardist{Adv(X^\ns)}{F(X^{\ns})}\nonumber\\
        &=\totalvardist{Adv(X^\ns)|E}{F(X^{\ns})|E}\probaOf{E}+\totalvardist{{Adv(X^\ns)|E^c}}{{F(X^{\ns})|E^c}}\probaOf{E^c}\nonumber\\
        &=\totalvardist{{F(X^\ns)|E}}{{F(X^{\ns})|E}}\probaOf{E}+\totalvardist{{X^\ns|E^c}}{{F(X^{\ns})|E^c}}\probaOf{E^c}\nonumber\\
        &=\totalvardist{{X^\ns|E^c}}{{F(X^{\ns})|E^c}}\probaOf{E^c}\nonumber\\
        &\le \frac{1}{2}.\label{equ:dtv-adversary}
    \end{align}
    The final step is because $\probaOf{E^c}\le 1/2$ and $\totalvardist{\cdot}{\cdot}\le 1$. However, by standard argument for hypothesis testing (e.g.\cite{LeCam73, yu1997assouad}), the success probability is upper bounded by 
    \[
    \frac{1}{2}(1+\totalvardist{Adv(X^\ns)}{Y^{\ns}})\le  0.75.
    \]
    This leads to contradiction. Thus, the error for testing must be larger than $\eps$ if the testing algorithm can achieve success probability of at least 0.8.
\end{proof}

\subsection{Lower bound construction and EMD upper bound}
We now instantiate the framework with the following lower bound construction, originally introduced in \cite{liu2024role}. 
\begin{definition}
 \label{def:perturbation}
     Let $\dims^2/2\le\ell\le\dims^2-1$ and $\hbasis=(V_1, \ldots, V_{\dims^2}=\frac{\eye_\dims}{\sqrt{\dims}})$ be an orthonormal basis of $\Herm{\dims}$ (the space of $\dims\times\dims$ Hermitian matrices), and $\cd$ be a universal constant. Let  $\ptb=(\ptb_1, \ldots, \ptb_\ell)$ be uniformly drawn from $\{-1, 1\}^\ell$, we define $\sigma_{\ptb}=\qmm + \barDelta_{\ptb}$ where
     \begin{equation}
         \Delta_{\ptb} = \frac{\cd\eps}{\sqrt{\dims}}\cdot\frac{1}{\sqrt{\ell}}\sum_{i=1}^\ell \ptb_iV_i, \quad \barDelta_{\ptb}= \Delta_{\ptb}\min\left\{1, \frac{1}{2\dims \opnorm{\Delta_{\ptb}}}\right\},
         \label{equ:delta_z}
     \end{equation}
     Let the distribution of $\sigma_z$ be $\ptbDistr(\hbasis)$.
 \end{definition}
We apply \cref{lem:emd} with $\rho=\qmm$ and $\mathcal{D}=\ptbDistr(\hbasis)$. We can guarantee that $\sigma\sim \mathcal{D}$ is $\eps$ far from $\qmm$ with overwhelming probability. 

\begin{lemma}[{\cite[Corollary 4.4]{liu2024role}}]
\label{prop:perturbation-trace-distance}
    Let $\cd= 10\sqrt{2}$, $\ell\ge \dims^2/2$, $\eps<1/200$. Then for $\sigma\sim \ptbDistr(\hbasis)$,  $\|\sigma-\qmm\|_1\ge \eps$ with probability at least $1-2\exp(-\dims)$. 
\end{lemma}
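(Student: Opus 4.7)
The plan is to establish the trace distance bound by first controlling $\opnorm{\Delta_\ptb}$ with overwhelming probability, arguing that this forces the truncation in the definition of $\barDelta_\ptb$ to be inactive, and then passing from the operator norm bound to the required trace norm lower bound via Cauchy--Schwarz.

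A deterministic observation first: since $\{V_i\}$ is an orthonormal family of Hermitian matrices, $\hsnorm{\Delta_\ptb}^2=\cd^2\eps^2/\dims$, independent of $\ptb$. Combined with the singular-value Cauchy--Schwarz inequality $\tracenorm{A}\cdot\opnorm{A}\ge\hsnorm{A}^2$, this reduces the task to upper bounding $\opnorm{\Delta_\ptb}$.

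The main technical step I would carry out is to prove that $\opnorm{W}\le K\sqrt{\dims}$ with probability at least $1-2\exp(-\dims)$, for an explicit universal constant $K$, where $W:=\sum_{i=1}^\ell \ptb_i V_i$. My approach is an $\epsilon$-net argument on the complex unit sphere $\Sp^{\dims}$. For any fixed $\qbit{v}\in\Sp^{\dims}$, the quantity $\matdotprod{v}{W}{v}=\sum_i \ptb_i\matdotprod{v}{V_i}{v}$ is a real Rademacher sum whose coefficients satisfy
\[
\sum_{i=1}^{\ell}\matdotprod{v}{V_i}{v}^2\le\sum_{i=1}^{\dims^2}\matdotprod{v}{V_i}{v}^2=\hsnorm{\qproj{v}}^2=1,
\]
by Parseval applied to the expansion of the Hermitian projector $\qproj{v}$ in the orthonormal basis $\{V_i\}_{i=1}^{\dims^2}$. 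Hoeffding's inequality then gives $\probaOf{|\matdotprod{v}{W}{v}|>t}\le 2\exp(-t^2/2)$, and combining this with a standard $\epsilon$-net of $\Sp^{\dims}$ (of cardinality at most $(3/\epsilon)^{2\dims}$) together with the net-to-sphere estimate $\opnorm{W}\le (1-2\epsilon)^{-1}\max_{\qbit{v}\in N_\epsilon}|\matdotprod{v}{W}{v}|$ valid for Hermitian $W$ yields the operator norm bound after a union bound.

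Given $\opnorm{W}\le K\sqrt{\dims}$, I get $\opnorm{\Delta_\ptb}\le K\cd\sqrt{2}\,\eps/\dims$ since $\ell\ge\dims^2/2$. With $\cd=10\sqrt{2}$ and $\eps<1/200$, tracking the constants produces $\opnorm{\Delta_\ptb}\le 1/(2\dims)$, so the truncation in the definition of $\barDelta_\ptb$ is inactive and $\barDelta_\ptb=\Delta_\ptb$. Applying Cauchy--Schwarz,
\[
\tracenorm{\sigma-\qmm}=\tracenorm{\Delta_\ptb}\ge\frac{\hsnorm{\Delta_\ptb}^2}{\opnorm{\Delta_\ptb}}\ge\frac{\cd^2\eps^2/\dims}{K\cd\sqrt{2}\,\eps/\dims}=\frac{\cd\,\eps}{K\sqrt{2}},
\]
which is at least $\eps$ as long as $\cd\ge K\sqrt{2}$; the choice $\cd=10\sqrt{2}$ is comfortable for the value of $K$ produced by the net argument.

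The main obstacle is the careful bookkeeping of numerical constants so that the interlocking thresholds $\cd=10\sqrt{2}$ and $\eps<1/200$ simultaneously force both $\opnorm{\Delta_\ptb}\le 1/(2\dims)$ and $\cd/(K\sqrt{2})\ge 1$. Crucially, the variance bound I use relies only on Parseval for an orthonormal Hermitian basis, so the analysis is genuinely basis-independent and no further assumption on $\hbasis$ is required.
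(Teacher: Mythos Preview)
The paper does not actually prove this lemma; it is quoted as a direct citation of \cite[Corollary 4.4]{liu2024role}, so there is no in-paper argument to compare against. Your outline is a correct and natural way to establish the result: the deterministic Hilbert--Schmidt identity $\hsnorm{\Delta_\ptb}^2=\cd^2\eps^2/\dims$, the basis-independent concentration of $\opnorm{W}$ via Hoeffding plus an $\epsilon$-net (with the variance controlled by Parseval for $\qproj{v}$ in the orthonormal basis $\{V_i\}$), and the passage $\tracenorm{A}\ge\hsnorm{A}^2/\opnorm{A}$ all go through. One caveat on your closing remark that $\cd=10\sqrt{2}$ is ``comfortable'': the binding constraint is not $\cd\ge K\sqrt{2}$ but rather the truncation condition $\cd\eps K\sqrt{2}\le 1/2$, which with $\cd=10\sqrt{2}$ and $\eps<1/200$ forces $K\le 5$. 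A generic net choice (e.g., radius $1/4$) gives $K\approx 6.5$, which is too large; you need a slightly finer net (radius around $0.1$) to push $K$ below $5$, so the constants are workable but tight rather than comfortable.
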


We now upper bound the earth-mover distance for the distribution in \cref{def:perturbation}.
\begin{theorem}
\label{thm:emd-upper-general}
    Let $\sigma\sim \ptbDistr(\hbasis)$ as in \cref{def:perturbation}. For all measurements $\POVM^\ns\in \povmset^\ns$, 
\[
\EMD{\expectDistrOf{\sigma\sim\ptbDistr(\hbasis)}{\p_{\sigma}^{x^\ns}}}{\p_{\qmm}^{x^\ns}}\le 2\ns\eps\frac{\sqrt{\sup_{\POVM\in\povmset}\tracenorm{\Luders_{\POVM}}}}{\dims},
\]
where $\Luders_{\POVM}$ is the measurement information channel of $\POVM$.
\end{theorem}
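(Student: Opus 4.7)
The plan is to build the transport plan in two stages. First, by drawing $\sigma\sim\ptbDistr(\hbasis)$ and then, conditional on $\sigma$, using an optimal Hamming coupling $\Pi_\sigma$ between $\p_\sigma^{x^\ns}$ and $\p_\qmm^{x^\ns}$, we obtain a coupling of the mixture $\expectDistrOf{\sigma}{\p_\sigma^{x^\ns}}$ with $\p_\qmm^{x^\ns}$ whose expected Hamming cost is exactly $\expectDistrOf{\sigma}{\EMD{\p_\sigma^{x^\ns}}{\p_\qmm^{x^\ns}}}$; note that the second marginal is unchanged since $\p_\qmm^{x^\ns}$ does not depend on $\sigma$. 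Second, since the measurements are non-adaptive (and fixed by conditioning on the random seed), both $\p_\sigma^{x^\ns}$ and $\p_\qmm^{x^\ns}$ are product distributions, and coupling coordinate by coordinate with the optimal TV coupling yields $\EMD{\p_\sigma^{x^\ns}}{\p_\qmm^{x^\ns}}\le \sum_{i=1}^\ns \totalvardist{\p_\sigma^i}{\p_\qmm^i}$. The theorem therefore reduces to showing that for every POVM $\POVM=\{M_x\}_{x\in\outset}\in \povmset$,
\[
    \expectDistrOf{\ptb}{\totalvardist{\p_{\sigma_\ptb}}{\p_\qmm}}=\bigO{\frac{\eps\sqrt{\tracenorm{\Luders_\POVM}}}{\dims}}.
\]

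For a single copy, $2\totalvardist{\p_{\sigma_\ptb}}{\p_\qmm}=\sum_x|\Tr[M_x\barDelta_\ptb]|\le \sum_x|\Tr[M_x\Delta_\ptb]|$, since $\barDelta_\ptb$ is $\Delta_\ptb$ scaled by a factor in $[0,1]$, so the truncation can only shrink the bound. The crucial step is a weighted Cauchy--Schwarz with weights $\sqrt{\Tr[M_x]}^{\pm 1}$,
\[
\sum_x|\Tr[M_x\Delta_\ptb]|\le \sqrt{\sum_x\Tr[M_x]}\cdot \sqrt{\sum_x\frac{\Tr[M_x\Delta_\ptb]^2}{\Tr[M_x]}}=\sqrt{\dims}\cdot\sqrt{\vadj{\Delta_\ptb}\Choi_{\POVM}\vvec{\Delta_\ptb}},
\]
using $\sum_x M_x=\eye_\dims$ and the very definition of $\Choi_\POVM$. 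This is the step that injects the measurement information channel into the analysis.

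Finally, I take expectation over $\ptb$ and pull it inside the square root by Jensen. Since $\Delta_\ptb$ is linear in the Rademacher vector $\ptb\in\{\pm1\}^\ell$ with orthonormal coefficients $\{V_i\}_{i=1}^\ell$, independence of the signs diagonalizes the quadratic form: $\expectDistrOf{\ptb}{\vadj{\Delta_\ptb}\Choi_\POVM\vvec{\Delta_\ptb}}=\frac{\cd^2\eps^2}{\dims\ell}\sum_{i=1}^\ell \vadj{V_i}\Choi_\POVM\vvec{V_i}$, which is at most $\frac{\cd^2\eps^2}{\dims\ell}\Tr[\Choi_\POVM]=\frac{\cd^2\eps^2}{\dims\ell}\tracenorm{\Luders_\POVM}$ because $\{\vvec{V_i}\}_{i=1}^\ell$ is orthonormal and $\Choi_\POVM$ is PSD (so the sum of Rayleigh quotients along an orthonormal family is bounded by its trace). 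Plugging in $\ell\asymp \dims^2$ gives the per-copy bound, and summing over the $\ns$ copies delivers the theorem. I expect the main technical point to be the weighted Cauchy--Schwarz step, which is what converts the $\ell_1$-type sum over outcomes into the quadratic form associated with $\Choi_\POVM$; the orthonormality of $\hbasis$ is then what allows the Rademacher averaging to cash out cleanly into $\Tr[\Choi_\POVM]$, and the scaling $\ell\asymp\dims^2$ is precisely tuned so that the factors of $\dims$ from Cauchy--Schwarz and of $\ell$ from the Rademacher averaging combine to yield the dimension-correct $\sqrt{\tracenorm{\Luders_\POVM}}/\dims$ dependence.
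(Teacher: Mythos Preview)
Your proposal is correct and follows essentially the same route as the paper: convexity of EMD to pass to a fixed $\sigma_z$, coordinatewise maximal coupling to reduce to per-copy total variation, then Jensen and Rademacher averaging to bound the quadratic form by $\Tr[\Choi_{\POVM}]=\tracenorm{\Luders_{\POVM}}$. The only cosmetic difference is that the paper phrases the per-copy step as Pinsker plus $\kldiv\le\chisquare$ (arriving at $\totalvardist\le\sqrt{\chisquare/2}$), whereas your weighted Cauchy--Schwarz with weights $\sqrt{\Tr[M_x]}$ is exactly the same inequality written out directly.
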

\begin{proof}
    Since $\ptbDistr(\hbasis)$ is parameterized by binary vectors $z\in\{+1, -1\}^\ell$, we use the shorthand $\p_z^{x^\ns}$ to denote $\p_{\sigma_z}^{x^\ns}$. First, we note that
    \[
    \EMD{\expectDistrOf{z}{\p_z^{x^{\ns}}}}{\p_{\qmm}^{\ns}}\le \expectDistrOf{z}{\EMD{\p_z^{x^{\ns}}}{\p_{\qmm}^{\ns}}}.
    \]
    This is because the right-hand side is the expected Hamming distance of a coupling $(X^\ns, Y^\ns)$ between $\expectDistrOf{z}{\p_z^{x^{\ns}}}$ and $\p_{\qmm}^{\ns}$ which is constructed by the mixture of minimum couplings between $\p_z^{x^{\ns}}$ and $\p_{\qmm}^{x^\ns}$. Inequality holds by \cref{def:emd}.

    For each $z\in \{-1,1\}^\ell$, recall that $\p_z^{x^{\ns}}$ and $\p_{\qmm}^{x^\ns}$ are product distributions. Thus, we can construct a naive coupling between each outcome distribution $\p_z^{i}$  and $\p_{\qmm}^i$. We use the standard maximal coupling result,
    \begin{lemma}[{\cite[Theorem 2.12]{Hollander12}}]
        For any two distributions $p, q$, there exists a coupling $(X, Y)$ such that $\totalvardist{p}{q}=\probaOf{X\ne Y}$.
        \label{lem:maximal-coupling}
    \end{lemma}
    Therefore, constructing a maximal coupling between each  $\p_z^{i}$  and $\p_{\qmm}^i$ yields an expected Hamming distance of $\sum_{i=1}^\ns\totalvardist{\p_z^{i}}{\p_{\qmm}^i}$, which further upper bounds the earth-mover distance. Thus,
    \begin{align}
         \EMD{\expectDistrOf{z}{\p_z^{x^{\ns}}}}{\p_{\qmm}^{\ns}}&\le \sum_{i=1}^\ns\expectDistrOf{z}{\totalvardist{\p_z^{i}}{\p_{\qmm}^i}}\nonumber\\
         &\le \sum_{i=1}^{\ns}\expectDistrOf{z}{\sqrt{\frac{1}{2}\chisquare{\p_z^{i}}{\p_{\qmm}^{i}} }} &\text{Pinsker}\label{equ:lb-pinsker}\\
         &\le\sum_{i=1}^{\ns}\sqrt{\frac{1}{2}\expectDistrOf{z}{\chisquare{\p_z^{i}}{\p_{\qmm}^{i}} }} &\text{Concavity}\label{equ:lb-concave}
    \end{align}
    Thus it suffices to upper bound the expected chi-square divergence of each measurement outcome. 
    \begin{lemma}
        Let $\sigma_z$ be defined in \cref{def:perturbation}. For all measurement $\POVM$, 
        \[
        \expectDistrOf{z\sim\{-1, 1\}^\ell}{\chisquare{\p_z}{\p_{\qmm}}}\le\frac{\eps^2}{\ell}\sum_{i=1}^\ell\hdotprod{V_i}{\Luders_{\POVM}(V_i)}\le  \frac{\eps^2}{\ell}\tracenorm{\Luders_{\POVM}}.
        \]
        \label{lem:chi-square-upper}
    \end{lemma}
    \begin{proof}
    The proof follows by expanding the chi-square divergence using Born's rule. Let $\POVM=\{M_x\}_{x\in \cX}$, so $\p_z(x)=\Tr[M_x\sigma_z]$ and $\p_{\qmm}(x)=\Tr[M_x\qmm]$.
        \begin{align*}
            \expectDistrOf{z\sim\{-1, 1\}^\ell}{\chisquare{\p_z}{\p_{\qmm}}}&=\expectDistrOf{z}{\sum_{x}\frac{(\p_z(x)-\p_{\qmm}(x))^2}{\p_{\qmm}(x)}}\\
            &=\dims\cdot \expectDistrOf{z}{\sum_x\frac{\Tr[M_x\barDelta_z]^2}{\Tr[M_x]}}\\
            &\le \dims \cdot \expectDistrOf{z}{\sum_x\frac{\Tr[M_x\Delta_z]^2}{\Tr[M_x]}}
        \end{align*}
        The second step follows by linearity of trace and \cref{def:perturbation}. The inequality is because $\barDelta_z=a_z\Delta_z$ where $a_z\le 1$. The final expression can be further expressed in terms of the measurement information channel $\Luders$. Using the expression for $\Delta_z$ in \cref{def:perturbation},
        \begin{align*}
            \dims\cdot\expectDistrOf{z}{\sum_x\frac{\Tr[M_x\Delta_z]^2}{\Tr[M_x]}}&=\dims \cdot \frac{\eps^2}{\dims\ell}\cdot \expectDistrOf{z}{\sum_x\frac{\Tr[M_x\sum_{i=1}^\ell z_iV_i]^2}{\Tr[M_x]}}\\
            &=\frac{\eps^2}{\ell}\expectDistrOf{z}{\sum_x\sum_{i,j=1}^\ell z_iz_j\frac{\Tr[M_x V_i]\Tr[M_xV_j]}{\Tr[M_x]}}\\
            &=\frac{\eps^2}{\ell}\sum_{i=1}^\ell\sum_x\frac{\Tr[M_xV_i]\Tr[M_xV_i]}{\Tr[M_x]}\\
            &=\frac{\eps^2}{\ell} \sum_{i=1}^\ell \Tr\left[V_i\sum_{x}\frac{M_x\Tr[M_xV_i]}{
            \Tr[M_x]
            }\right]\\
            & = \frac{\eps^2}{\ell}\sum_{i=1}^\ell \Tr\left[V_i\Luders(V_j) \right]\\
            & =\frac{\eps^2}{\ell}\sum_{i=1}^\ell \vadj{V_i}\Choi\vvec{V_i}.
        \end{align*}
        This proves the first inequality in the statement. The final inequality follows by noting that $V_i$'s are orthonormal in the space of $\dims\times\dims$ matrices, and thus
        \[
        \tracenorm{\Luders}=\Tr[\Choi]=\sum_{i=1}^{\dims^2}\vadj{V_i}\Choi\vvec{V_i}\ge \sum_{i=1}^\ell \vadj{V_i}\Choi\vvec{V_i}.
        \]
        The inequality is because $\Choi$ is p.s.d. and thus $\vadj{V_i}\Choi\vvec{V_i}\ge 0$.
    \end{proof}
    Combining all parts and noting that $\ell=\dims^2/2$ proves the theorem.
\end{proof}
Our main lower bound result \cref{thm:robust-lower} is an immediate corollary of by setting $\gamma = 4\eps\frac{\sqrt{\sup_{\POVM\in\povmset}\tracenorm{\Luders_{\POVM}}}}{\dims}$ and solving for $\eps$.

\subsection{Adversarial corruption is stronger than SPAM: formal proof}
\label{sec:adv-strong-formal}
To conclude the section, we give a formal proof that our adversarial corruption model is at least as strong as $\gamma$-bounded SPAM noise up to some constants.

\begin{theorem}
    Given $\ns$ copies of an unknown state $\rho$ and measurements $\POVM_1, \ldots, \POVM_\ns$, let $X^\ns=(X_1, \ldots, X_n)$ be the random variable denoting the clean outcomes. Let $Y^\ns= (Y_1, \ldots, Y_n)$ be the random variable that denotes the outcomes due to some $\gamma$-bounded SPAM error. 
    
    Then, there exists an adversary $Adv$ that corrupts at most $2\gamma$-fraction of $x^{\ns}$ and guarantees 
    \[
    \totalvardist{Adv(X^\ns)}{Y^\ns}\le \exp(-2\gamma^2\ns).
    \]
\end{theorem}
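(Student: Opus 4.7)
The plan is to exploit the observation that $\gamma$-bounded SPAM only perturbs each per-copy outcome distribution by at most $\gamma$ in total variation: since the actual state and measurement together deviate in trace distance by $\gamma_s+\gamma_m\le \gamma$ and trace distance contracts under any quantum-to-classical channel, the clean and noisy outcome distributions $\p_i, \q_i$ at each step satisfy $\totalvardist{\p_i}{\q_i}\le \gamma$. A per-coordinate maximal coupling then produces a joint coupling of $X^\ns$ with a random vector $\tilde Y^\ns$ distributed as $Y^\ns$ that disagrees in only $\gamma$-fraction of coordinates in expectation. The second half of the plan promotes this expected-$\gamma$ guarantee to the required hard $2\gamma$-fraction budget via concentration.

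Concretely, for each $i$ I would invoke Lemma~\ref{lem:maximal-coupling} to obtain $(X_i,\tilde Y_i)$ with $\probaOf{X_i\ne \tilde Y_i}=\totalvardist{\p_i}{\q_i}\le\gamma$, take these couplings independently across $i$ so that $\tilde Y^\ns \stackrel{d}{=} Y^\ns$, and define the adversary
\[
Adv(X^\ns) = \begin{cases} \tilde Y^\ns, & \ham{X^\ns}{\tilde Y^\ns}\le 2\gamma\ns, \\ X^\ns, & \text{otherwise.}\end{cases}
\]
This deterministically satisfies the $2\gamma$-fraction constraint. Mirroring the total variation argument in the proof of Lemma~\ref{lem:emd}, let $E=\{\ham{X^\ns}{\tilde Y^\ns}\le 2\gamma\ns\}$; then $Adv(X^\ns)$ and $\tilde Y^\ns$ agree on $E$, so $\totalvardist{Adv(X^\ns)}{\tilde Y^\ns}\le \probaOf{E^c}$. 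The indicators $\indic{X_i\ne \tilde Y_i}$ are independent $\{0,1\}$-variables with mean at most $\gamma$, so Hoeffding's inequality applied to $\sum_i \indic{X_i\ne \tilde Y_i}$ with deviation $\gamma\ns$ above its mean yields $\probaOf{E^c}\le \exp(-2\gamma^2\ns)$, which is precisely the stated bound.

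The main obstacle I anticipate is ensuring the independence structure needed for Hoeffding. For non-adaptive measurements the outcomes $X_i$ are genuinely independent, so the per-coordinate couplings concatenate cleanly. For adaptive schemes the noisy measurement at step $i$ depends on previously observed noisy outcomes and may drift away from the clean trajectory, so the per-coordinate total variation bound must instead be expressed conditionally on $\tilde Y^{i-1}$. I expect this can still be resolved by coupling sequentially and invoking Azuma--Hoeffding on the resulting martingale difference sequence with bounded conditional mean, recovering essentially the same $\exp(-\Omega(\gamma^2\ns))$ bound. Everything else in the argument reduces to bookkeeping.
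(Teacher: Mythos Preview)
Your proposal is correct and follows essentially the same approach as the paper: per-coordinate maximal coupling (Lemma~\ref{lem:maximal-coupling}) to get $\probaOf{X_i\ne X_i'}\le\gamma$, Hoeffding on the sum of indicators to bound the bad event by $\exp(-2\gamma^2\ns)$, and the conditioning argument from Lemma~\ref{lem:emd} to pass to total variation. The only cosmetic difference is that the paper's adversary stops modifying once the running budget is exceeded, whereas you switch all-or-nothing based on the total Hamming distance; both enforce the hard $2\gamma$ constraint and yield the same bound.
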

\begin{proof}
    Since the SPAM noise is $\gamma$-bounded, we have for each outcome $\totalvardist{X_i}{Y_i}\le \gamma$ because trace distance is the maximum total variation discrepancy in the outcome distribution that can be caused for any measurement.
    
    Thus, upon receiving $X_i$, the adversary could apply maximal coupling~\cref{lem:maximal-coupling} to change $X_i$ to $X_i'$, and $X_i'$ would have the same distribution as $Y_i$, and 
    \[
    \probaOf{X_i\ne X_i'}=\totalvardist{X_i}{Y_i}\le \gamma.
    \]
    The adversary stops making changes at time $t$ when $\sum_{i\le t}\indic{X_i\ne X_i'}>2\gamma\ns$. Using Hoeffding's inequality,
    \[
    \probaOf{\sum_{i=1}^{\ns}\indic{X_i\ne X_i'}>2\gamma\ns}\le \exp(-2\gamma^2\ns).
    \]
    Using the same argument as \cref{lem:emd}, we have $\totalvardist{Adv(X_i)}{Y_i}\le \exp(-2\gamma^2\ns)$ as desired.
\end{proof}

Thus, for $\ns\ge 10/\gamma^2$, which is necessary to achieve an error of $O(\gamma)$ for any algorithm, one cannot tell whether the outcomes are results of SPAM noise or our adversary $Adv$ with probability larger than $\exp(-20)$. Thus, if two states cannot be distinguished with probability at least $0.8001$ in $\gamma$-bounded SPAM noise, they also cannot be distinguished with probability at least $0.8$ under $2\gamma$-adversarial corruption. This shows that our corruption model is stronger than SPAM up to constants.
\section{Robust tomography algorithm}\label{sec:tomography-algorithm}
We restate the algorithm without robustness guarantees.
\begin{algorithm}
\caption{Tomography algorithm using uniform POVM}
\label{alg:tomograph-non-robust}
Measure all copies with $\POVM_{unif}$. Obtain $\qbit{v_1}, \ldots \qbit{v_\ns}$ as outcomes.

Return $\hat{\rho}=(\dims+1)\frac{1}{\ns}\sum_{i=1}^{\ns}\qproj{v_i}-\eye_\dims$.
\end{algorithm}

The benefit of \cref{alg:tomograph-non-robust} is that the proof only requires the second moment of the Haar measure, thus, it can be efficiently implemented using 2-designs. We will first prove its error guarantee under adversarial attack. We then introduce another algorithm that achieves the optimal $\gamma\sqrt{\dims}$ error rate, but requires much higher moments of the Haar measure.

\begin{theorem}
    Let $\rho$ be the unknown state with rank $\rk(\rho)=r$. With $\gamma$-corrupted measurement outcomes from $\ns$ copies, \cref{alg:tomograph-non-robust} achieves an error of
    \[
    \tracenorm{\hat{\rho}-\rho}\le \tildeO{\frac{r\sqrt{\dims}}{\sqrt{\ns}}}+2\gamma(\dims+1).
    \]
\end{theorem}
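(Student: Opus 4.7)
The plan is to reduce the robust analysis to the non-robust analysis of the uniform POVM estimator via a simple triangle inequality, paying an additive worst-case price for each corrupted outcome. Let $\qbit{u_1},\ldots,\qbit{u_{\ns}}$ denote the clean outcomes that the measurements actually produced, and let $\qbit{v_1},\ldots,\qbit{v_{\ns}}$ be the (possibly corrupted) vectors that \cref{alg:tomograph-non-robust} receives. By the $\gamma$-corruption model, the set $C=\{i:\qbit{v_i}\ne\qbit{u_i}\}$ satisfies $|C|\le\gamma\ns$. Define the ``clean'' estimator
\[
\tilde{\rho}\eqdef (\dims+1)\frac{1}{\ns}\sum_{i=1}^{\ns}\qproj{u_i}-\eye_\dims,
\]
and split
\[
\tracenorm{\hat{\rho}-\rho}\le \tracenorm{\tilde{\rho}-\rho}+\tracenorm{\hat{\rho}-\tilde{\rho}}.
\]

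For the first term I would invoke the existing analysis of the uniform POVM estimator on uncorrupted samples (as in \cite{guctua2020fast}): since $\EE[\qproj{u_i}]=\Sigma_\rho=(\eye_\dims+\rho)/(\dims+1)$, the clean estimator is unbiased, and matrix concentration (matrix Bernstein/Hoeffding applied to the bounded, mean-zero terms $\qproj{u_i}-\Sigma_\rho$) together with the rank-$r$ best approximation trick sketched in \cref{sec:tomography-overview} yields, with high probability,
\[
\tracenorm{\tilde{\rho}-\rho}\le\tildeO{\frac{r\sqrt{\dims}}{\sqrt{\ns}}}.
\]
This step only uses the second moment of the Haar measure, so it goes through whenever the uniform POVM is implemented via a 2-design.

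For the second term, the key observation is that $\hat{\rho}-\tilde{\rho}$ is supported on $C$:
\[
\hat{\rho}-\tilde{\rho}=\frac{\dims+1}{\ns}\sum_{i\in C}\bigl(\qproj{v_i}-\qproj{u_i}\bigr).
\]
Each summand is a difference of two rank-1 projectors and hence has trace norm at most $2$. Applying the triangle inequality in trace norm and $|C|\le\gamma\ns$ gives
\[
\tracenorm{\hat{\rho}-\tilde{\rho}}\le \frac{\dims+1}{\ns}\cdot 2\cdot\gamma\ns=2\gamma(\dims+1).
\]
Combining the two bounds yields the claim.

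There is no serious obstacle here: the statistical part is off-the-shelf, and the robustness part is obtained for free via triangle inequality. The main conceptual point is that, because we do not perform any outlier-robust aggregation, a single corrupted outcome shifts the empirical mean by an amount scaling with the POVM normalization $\dims+1$, so the corruption cost is $\Theta(\gamma\dims)$ rather than $\Theta(\gamma\sqrt{\dims})$. Removing this looseness in the $\gamma$ term is precisely what motivates the subsequent SoS-based robust covariance algorithm in \cref{sec:tomography-overview}; the present theorem is merely a baseline showing that even the naive estimator degrades gracefully.
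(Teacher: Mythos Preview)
Your proposal is correct and follows essentially the same argument as the paper: split via the triangle inequality into the clean-estimator error (handled by citing the known $\tildeO{r\sqrt{\dims}/\sqrt{\ns}}$ guarantee from \cite{guctua2020fast}) plus the corruption term, then bound the latter by $\frac{\dims+1}{\ns}\sum_{i\in C}\tracenorm{\qproj{v_i}-\qproj{u_i}}\le 2\gamma(\dims+1)$ using $|C|\le\gamma\ns$ and that each difference of rank-1 projectors has trace norm at most $2$. The only cosmetic difference is that the paper cites \cite[Theorem 1]{guctua2020fast} directly for the clean term rather than sketching a matrix-concentration-plus-rank-projection argument.
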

\begin{proof}
    Let $\qbit{v_1}, \ldots, \qbit{v_\ns}$ be uncorrupted outcomes. Let $S\subseteq[\ns]$ be the indices chosen by the adversary and $\qbit{\tilde{v_i}}$ be the corrupted outcomes for $i\in S$. Then \cref{alg:tomograph-non-robust} outputs an estimate of
    \[
    \hat{\rho}=(\dims+1)\frac{1}{\ns}\Paren{\sum_{i\in S}\qproj{\tilde{v}_i}+\sum_{i\notin S}\qproj{v_i}}-\eye_\dims.
    \]
    Let $\tilde{\rho}$ be the estimate we would have obtained had the samples not been corrupted,
    \[
    \tilde{\rho}=(\dims+1)\frac{1}{\ns}\sum_{i=1}^{\ns}\qproj{v_i}-\eye_\dims.
    \]
    Then, by triangle inequality
    \begin{align*}
        \tracenorm{\hat{\rho}-\rho}&\le \tracenorm{\hat{\rho}-\tilde{\rho}}+\tracenorm{\tilde{\rho}-\rho}\\
        &=\frac{\dims+1}{\ns}\tracenorm{\sum_{i\in S}(\qproj{v_i}-\qproj{\tilde{v_i}})}+\tracenorm{\tilde{\rho}-\rho}\\
        &\le \frac{\dims+1}{\ns}\sum_{i\in S}\tracenorm{\qproj{v_i}-\qproj{\tilde{v_i}}}+\tracenorm{\tilde{\rho}-\rho}
    \end{align*}
    By the error bounds for tomography with uncorrupted outcomes \cite[Theorem 1]{guctua2020fast}, $\tracenorm{\tilde{\rho}-\rho}=\tilde{O}(r\sqrt{\dims}/\sqrt{\ns})$ with probability at least 0.99. Furthurmore, for each $i\in S$, $\tracenorm{\qproj{v_i}-\qproj{\tilde{v}_i}}\le 2$. Since $|S|\le \gamma\ns$, combining all parts proves the theorem.
\end{proof}
% The algorithm relies on robust mean estimation for random variables with bounded covariance. 

% \begin{lemma}[{\cite[Proposition 2.4]{diakonikolas2023algorithmic}}]
%     Let $X\in \R^{\dims}$ be a random variable with bounded covariance $\Cov{X}\preceq \eye_{\dims}$. There exists a polynomial-time algorithm that, given $\ns=\poly(\dims/\gamma)$ $\gamma$-corrupted samples, estimates the mean $\expect{X}$ with $\ell_2$ error of $\sqrt{\gamma}$ with probability at least 0.99.
% \end{lemma}

\subsection{Robust tomography algorithm with near-optimal error}
To robustly estimate the state, we only need to apply robust covariance estimation for the complex random variable $\qbit{v}$. To achieve the optimal error rate in \cref{thm:robust-lower}, it suffices to show that $\hsnorm
{\hat{\rho}-\rho}=O(\gamma)$. 

\begin{theorem}
    Let $\gamma<1/10$ and $\ns=\dims^{O(\log 1/\gamma)}/\gamma^2$. There exists a non-adaptive robust quantum state tomography algorithm that, given $\gamma$-fraction corrupted measurement outcomes, obtains an estimate $\hat{\rho}$ of the unknown state $\rho$ with $\hsnorm{\rho-\hat{\rho}}\le \tildeO{\gamma}$.
    \label{thm:tomography-alg}
\end{theorem}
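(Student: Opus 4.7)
The plan is to follow the blueprint sketched in Section~2.2: apply the uniform POVM $\POVM_{unif}$ to all $\ns$ copies, run a sum-of-squares based robust covariance estimator on the outcomes $\qbit{v_1},\ldots,\qbit{v_\ns}$ to produce $\hat\Sigma$ approximating the complex covariance $\Sigma_\rho = (\eye_\dims+\rho)/(\dims+1)$ in Hilbert--Schmidt distance, and return $\hat\rho = (\dims+1)\hat\Sigma - \eye_\dims$. Since $\hsnorm{\hat\rho-\rho} = (\dims+1)\hsnorm{\hat\Sigma-\Sigma_\rho}$, the theorem reduces to proving $\hsnorm{\hat\Sigma-\Sigma_\rho} \le \tildeO{\gamma/\dims}$ under $\gamma$-corruption of the outcomes.

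The first technical ingredient I would develop is a distributional hypercontractivity bound for $\cD(\rho)$: for every even $h \ge 2$ and every $M \in \Herm{\dims}$,
\[
\expectDistrOf{v\sim\cD(\rho)}{\matdotprod{v}{M}{v}^{h}} \;\le\; \frac{(Ch)^{h}}{\dims^{h}}\bigl(\hsnorm{M}^{2}+\Tr[M]^{2}\bigr)^{h/2}.
\]
Since the density of $\cD(\rho)$ against the Haar measure is $\dims\matdotprod{v}{\rho}{v}$, the left-hand side is a Haar integral that expands via the cycle formula \eqref{equ:haar-trace-moment} into a normalized sum over permutations $\pi\in\Sim_{h+1}$ of products of cycle traces $\Tr[M^{|c|}]$ (with one factor involving $\rho$). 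Bounding $|\Tr[M^k]|\le \hsnorm{M}^k$ for $k\ge 2$ and keeping singleton cycles as $|\Tr[M]|$ yields a per-permutation bound of $|\Tr[M]|^{n_1(\pi)}\hsnorm{M}^{h-n_1(\pi)}$, controlled by $(\hsnorm{M}^2+\Tr[M]^2)^{h/2}$; summing over $\pi$ against the Haar prefactor $\binom{\dims+h}{h+1}^{-1}$ gives the inequality. The crucial upgrade is that the same derivation goes through as a degree-$O(h)$ sum-of-squares proof in the real/imaginary entries of $M$: the cycle expansion is a polynomial identity, and the remaining steps reduce to the SoS almost-triangle and SoS Cauchy--Schwarz facts recorded in the preliminaries, together with the elementary $\sststile{2k}{M}\{\hsnorm{M}^{2k}-\Tr[M^{2k}]\ge 0\}$. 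With $\ns=\dims^{O(h)}/\gamma^{2}$ samples, a net argument over unit-Frobenius Hermitian matrices combined with Bernstein concentration transfers this SoS-certified hypercontractivity uniformly to every $(1-\gamma)$-subset of the empirical distribution.

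The algorithm then solves a degree-$O(h)$ SoS program with indicator variables $w\in\{0,1\}^{\ns}$ selecting a subset of size $(1-\gamma)\ns$, auxiliary unit vectors $\qbit{v'_i}$ satisfying $w_i(\qbit{v'_i}-\qbit{v_i})=0$, and the empirical hypercontractivity axiom on the subset; the clean subset witnesses feasibility, so one obtains a pseudo-distribution $D$ in time $(\ns\dims)^{O(h)}$ via the standard separation oracle. The estimator is $\hat\Sigma=\widetilde{\EE}_{D}\bigl[\tfrac{1}{(1-\gamma)\ns}\sum_{i}w_i\qproj{v'_i}\bigr]$. The identifiability step sketched in the overview then gives, for any Hermitian $M$ with $\Tr[M]=0$,
\[
\sststile{O(h)}{M}\,\Bigl\{\,\Tr[M(\hat\Sigma-\Sigma_\rho)]^{h}\;\le\;\frac{(ch\gamma)^{h}}{\gamma\,\dims^{h}}\hsnorm{M}^{h}\,\Bigr\},
\]
by SoS H\"older applied on the symmetric difference of the algorithm's subset and the true clean subset (of SoS-certified size at most $2\gamma\ns$) and two applications of empirical hypercontractivity. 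Substituting $M=\hat\Sigma-\Sigma_\rho$ and taking $h=\Theta(\log 1/\gamma)$ absorbs the $\gamma^{-1/h}$ factor, yielding $\hsnorm{\hat\Sigma-\Sigma_\rho}=\tildeO{\gamma/\dims}$ and hence $\hsnorm{\hat\rho-\rho}=\tildeO{\gamma}$.

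The main obstacle I anticipate is the SoS certification of hypercontractivity for $\cD(\rho)$: the density weight $\matdotprod{v}{\rho}{v}$ obstructs a direct reduction to a known hypercontractive model such as a Gaussian or the uniform sphere distribution, and the cycle expansion produces cross terms between $\hsnorm{M}^{2}$ and $\Tr[M]^{2}$ that must be grouped into a low-degree SoS certificate at arbitrary even degree $h=\Theta(\log 1/\gamma)$, all while pushing the complex/Hermitian algebra through the real-SoS formalism described at the end of Section~3. A secondary difficulty is driving the empirical transfer to all $(1-\gamma)$-subsets, which is exactly what dictates the $\ns=\dims^{O(\log 1/\gamma)}/\gamma^{2}$ sample complexity appearing in the statement.
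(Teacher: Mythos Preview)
Your proposal is correct and follows essentially the same approach as the paper: uniform POVM outcomes, SoS-certified hypercontractivity of $\cD(\rho)$ via the permutation/cycle expansion of the Haar moments, a degree-$O(t)$ SoS program with selector variables and an enforced hypercontractivity constraint, and the SoS H\"older identifiability argument with $t=\Theta(\log 1/\gamma)$. The only minor deviation is that you frame the empirical transfer as holding ``uniformly over every $(1-\gamma)$-subset'' via a net plus Bernstein, whereas the paper only needs certified hypercontractivity on the full clean sample (obtained via an operator-norm bound on the moment tensor, \`a la Bakshi--Kothari) because hypercontractivity on the algorithm's selected set is imposed directly as an SoS constraint rather than inherited from the data.
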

Then, via a projection argument and Cauchy-Schwarz, we can achieve an accuracy of $O(\gamma\sqrt{\rk})$ in trace distance.
\begin{corollary}
    Let $\rho$ be a rank-$r$ state and $\hat{\rho}$ be the estimate obtained by \cref{thm:tomography-alg}. Define $\tilde{\rho}=\arg\min_{\sigma:\rk(\sigma)\le r}\hsnorm{\sigma-\hat{\rho}}$ to be the closest rank-$r$ state to $\hat{\rho}$, which can be efficiently computed by the Eckart-Young-Mirsky Theorem. Then,
    \[
    \tracenorm{\tilde{\rho}-\rho}=\tilde{O}(\gamma\sqrt{r}).
    \]
\end{corollary}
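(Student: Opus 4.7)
The plan is to combine the near-isometry of $\hsnorm{\cdot}$ on low-rank Hermitian matrices with the optimality of the projection $\tilde\rho$ onto the set of rank-$r$ states, exactly as foreshadowed in the tomography overview of Section~\ref{sec:tomography-overview}.

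First, I would control $\hsnorm{\tilde\rho - \rho}$. Since $\rho$ itself has rank at most $r$, it is a feasible point in the minimization defining $\tilde\rho$, so by optimality
\[
\hsnorm{\tilde\rho - \hat\rho} \;\le\; \hsnorm{\rho - \hat\rho} \;\le\; \tildeO{\gamma},
\]
using \cref{thm:tomography-alg}. Then the triangle inequality gives
\[
\hsnorm{\tilde\rho - \rho} \;\le\; \hsnorm{\tilde\rho - \hat\rho} + \hsnorm{\hat\rho - \rho} \;\le\; 2\,\tildeO{\gamma} \;=\; \tildeO{\gamma}.
\]

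Next, I would convert Hilbert--Schmidt error to trace error using low rank. The key observation is that $\tilde\rho - \rho$ is Hermitian and has rank at most $2r$, because it is the difference of two matrices of rank at most $r$. Writing the spectral decomposition of $\tilde\rho - \rho$ and applying Cauchy--Schwarz to its at most $2r$ nonzero eigenvalues yields
\[
\tracenorm{\tilde\rho - \rho} \;\le\; \sqrt{2r}\cdot\hsnorm{\tilde\rho - \rho} \;\le\; \tildeO{\gamma\sqrt{r}},
\]
which is the claimed bound.

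For the efficient computability claim, I would simply invoke the Eckart--Young--Mirsky theorem: the best rank-$r$ approximation of $\hat\rho$ in $\hsnorm{\cdot}$ is obtained by taking the spectral decomposition of (the Hermitian part of) $\hat\rho$ and truncating to its top-$r$ eigenvalues in absolute value; one can then optionally renormalize and project onto the p.s.d.\ cone to get a valid quantum state, at the cost of at most a constant multiplicative factor that is absorbed into the $\tildeO{\cdot}$. There is no real obstacle here — every step is either a direct triangle-inequality manipulation or a standard linear-algebraic fact — so the only care needed is making sure the rank-$r$ feasibility of $\rho$ and the Hermiticity of $\tilde\rho-\rho$ are stated cleanly before invoking Cauchy--Schwarz on the $2r$ nonzero eigenvalues.
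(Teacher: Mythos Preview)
Your proposal is correct and follows essentially the same argument as the paper: use optimality of $\tilde\rho$ together with the triangle inequality to get $\hsnorm{\tilde\rho-\rho}\le 2\hsnorm{\hat\rho-\rho}$, then apply Cauchy--Schwarz on the at most $2r$ nonzero eigenvalues of $\tilde\rho-\rho$ to pass to trace norm, and finally invoke \cref{thm:tomography-alg}. The extra remarks you make about projecting onto the p.s.d.\ cone and renormalizing are a reasonable elaboration but not needed for the bound as stated.
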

\begin{proof}
    By triangle inequality and definition of $\tilde{\rho}$, we have
    \[
    \hsnorm{\tilde{\rho}-\rho}\le \hsnorm{\hat{\rho}-\rho}+\hsnorm{\tilde{\rho}-\hat{\rho}}\le 2\hsnorm{\hat{\rho}-\rho}
    \]
    Since both $\tilde{\rho}$ and ${\rho}$ have rank at most $r$, $\tilde{\rho}-\rho$ have rank at most $2r$. Let $\lambda_1, \ldots, \lambda_{2r}$ be its non-zero eigenvalues. Then by Cauchy-Schwarz,
    \[
    \tracenorm{\tilde{\rho}-\rho}\le \sqrt{2r}\hsnorm{\tilde{\rho}-\rho}\le 2\sqrt{2r}\hsnorm{\hat{\rho}-\rho}.
    \]
    Applying \cref{thm:tomography-alg} completes the proof.
\end{proof}

\textbf{Algorithm}
The algorithm relies on the sum-of-squares algorithm. Let $\cV =\{\qbit{v_1}, \ldots, \qbit{v_\ns}\}$ be $\gamma$-corrupted samples from $\cD(\rho)$.
We imagine that they are generated from clean samples $\{\qbit{u_1}, \ldots, \qbit{u_\ns}\}$.

We define the following set of constraints on variables $w_1, \ldots, w_n\in \R$ vectors $\qbit{x_1}, \ldots, \qbit{x_n}\in \C^{\dims}$, and Hermitian matrix $Q\in \Herm{\dims}$.

\begin{align}
   \constr_{\cV,\gamma}= \begin{cases}
       \sum_{i=1}^\ns w_i=(1-\gamma)\ns&\\
       w_i^2=w_i, & \forall i\in [\ns]\\
       w_i\cdot (\qbit{v_i}-\qbit{x_i})=0 &\forall i\in [\ns]\\
       \frac{1}{\ns}\sum_{i=1}^\ns \qproj{x_i} =\Sigma\\
       \frac{1}{\ns}\sum_{i=1}^\ns\matdotprod{x_i}{Q}{x_i}^{2t}\le (Ct)^{2t}\Paren{\frac{1}{\ns}\sum_{i=1}^\ns\matdotprod{x_i}{Q}{x_i}^2}^t &\text{Hypercontractivity} \\
       \frac{1}{\ns}\sum_{i=1}^\ns\matdotprod{x_i}{Q}{x_i}^2\le \frac{C}{(\dims+1)^2}(\hsnorm{Q}^2+\Tr[Q]^2). &\text{Bounded second moment}
   \end{cases}
   \label{equ:constraints}
\end{align}
The idea is to select a subset from $\cV$ that satisfies the hyper-contractivity and second moment constraints. 

Note that although written in the form of complex matrices and vectors, the above constraints are all \emph{real} polynomials in terms of the real and imaginary parts of $Q,\qbit{x_i},w_i$. The sum-of-squares algorithm then works as follows,
\begin{algorithm}
\caption{Robust tomography algorithm with optimal error}
    \begin{algorithmic}
        \State \textbf{Input:} $\cV =\{\qbit{v_1}, \ldots, \qbit{v_\ns}\}$ $\gamma$-corrupted samples from $\cD(\rho)$. 
        \State Let $t=\log(1/\gamma)$. Find a level-$O(t)$ pseudo-distribution $\tilde{\zeta}$ that satisfies the constraints $\constr_{\cV,\gamma}$
        \State Output $\hat{\rho}=(\dims+1)\expectDistrOf{\tilde{\zeta}}{\Sigma}-\eye_\dims$.
    \end{algorithmic}
\end{algorithm}

To prove the theorem, first we show that the constraint set $\constr_{\cV,\gamma}$ gives a degree $O(t)$ SoS upper bound of $\Tr[(\Sigma-\Sigma_\rho)Q]$ for all Hermitian $Q$.
\begin{lemma}
\label{lem:sos-constraint-bound}
    Let $\eta = C\sqrt{C}t(2\gamma)^{1-\frac{1}{2t}}$, and $\hat{\Sigma}=\frac{1}{\ns}\sum_{i=1}^\ns\qproj{u_i}$ be the empirical covariance of the clean samples. Then
    \[
    \constr_{\cV,\gamma}\sststile{4t}{\Sigma, Q}\left\{\Tr[(\Sigma-\hat{\Sigma})Q]^{2t}\le \frac{\eta^{2t}}{(\dims+1)^{2t}}(\hsnorm{Q}^2+\Tr[Q]^2)^t \right\}.
    \]
\end{lemma}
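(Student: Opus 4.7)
The core idea is: on the ``matched'' positions where the pseudo-selector $w_i=1$ and the deterministic indicator $w_i^\star\eqdef\indic{u_i=v_i}=1$, the pseudo-variable $\qbit{x_i}$ coincides with $\qbit{u_i}$, so the contribution of those positions to $\Tr[(\Sigma-\hat\Sigma)Q]$ cancels as an SoS identity. The remaining positions form a set of SoS-size at most $2\gamma\ns$, and the factor $(2\gamma)^{2t-1}$ in $\eta^{2t}$ will be produced from this small support via an SoS H\"older step. The residual $2t$-th moments are then controlled by the hypercontractivity and bounded-second-moment axioms (for the $\qbit{x_i}$'s) and by the analogous SoS fact about $\cD(\rho)$ (for the clean $\qbit{u_i}$'s).

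Concretely, I would proceed as follows. Set $T_i\eqdef w_iw_i^\star$; note $T_i^2=T_i$, $T_i\geq 0$, $1-T_i\geq 0$ SoS (using that $w_i^\star\in\{0,1\}$ is a constant and $w_i^2=w_i$), and $\sum_i(1-T_i)\leq 2\gamma\ns$ from combining the corruption bound $\sum_i(1-w_i^\star)\leq\gamma\ns$ with the axiom $\sum_iw_i=(1-\gamma)\ns$. The equality constraint $w_i(\qbit{v_i}-\qbit{x_i})=0$ together with $w_i^2=w_i$ yields the SoS identity $w_i\matdotprod{x_i}{Q}{x_i}=w_i\matdotprod{v_i}{Q}{v_i}$ (obtained by squaring the linear relation, applied once in the bra and once in the ket, and then using idempotence); since $w_i^\star\qbit{u_i}=w_i^\star\qbit{v_i}$ trivially, we also get $w_i^\star\matdotprod{u_i}{Q}{u_i}=w_i^\star\matdotprod{v_i}{Q}{v_i}$. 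Multiplying, one obtains $T_i\matdotprod{x_i}{Q}{x_i}=T_i\matdotprod{u_i}{Q}{u_i}$ SoS, hence
\[
\Tr[(\Sigma-\hat\Sigma)Q]\;=\;\tfrac{1}{\ns}\textstyle\sum_i(1-T_i)\bigl(\matdotprod{x_i}{Q}{x_i}-\matdotprod{u_i}{Q}{u_i}\bigr).
\]
Raise to the $2t$-th power and split with SoS almost-triangle ($r=2$) into two pieces of the form $(\tfrac{1}{\ns}\sum_i(1-T_i)b_i)^{2t}$ with $b_i\in\{\matdotprod{x_i}{Q}{x_i},\matdotprod{u_i}{Q}{u_i}\}$. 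Apply SoS H\"older with $f_i\eqdef(1-T_i)b_i$, $g_i\eqdef(1-T_i)$, $p=2t$: since $(1-T_i)^{2t}=(1-T_i)$ SoS, one has $f_ig_i^{p-1}=(1-T_i)b_i$, $f_i^p=(1-T_i)b_i^{2t}$, $g_i^p=(1-T_i)$, so
\[
\Bigl(\tfrac{1}{\ns}\textstyle\sum_i(1-T_i)b_i\Bigr)^{2t}\,\leq\,\Bigl(\tfrac{1}{\ns}\textstyle\sum_i(1-T_i)b_i^{2t}\Bigr)\Bigl(\tfrac{1}{\ns}\textstyle\sum_i(1-T_i)\Bigr)^{2t-1}\,\leq\,(2\gamma)^{2t-1}\,\tfrac{1}{\ns}\textstyle\sum_ib_i^{2t},
\]
where the last step uses the budget $\sum_i(1-T_i)\leq 2\gamma\ns$ together with $(1-T_i)\leq 1$ and $b_i^{2t}=(b_i^t)^2\geq 0$ SoS. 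Finally, chain the last two axioms of $\constr_{\cV,\gamma}$ to bound $\tfrac{1}{\ns}\sum_i\matdotprod{x_i}{Q}{x_i}^{2t}\leq(Ct)^{2t}\bigl(\tfrac{C}{(\dims+1)^2}(\hsnorm{Q}^2+\Tr[Q]^2)\bigr)^t$, and invoke the same bound with $x_i$ replaced by $u_i$ (this is a deterministic polynomial inequality in $Q$, supplied by the paper's hypercontractivity analysis of $\cD(\rho)$, which holds with high probability over the clean sample draw). Summing the two contributions, applying the previous display, and absorbing the resulting $O(1)^{2t}$ constants into $C$ yields $\Tr[(\Sigma-\hat\Sigma)Q]^{2t}\leq C^{3t}t^{2t}(2\gamma)^{2t-1}(\hsnorm{Q}^2+\Tr[Q]^2)^t/(\dims+1)^{2t}=\eta^{2t}(\hsnorm{Q}^2+\Tr[Q]^2)^t/(\dims+1)^{2t}$.

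\textbf{Main obstacle.} The delicate point is the SoS H\"older application: one must arrange the substitution so that the small-set budget $\sum(1-T_i)\leq 2\gamma\ns$ produces the tight factor $(2\gamma)^{2t-1}$, rather than the useless $\ns^{2t-1}$ that a naive SoS almost-triangle alone would give. This hinges on the idempotence $(1-T_i)^{2t}=(1-T_i)$, which collapses what would otherwise be a fractional H\"older exponent to an integer power. A secondary but essential ingredient, imported from elsewhere in the paper, is that the empirical $2t$-th moments of the clean samples $u_1,\ldots,u_\ns$ satisfy hypercontractivity as an SoS inequality in $Q$ with high probability.
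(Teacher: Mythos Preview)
Your proposal is correct and follows essentially the same route as the paper: your $T_i=w_iw_i^\star$ is exactly the paper's $w_i\indic{u_i=v_i}$, and the paper likewise uses the SoS identity on matched positions, the almost-triangle split, SoS H\"older with idempotence of $1-T_i$ to extract $(2\gamma)^{2t-1}$, and then the hypercontractivity/second-moment axioms (plus \cref{thm:hypercontracivet-sample} for the clean $u_i$'s). Your exposition of why $T_i\matdotprod{x_i}{Q}{x_i}=T_i\matdotprod{u_i}{Q}{u_i}$ holds SoS is in fact a bit more careful than the paper's.
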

\begin{proof}
We first evaluate the expression $\Tr[(\Sigma-\hat{\Sigma})Q]$.
\begin{align*}
    \Tr[(\Sigma-\hat{\Sigma})Q]&=\Tr[\Sigma Q]-\Tr[\hat{\Sigma})Q]\\
    &=\frac{1}{\ns}\sum_{i=1}^\ns\matdotprod{x_i}{Q}{x_i}-\frac{1}{\ns}\sum_{i=1}^\ns\matdotprod{u_i}{Q}{u_i}\\
    &=\frac{1}{\ns}\sum_{i=1}^\ns(1-w_i\indic{\qbit{u_i}=\qbit{v_i}})(\matdotprod{x_i}{Q}{x_i}-\matdotprod{u_i}{Q}{u_i})+\frac{1}{\ns}\sum_{i=1}^\ns w_i\indic{\qbit{u_i}=\qbit{v_i}}(\matdotprod{x_i}{Q}{x_i}-\matdotprod{u_i}{Q}{u_i})\\
    &=\frac{1}{\ns}\sum_{i=1}^\ns(1-w_i\indic{\qbit{u_i}=\qbit{v_i}})(\matdotprod{x_i}{Q}{x_i}-\matdotprod{u_i}{Q}{u_i}).
\end{align*}
The final step is because the 
constraints $\constr_{\cV,\gamma}$ imply $w_i=\indic{\qbit{v_i}=\qbit{x_i}}$, and thus the second summation is 0. We proceed to use the SoS almost triangle inequality.
\begin{align}
    \constr_{\cV,\gamma}\sststile{4t}{\Sigma,Q}\Bigg\{\Tr[(\Sigma-\hat{\Sigma})Q]^{2t}&\le 2^{2t} \Paren{\frac{1}{\ns}\sum_{i=1}^\ns(1-w_i\indic{\qbit{u_i}=\qbit{v_i}})\matdotprod{x_i}{Q}{x_i}}^{2t} \nonumber\\
    &+ 2^{2t}\Paren{\frac{1}{\ns}\sum_{i=1}^\ns(1-w_i\indic{\qbit{u_i}=\qbit{v_i}})\matdotprod{u_i}{Q}{u_i} } ^{2t}\Bigg\}.\label{equ:constraint-proof}
\end{align}
We bound each term separately. Using SoS H\"older's inequality,
\begin{align*}
    \constr_{\cV,\gamma}\sststile{4t}{\Sigma,Q}\Bigg\{\Paren{\frac{1}{\ns}\sum_{i=1}^\ns(1-w_i\indic{\qbit{u_i}=\qbit{v_i}})\matdotprod{x_i}{Q}{x_i}}^{2t}&\le \Paren{\frac{1}{\ns}\sum_{i=1}^\ns(1-w_i\indic{\qbit{u_i}=\qbit{v_i}})^2}^{2t-1}\Paren{\frac{1}{\ns}\sum_{i=1}^\ns\matdotprod{x_i}{Q}{x_i}^{2t}}\\
    &\le(2\gamma)^{2t-1}(Ct)^{2t}\Paren{\frac{1}{\ns}\sum_{i=1}^\ns\matdotprod{x_i}{Q}{x_i}^2}^t\\
    &\le \frac{(2\gamma)^{2t-1}(Ct)^{2t}C^t}{(\dims+1)^{2t}}(\hsnorm{Q}^2+\Tr[Q]^2)^t\Bigg\}.
\end{align*}
The second inequality uses $(1-w_i\indic{\qbit{u_i}=\qbit{v_i}})^2=(1-w_i\indic{\qbit{u_i}=\qbit{v_i}})$ and $\sum_{i\in[\ns]}(1-w_i\indic{\qbit{u_i}=\qbit{v_i}})\le 2\gamma$, along with the hypercontractivity constraint in $\constr_{\cV, \gamma}$. The final step uses the bounded second moment constraint. 

Similarly, we can bound the second term in \eqref{equ:constraint-proof}. We will use the hypercontractivity of uncorrupted samples of the outcome distribution \cref{thm:hypercontracivet-sample}.
\begin{align*}
    \constr_{\cV,\gamma}\sststile{4t}{\Sigma,Q}\Bigg\{\Paren{\frac{1}{\ns}\sum_{i=1}^\ns(1-w_i\indic{\qbit{u_i}=\qbit{v_i}})\matdotprod{u_i}{Q}{u_i}}^{2t}&\le \Paren{\frac{1}{\ns}\sum_{i=1}^\ns(1-w_i\indic{\qbit{u_i}=\qbit{v_i}})^2}^{2t-1}\Paren{\frac{1}{\ns}\sum_{i=1}^\ns\matdotprod{u_i}{Q}{u_i}^{2t}}\\
    &\le \frac{(2\gamma)^{2t-1}(Ct)^{2t}C^t}{(\dims+1)^{2t}}(\hsnorm{Q}^2+\Tr[Q]^2)^t\Bigg\}.
\end{align*}
Combining with \eqref{equ:constraint-proof} proves the lemma.
\end{proof}

From here we can prove \cref{thm:tomography-alg}. Recall from the algorithm that $\tilde{\zeta}$ is the $O(t)$-level pseudo-distribution that satisfies the constraints $\constr_{\cV, \gamma}$. Let $\tilde{\Sigma}=\expectDistrOf{\tilde{\zeta}}{\Sigma}$ where $\Sigma$ is defined in $\constr_{\cV, \gamma}$. From \cref{lem:sos-constraint-bound}, we have that  for any Hermitian $Q\in \Herm{\dims}$,

\[
\expectDistrOf{\tilde{\zeta}}{\Tr[(\Sigma-\hat{\Sigma})Q]^{2t}}\le \Paren{\frac{\eta}{\dims+1}}^{2t}(\hsnorm{Q}^2+\Tr[Q]^2)^t,
\]
using that expectation of squares over pseudo-distributions preserves non-negativity. Then, by Cauchy-Schwarz for pseudo-distribtuions
\begin{align*}
    \Tr[(\expectDistrOf{\tilde{\zeta}}{\Sigma}-\hat{\Sigma})Q]^2&={\expectDistrOf{\tilde{\zeta}}{\Tr[(\Sigma-\hat{\Sigma})Q]}}^2\le \expectDistrOf{\tilde{\zeta}}{\Tr[(\Sigma-\hat{\Sigma})Q]^2}.
\end{align*}
Repeatedly applying Cauchy-Schwarz, and without loss of generality assume $t$ is a power of 2, we have
\[
{\expectDistrOf{\tilde{\zeta}}{\Tr[(\Sigma-\hat{\Sigma})Q]}}^{2t}\le \expectDistrOf{\tilde{\zeta}}{\Tr[(\Sigma-\hat{\Sigma})Q]^{2t}}
\]
Thus, 
\[
 \Tr[(\expectDistrOf{\tilde{\zeta}}{\Sigma}-\hat{\Sigma})Q]^2\le \frac{\eta^2}{(\dims+1)^2}(\hsnorm{Q}^2+\Tr[Q]^2).
\]
We then set $Q=\expectDistrOf{\tilde{\zeta}}{\Sigma}-\hat{\Sigma}$. Note that $\Tr[Q]=\expectDistrOf{\tilde{\zeta}}{\Tr[\Sigma]}-\Tr[\hat{\Sigma}]=0$. We thus have 
\[
\hsnorm{\expectDistrOf{\tilde{\zeta}}{\Sigma}-\hat{\Sigma}}^4\le \Paren{\frac{\eta\hsnorm{\expectDistrOf{\tilde{\zeta}}{\Sigma}-\hat{\Sigma}}}{\dims+1}}^2\implies \hsnorm{\expectDistrOf{\tilde{\zeta}}{\Sigma}-\hat{\Sigma}}\le \frac{\eta}{\dims+1}.
\]
Recall that $\eta = O(t(2\gamma)^{1-\frac{1}{2t}})$. Taking $t$ to be the smallest power of $2$ at least $\log(1/\gamma)$ proves that 
\[
\hsnorm{\expectDistrOf{\tilde{\zeta}}{\Sigma}-\hat{\Sigma}}=\tildeO{\gamma/\dims}.
\]
Using that for $\ns=\Omega(\dims^{\log(1/\gamma)}/\gamma^2)$, $\hsnorm{\hat{\Sigma}-\Sigma_\rho}=O(\gamma/\dims)$, we have our estimate $\expectDistrOf{\tilde{\zeta}}{\Sigma}$ is also close to the true $\Sigma_{\rho}$. Finally observe that $\rho = (\dims+1)\Sigma_\rho +\eye_\dims$ and $\hat{\rho}=(\dims+1)\hat{\Sigma} +\eye_\dims$ proves \cref{thm:tomography-alg}.

\subsection{Hypercontractivity of uniform POVM outcomes}
We first upper bound the covariance of $\qproj{v}$.

\begin{lemma}
    Let $\qbit{v}$ be drawn from the measure $\dims\matdotprod{v}{\rho}{v}\dm v$ where $\dm v$ is the Haar measure on the complex unit sphere. Then 
    \[
    \expectDistrOf{v\sim\cD(\rho)}{\qproj{v}^{\otimes2}}=\frac{1}{(\dims+1)(\dims+2)}{(\eye_{\dims}\otimes\eye_\dims+F)(\eye_\dims^{\otimes 2}+\eye_\dims \otimes\rho + \rho\otimes \eye_\dims)}\preceq \frac{6}{\dims^2}\eye_\dims\otimes\eye_\dims.
    \]
    \label{lem:unifor-povm-2-moment}
\end{lemma}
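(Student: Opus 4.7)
The plan is to reduce the computation to the third-moment Haar integral in \eqref{equ:haar-k-moment} and then identify the resulting sum of permutation operators. Since the density of $\cD(\rho)$ is $\dims\matdotprod{v}{\rho}{v}\,\dm v$ and $\matdotprod{v}{\rho}{v}=\Tr[\qproj{v}\rho]$, the key trick I would use is to absorb the scalar weight into an extra tensor factor via the partial-trace identity
\[
\matdotprod{v}{\rho}{v}\,\qproj{v}^{\otimes 2}=\Tr_3\!\left[\qproj{v}^{\otimes 3}\,(\eye_\dims\otimes\eye_\dims\otimes\rho)\right].
\]
Pushing the expectation inside the partial trace and applying \eqref{equ:haar-k-moment} with $k=3$ (using $\binom{\dims+2}{3}=\dims(\dims+1)(\dims+2)/6$ so that the factor $\dims$ from the density cancels against the binomial) yields
\[
\expectDistrOf{v\sim\cD(\rho)}{\qproj{v}^{\otimes 2}}=\frac{1}{(\dims+1)(\dims+2)}\sum_{\pi\in\Sim_3}\Tr_3\!\left[P_\pi(\eye_\dims\otimes\eye_\dims\otimes\rho)\right].
\]

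Next I would enumerate the six contributions. The identity and the transposition $(12)$ act only on the first two legs, so tracing the third (which contributes $\Tr[\rho]=1$) gives $\eye_\dims\otimes\eye_\dims$ and $F$, respectively. For the transpositions $(13)$ and $(23)$, the standard partial-trace identity $\Tr_2[F(\eye\otimes X)]=X$ (after a trivial relabelling of tensor factors) yields $\rho\otimes\eye_\dims$ and $\eye_\dims\otimes\rho$. The two $3$-cycles require the only nontrivial bookkeeping: a short matrix-element computation gives $\langle ab|\Tr_3[P_{(123)}(\eye\otimes\eye\otimes\rho)]|cd\rangle=\rho_{ad}\delta_{bc}$ and analogously $\rho_{bc}\delta_{ad}$ for $(132)$, identifying these two terms as $F(\eye_\dims\otimes\rho)$ and $F(\rho\otimes\eye_\dims)$. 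Summing and grouping the three $F$-free terms and the three $F$-prefixed terms produces the factorization $(\eye_\dims\otimes\eye_\dims+F)(\eye_\dims\otimes\eye_\dims+\rho\otimes\eye_\dims+\eye_\dims\otimes\rho)$, proving the equality.

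For the operator inequality, I would exploit that the two factors in this product commute: the swap identity $F(A\otimes B)=(B\otimes A)F$ implies $[F,\rho\otimes\eye_\dims+\eye_\dims\otimes\rho]=0$, and $F$ trivially commutes with $\eye_\dims\otimes\eye_\dims$. Both factors are positive semidefinite: $\eye_\dims\otimes\eye_\dims+F=2P_{\text{Sym}^{(2)}}$ has spectrum $\{0,2\}$, hence $\eye_\dims\otimes\eye_\dims+F\preceq 2\,\eye_\dims\otimes\eye_\dims$; and since the eigenvalues of $\rho$ lie in $[0,1]$, simultaneously diagonalizing shows that $\rho\otimes\eye_\dims+\eye_\dims\otimes\rho$ has eigenvalues $\lambda_i+\lambda_j\in[0,2]$, giving $\eye_\dims\otimes\eye_\dims+\rho\otimes\eye_\dims+\eye_\dims\otimes\rho\preceq 3\,\eye_\dims\otimes\eye_\dims$. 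For commuting PSD operators, $A\preceq c\eye$ with $[A,B]=0$ and $B\succeq 0$ implies $AB=B^{1/2}AB^{1/2}\preceq cB$, so the two bounds multiply to give $(\eye_\dims\otimes\eye_\dims+F)(\eye_\dims\otimes\eye_\dims+\rho\otimes\eye_\dims+\eye_\dims\otimes\rho)\preceq 6\,\eye_\dims\otimes\eye_\dims$. Dividing by $(\dims+1)(\dims+2)\ge \dims^2$ yields the desired $\tfrac{6}{\dims^2}\eye_\dims\otimes\eye_\dims$.

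The main obstacle is the index-level bookkeeping that identifies the two $3$-cycle contributions as $F(\eye_\dims\otimes\rho)$ and $F(\rho\otimes\eye_\dims)$; everything else reduces to one-line applications of Schur--Weyl duality and the standard swap-trace identities already present in the preliminaries, plus the elementary observation that commuting PSD upper bounds may be multiplied.
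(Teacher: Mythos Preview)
Your proposal is correct and follows essentially the same route as the paper: both reduce to the third Haar moment via a partial-trace identity (the paper places $\rho$ on the first leg and traces there, you place it on the third, which is an immaterial relabelling), enumerate the six permutation contributions to obtain the factorized form, and then bound the product using that $F$ commutes with $\rho\otimes\eye_\dims+\eye_\dims\otimes\rho$ together with the spectral bounds $\eye+F\preceq 2\eye^{\otimes 2}$ and $\eye^{\otimes 2}+\rho\otimes\eye+\eye\otimes\rho\preceq 3\eye^{\otimes 2}$. Your matrix-element computation for the $3$-cycles is a slightly more direct alternative to the paper's eigendecomposition of $\rho$, but the logic is otherwise identical.
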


\begin{proof}
    We first compute the second moment of $\qproj{v}$ using \cref{equ:haar-k-moment}.
    \begin{align*}
        \expectDistrOf{v\sim\cD(\rho)}{\qproj{v}^{\otimes2}}&=\int\dims\qproj{v}^{\otimes 2}\matdotprod{v}{\rho}{v}\dm v\\
        &=\dims\Tr_1\left[\Paren{\int\qproj{v}^{\otimes3}\dm v}\rho\otimes \eye_{\dims}^{\otimes 2}\right]\\
        &=\dims\binom{\dims+2}{3}\Tr_1\left[P_{\text{Sym}}^{(k)}\rho\otimes \eye_{\dims}^{\otimes 2}\right]\\
        &=\frac{6}{(\dims+1)(\dims+2)}\Tr_1\left[P_{\text{Sym}}^{(k)}\rho\otimes \eye_{\dims}^{\otimes 2}\right].
    \end{align*}
Here $\Tr_1$ denotes partial trace over the first component, i.e. $\Tr_1[A\otimes B\otimes C]=\Tr[A]B\otimes C$.  
% Moreover,  we have 
% \[
% P_{\text{Sym}}^{(k)}\rho\otimes \eye_\dims^{\otimes2}=\rho\otimes \eye_\dims^{\otimes2}P_{\text{Sym}}^{(k)},
% \]
% and thus $\rho\otimes \eye_\dims^{\otimes2}P_{\text{Sym}}^{(k)}$ is Hermitian. Since $\rho\otimes \eye_\dims^{\otimes2}\preceq \eye_\dims^{\otimes 3}$ and $P_{\text{Sym}}^{(k)}$

Next we compute $\Tr_1[P_\pi\rho\otimes \eye_\dims\otimes \eye_\dims]$ for every $\pi\in \Sim_3$. Let $\rho =\sum_{i=1}^\dims \lambda_i\qproj{\psi_i}$ where $\lambda_i\ge 0$, $\sum_{i=1}^\dims\lambda_i=1$, $\{\qbit{\psi_i}\}_{i=1}^\dims$ are orthonormal. 
\begin{align*}
    \Tr_1[P_{\pi^{-1}} \rho \otimes\eye_\dims \otimes\eye_\dims] &= \Tr\left[P_{\pi^{-1}} \sum_{i_1=1}^\dims\sum_{i_2=1}^\dims\sum_{i_3=1}^\dims \lambda_{i_1}\qproj{\psi_{i_1}}\otimes \qproj{\psi_{i_2}}\otimes\qproj{\psi_{i_3}}\right]\\
    &=\sum_{i_1=1}^\dims\sum_{i_2=1}^\dims\sum_{i_3=1}^\dims \lambda_{i_1}\Tr_1[\qbit{\psi_{i_{\pi(1)}}}\qadjoint{\psi_{i_1}}\otimes \qbit{\psi_{i_{\pi(2)}}}\qadjoint{\psi_{i_2}}\otimes\qbit{\psi_{i_{\pi(3)}}}\qadjoint{\psi_{i_3}}]\\
    &=\sum_{i_1=1}^\dims\sum_{i_2=1}^\dims\sum_{i_3=1}^\dims\lambda_{i_1}\indic{i_{\pi(1)}=i_1}\qbit{\psi_{i_{\pi(2)}}}\qadjoint{\psi_{i_2}}\otimes\qbit{\psi_{i_{\pi(3)}}}\qadjoint{\psi_{i_3}}
\end{align*}

Denote $F=P_{21}\in \C^{\dims^2\times\dims^2}$ as the flip operator, i.e. $F\qbit{x}\otimes\qbit{y}=\qbit{y}\otimes\qbit{x}$. We can evaluate the above expression for each $\pi \in \Sim_3$. Note that permutation operators over $(\C^{\dims})^{\otimes3}$ can be expressed using $F$ and $\eye_\dims$,
\begin{align*}
    P_{123}&=\eye_\dims^{\otimes3},\quad P_{213}=F\otimes\eye_\dims, \quad P_{132}=\eye_\dims \otimes F, \\
    P_{231} &= P_{213}P_{132}, \quad P_{312}=P_{132}P_{213},\\
    P_{321}&=P_{213}P_{132}P_{213}.
\end{align*}
Therefore,
\begin{align*}
    \Tr_1[P_{123}(\rho\otimes\eye_\dims^{\otimes2})]&=\Tr_1[\rho\otimes\eye_\dims^{\otimes2}]=\eye_\dims^{\otimes2}\\
    \Tr_1[P_{132}(\rho\otimes\eye_\dims^{\otimes2})]&=\Tr_1[(\eye_\dims \otimes F)(\rho\otimes \eye_\dims^{\otimes 2})]=F\\
    \Tr_1[P_{213}(\rho\otimes\eye_\dims^{\otimes2})]&=\sum_{i_1=1}^\dims\sum_{i_3=1}^\dims\lambda_{i_1}\qproj{\psi_{i_1}}\otimes\qbit{\psi_{i_{3}}}\qadjoint{\psi_{i_3}}=\rho\otimes\eye_\dims
\end{align*}
By symmetry, $\Tr_1[P_{321}(\rho\otimes\eye_\dims^{\otimes2})]=\eye_\dims \otimes \rho$. We are left with the single-cycle permutations,
\begin{align*}
    \Tr_1[P_{231}]&=\sum_{i_1=1}^\dims\sum_{i_2=1}^\dims\sum_{i_3=1}^\dims\lambda_{i_1}\indic{i_{3}=i_1}\qbit{\psi_{i_{1}}}\qadjoint{\psi_{i_2}}\otimes\qbit{\psi_{i_{2}}}\qadjoint{\psi_{i_3}}\\
    &=\sum_{i_1=1}^\dims\sum_{i_2=1}^\dims \lambda_{i_1}\qbit{\psi_{i_{1}}}\qadjoint{\psi_{i_2}}\otimes\qbit{\psi_{i_{2}}}\qadjoint{\psi_{i_1}}\\
    &= F(\eye_\dims\otimes \rho) = (\rho\otimes \eye_\dims)F.
\end{align*}
Likewise,
\[
\Tr_1[P_{312}]=F(\rho\otimes\eye_\dims)=(\eye_\dims \otimes \rho)F.
\]
Combining all the parts, 
\begin{align}
    \expectDistrOf{v\sim\cD(\rho)}{\qproj{v}^{\otimes2}}&=\frac{1}{(\dims+1)(\dims+2)}\Paren{\eye_{\dims}\otimes\eye_\dims+F+\rho\otimes \eye_\dims + \eye_\dims \otimes \rho + F(\eye_\dims \otimes\rho + \rho\otimes \eye_\dims)}\nonumber\\
    &=\frac{1}{(\dims+1)(\dims+2)}{(\eye_{\dims}\otimes\eye_\dims+F)(\eye_\dims^{\otimes 2}+\eye_\dims \otimes\rho + \rho\otimes \eye_\dims)}.\label{equ:unif-povm-2-moment}
\end{align}

As a sanity check, since $F(\eye_\dims \otimes\rho + \rho\otimes \eye_\dims)=(\eye_\dims \otimes\rho + \rho\otimes \eye_\dims)F$, so the above expression is a Hermitian matrix. Since $\rho\preceq \eye_\dims $, we have $0\preceq \eye_\dims^{\otimes 2}+\eye_\dims \otimes\rho + \rho\otimes \eye_\dims\preceq 3\eye_\dims ^{\otimes 2}$. Furthermore, $F$ is also Hermitian with $F^2=\eye_\dims^{\otimes 2}$, so $0\preceq\eye_\dims^{\otimes 2}+F\preceq 2\eye_\dims^{\otimes 2}$. Therefore we have 
\[
(\eye_{\dims}\otimes\eye_\dims+F)(\eye_\dims^{\otimes 2}+\eye_\dims \otimes\rho + \rho\otimes \eye_\dims)\preceq 6\eye_\dims^{\otimes2}.
\]
This is because if two Hermitian matrices commute, then they can be diagonalized using the same unitary matrix. Thus we prove the lemma.
\end{proof}
\begin{corollary}[Certifiably bounded second moment of observables]
    For all Hermitian matrix $M\in\Herm{\dims}$, the distribution $\cD(\rho)$ satisfies
    \[
    \expectDistrOf{v\sim \cD(\rho)}{\matdotprod{v}{M}{v}^2}\le \frac{4(\Tr[M^2]+\Tr[M]^2)}{(\dims+1)(\dims+2)},\quad \variance{v\sim \cD(\rho) }{\matdotprod{v}{M}{v}}\le \frac{3}{\dims^2}\hsnorm{M}^2.
    \]
    Moreover, there exists a degree-2 sum-of-squares proof in $M$ (treating the real and imaginary parts as indeterminate).
\end{corollary}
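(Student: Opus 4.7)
First I would compute $\expectDistrOf{v\sim\cD(\rho)}{\matdotprod{v}{M}{v}^2}$ in closed form by contracting the second-moment formula of Lemma~\ref{lem:unifor-povm-2-moment} against $M\otimes M$. Using $\matdotprod{v}{M}{v}^2 = \Tr[\qproj{v}^{\otimes 2}(M\otimes M)]$ together with the trace identities $\Tr[(A\otimes B)(C\otimes D)] = \Tr[AC]\Tr[BD]$ and $\Tr[F(A\otimes B)] = \Tr[AB]$, the six tensor terms collapse into
\[
(\dims+1)(\dims+2)\,\expectDistrOf{v\sim\cD(\rho)}{\matdotprod{v}{M}{v}^2} = \Tr[M]^2 + 2\Tr[M]\Tr[\rho M] + \Tr[M^2] + 2\Tr[\rho M^2].
\]

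Next, I would certify the first displayed inequality by rewriting the gap $4(\Tr[M^2]+\Tr[M]^2) - (\dims+1)(\dims+2)\expectDistrOf{v}{\matdotprod{v}{M}{v}^2} = 3\Tr[M^2] + 3\Tr[M]^2 - 2\Tr[M]\Tr[\rho M] - 2\Tr[\rho M^2]$ as an explicit sum of three SoS ingredients. The first is the trivial square $(\Tr[M]-\Tr[\rho M])^2$. The second is $\Tr[(\eye_\dims-\rho)M^2]$, which after diagonalizing $\rho = \sum_i\lambda_i\qproj{\psi_i}$ equals $\sum_i(1-\lambda_i)\sum_j |\matdotprod{\psi_j}{M}{\psi_i}|^2$ and is manifestly SoS since $1-\lambda_i\ge 0$. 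The third is the $\rho$-weighted variance $\Tr[\rho M^2]-\Tr[\rho M]^2$, which in the eigenbasis becomes $\sum_{i<j}\lambda_i\lambda_j(a_i-a_j)^2 + \sum_i\lambda_i\sum_{j\ne i}|\matdotprod{\psi_j}{M}{\psi_i}|^2$ with $a_i\eqdef\matdotprod{\psi_i}{M}{\psi_i}$, and is SoS since $\lambda_i\lambda_j\ge 0$. A direct algebraic check shows that $3\Tr[(\eye_\dims-\rho)M^2] + (\Tr[\rho M^2]-\Tr[\rho M]^2) + (\Tr[M]-\Tr[\rho M])^2 + 2\Tr[M]^2$ matches the gap exactly, yielding a degree-2 SoS certificate in the real and imaginary parts of $M$.

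For the variance bound, I would use $\expectDistrOf{v}{\matdotprod{v}{M}{v}} = \Tr[\Sigma_\rho M] = (\Tr[M]+\Tr[\rho M])/(\dims+1)$ from \eqref{equ:sigma-rho} and subtract its square from the closed-form second moment. Putting everything over a common denominator $(\dims+1)^2(\dims+2)$ and using $2\Tr[M]\Tr[\rho M] \ge -(\Tr[M]^2+\Tr[\rho M]^2)$ (to handle the indefinite cross term) together with $\Tr[\rho M^2]\le\Tr[M^2]$, the $\Tr[M]^2$-type contributions telescope away and the numerator is bounded by $3(\dims+1)\Tr[M^2]$. This gives $\variance{v\sim\cD(\rho)}{\matdotprod{v}{M}{v}} \le 3\hsnorm{M}^2/((\dims+1)(\dims+2)) \le 3\hsnorm{M}^2/\dims^2$.

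The main obstacle I anticipate is bookkeeping to ensure that every intermediate step is genuinely SoS as a polynomial in the real entries of $M$, rather than merely non-negative via a scalar or matrix inequality. The spectral decomposition of $\rho$ is what makes this go through: it converts the operator-level fact $\eye_\dims-\rho\succeq 0$ and the Cauchy--Schwarz-with-$\rho$-weight inequality into explicit squared linear forms in $M$'s entries, both of which a degree-2 sum-of-squares proof can directly certify.
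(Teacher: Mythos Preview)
Your proposal is correct and follows the same overall architecture as the paper: compute the closed form $(\dims+1)(\dims+2)\expectDistrOf{v}{\matdotprod{v}{M}{v}^2}=\Tr[M]^2+\Tr[M^2]+2\Tr[M]\Tr[M\rho]+2\Tr[M^2\rho]$, then certify the gap as a degree-$2$ SoS polynomial in $M$, and finally bound the variance by dropping the negative square terms.

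The one genuine difference is in how the SoS certificate for the first inequality is obtained. The paper diagonalizes $M=\sum_i\lambda_i\qproj{u_i}$, observes that the gap equals $\Tr[P\rho]$ for the matrix $P=(3\Tr[M]^2+3\Tr[M^2])\eye_\dims-2\Tr[M]M-2M^2$, checks that each eigenvalue $2\Tr[M]^2+(\Tr[M]-\lambda_i)^2+3\Tr[M^2]-3\lambda_i^2$ of $P$ is nonnegative, and then invokes the abstract fact that a nonnegative degree-$2$ homogeneous polynomial is automatically SoS. You instead diagonalize $\rho=\sum_i\lambda_i\qproj{\psi_i}$ and produce an explicit four-term decomposition $2\Tr[M]^2+(\Tr[M]-\Tr[\rho M])^2+3\Tr[(\eye_\dims-\rho)M^2]+(\Tr[\rho M^2]-\Tr[\rho M]^2)$, each term of which you exhibit directly as a sum of squares in the entries of $M$. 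Your route is more constructive (one can read off the squared linear forms), while the paper's is shorter but leans on the quadratic-form-to-SoS correspondence. For the variance bound the two arguments are essentially identical; your use of $-2\Tr[M]\Tr[\rho M]\le\Tr[M]^2+\Tr[\rho M]^2$ is just the expansion of the paper's step of dropping $-(\Tr[M]+\Tr[M\rho])^2$.
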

\begin{proof}
    We directly work on the closed-form expression from \cref{lem:unifor-povm-2-moment}. Using \eqref{equ:perm-trace}, we have
    \begin{align*}
        \expectDistrOf{v\sim \cD(\rho)}{\matdotprod{v}{M}{v}^2}&= \frac{1}{(\dims+1)(\dims+2)}\Paren{\Tr[M]^2+\Tr[M^2]+2\Tr[M]\Tr[M\rho]+2\Tr[M^2\rho]}.
    \end{align*}
    It suffices to prove that for all $M$
    \[
    3\Tr[M]^2+3\Tr[M^2]-2\Tr[M]\Tr[M\rho]-2\Tr[M^2\rho]\ge 0.
    \]
    Since it is a degree-2 homogeneous polynomial, it must be SoS. We consider the following matrix,
    \[
    P=(3\Tr[M]^2+3\Tr[M^2])\eye_\dims -2\Tr[M]M-2\Tr[M^2].
    \]
    We just need to prove it is p.s.d, then $\Tr[P\rho]\ge 0$ for all quantum state $\rho$. Let $M=\sum_{i=1}^\dims \lambda_i\qproj{u_i}$, then
    \begin{align*}
        P=\sum_{i=1}^\dims(3\Tr[M]^2+3\Tr[M^2]-2\Tr[M]\lambda_i-2\lambda_i^2)\qproj{u_i}.
    \end{align*}
    We further prove that each eigenvalue is non-negative.
    \begin{align*}
        3\Tr[M]^2+3\Tr[M^2]-2\Tr[M]\lambda_i-2\lambda_i^2=2\Tr[M]^2+ (\Tr[M]-\lambda_i)^2+3\Tr[M^2]-3\lambda_i^2\ge 0.
    \end{align*}
    In the final step, we used that $\Tr[M^2]=\sum_{i}\lambda_i^2$. This completes the proof of the first upper bound.
    
    Next we prove the variance bound, using \eqref{equ:sigma-rho},
    \[
    \expectDistrOf{v\sim \cD(\rho)}{\matdotprod{v}{M}{v}}=\frac{\Tr[M]+\Tr[M\rho]}{\dims+1}.
    \]
    Therefore,
    \begin{align*}
        \variance{v\sim \cD(\rho) }{\matdotprod{v}{M}{v}}&=\expectDistrOf{v\sim \cD(\rho)}{\matdotprod{v}{M}{v}^2}-\expectDistrOf{v\sim \cD(\rho)}{\matdotprod{v}{M}{v}}^2\\
        &=\frac{(\dims+1)(\Tr[M^2]+2\Tr[M^2\rho]-\Tr[M\rho]^2)-(\Tr[M]+\Tr[M\rho])^2}{(\dims+1)^2(\dims+2)}\\
        &\le \frac{\Tr[M^2]+2\Tr[M^2\rho]}{(\dims+1)(\dims+2)}\\
        &\le \frac{3\Tr[M^2]}{(\dims+1)(\dims+2)}.
    \end{align*}
    The final inequality is due to $0\preceq\rho\preceq\eye_\dims$, and thus $\Tr[M^2(\eye_\dims -\rho)]\ge 0$, which is a degree-2 homogeneous polynomial in $M$. This completes the proof.
\end{proof}

\begin{theorem}[Certifiable hypercontractivity of uniform POVM] For all integers $h\ge 2$, $\cD(\rho)$ satisfies $C$-hypercontractivity, i.e., there exists constant $C>0$ for all Hermitian matrix $M$, 
\[
\expectDistrOf{v\sim\cD(\rho)}{\matdotprod{v}{M}{v}^{h}}^2\le \frac{(Ch)^{2h}}{\dims^{2h}}(\Tr[M^2]+\Tr[M]^2)^h\le  (\sqrt{2}Ch)^{2h}{\expectDistrOf{v\sim \cD(\rho)}{\matdotprod{v}{M}{v}^2}}^{h}.
\]
Moreover, there exists a degree $2h$ sum-of-squares proof, i.e.,  RHS$-$LHS is a sum-of-squares polynomial in terms of the real and imaginary parts of $H$.
   \label{thm:hypercontractive} 
\end{theorem}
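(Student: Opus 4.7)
The plan is to expand $\expectDistrOf{v\sim\cD(\rho)}{\matdotprod{v}{M}{v}^h}$ in closed form via the Haar moment formula, bound the resulting scalar polynomial term by term, and finally lift that bound to a degree-$2h$ sum-of-squares certificate in the entries of $M$. Because $\cD(\rho)$ has density $\dims\matdotprod{v}{\rho}{v}\dm v$ against the Haar measure on the complex sphere, I would write
\[
\expectDistrOf{v\sim\cD(\rho)}{\matdotprod{v}{M}{v}^h}=\dims\cdot\Tr\!\Bigl[\Bigl(\textstyle\int \qproj{v}^{\otimes(h+1)}\dm v\Bigr)(M^{\otimes h}\otimes\rho)\Bigr],
\]
apply \eqref{equ:haar-k-moment} to replace the integral by $\binom{\dims+h}{h+1}^{-1}\tfrac{1}{(h+1)!}\sum_{\pi\in\Sim_{h+1}}P_\pi$, and use \eqref{equ:perm-trace} so that each $\Tr[P_\pi(M^{\otimes h}\otimes\rho)]$ factors along the cycles of $\pi$: the cycle $c^*$ containing position $h+1$ contributes $\Tr[M^{|c^*|-1}\rho]$, and every other cycle $c$ contributes $\Tr[M^{|c|}]$.

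The scalar bound now follows from three elementary estimates: $|\Tr[M^k]|\le \Tr[M^2]^{k/2}$ for $k\ge 2$ (the $\ell^k$-vs-$\ell^2$ inequality on eigenvalues), $|\Tr[M]|\le(\Tr[M^2]+\Tr[M]^2)^{1/2}$ trivially, and $|\Tr[M^k\rho]|\le\opnorm{M}^k\le \Tr[M^2]^{k/2}$. Multiplying across cycles gives $|\Tr[P_\pi(M^{\otimes h}\otimes\rho)]|\le(\Tr[M^2]+\Tr[M]^2)^{h/2}$ since the exponents add to $h/2$. Summing $(h+1)!$ such terms, simplifying the prefactor $\dims/\bigl(\binom{\dims+h}{h+1}(h+1)!\bigr)=1/\prod_{k=1}^h(\dims+k)\le \dims^{-h}$, and squaring then yields the desired $(Ch)^{2h}/\dims^{2h}$ rate after absorbing $((h+1)!)^2\le(Ch)^{2h}$ into the constant.

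The main obstacle is upgrading this to an honest SoS proof in $M$. The workhorse is the SoS Cauchy--Schwarz identity for Hermitian matrices,
\[
\Tr[A^2]\Tr[B^2]-\Tr[AB]^2\;=\;\tfrac12\,\hsnorm{A\otimes B-B\otimes A}^2,
\]
manifestly a sum of squares in the real/imaginary parts of $A,B$. Taking $A=M,\ B=M^{k-1}$ gives a degree-$2k$ SoS proof of $\Tr[M^k]^2\le\Tr[M^2]\,\Tr[M^{2k-2}]$, and a short induction on $k$ upgrades this to $\sststile{2k}{M}\{\Tr[M^k]^2\le \Tr[M^2]^k\}$. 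The factor involving $\rho$ is dealt with by one further Cauchy--Schwarz step $\Tr[M^k\rho]^2\le \Tr[M^{2k}]\Tr[\rho^2]\le \Tr[M^{2k}]$, sidestepping any need to work with $\rho^{1/2}$. For the squared sum, I would apply SoS Cauchy--Schwarz once more to get $(\sum_\pi t_\pi)^2\le(h+1)!\sum_\pi t_\pi^2$ and combine the per-permutation SoS bounds using the standard multiplication and addition rules, yielding a degree-$2h$ SoS certificate for the first inequality. The second inequality in the statement is an immediate corollary of the first together with the bounded-second-moment bound already proved in the preceding corollary, so it needs no additional SoS argument.
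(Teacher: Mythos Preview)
Your approach is essentially identical to the paper's: the same Haar moment expansion, the same cycle decomposition of $\Tr[P_\pi(M^{\otimes h}\otimes\rho)]$, the same SoS Cauchy--Schwarz/Lagrange identity to get $\sststile{2k}{M}\{\Tr[M^k]^2\le\Tr[M^2]^k\}$, the same $\Tr[M^k\rho]^2\le\Tr[M^{2k}]\Tr[\rho^2]$ step, and the same $(\sum_\pi t_\pi)^2\le (h+1)!\sum_\pi t_\pi^2$ (which the paper calls the SoS almost triangle inequality).

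There is one genuine gap in your final paragraph. The second inequality in the theorem reads
\[
\frac{(Ch)^{2h}}{\dims^{2h}}(\Tr[M^2]+\Tr[M]^2)^h\;\le\;(\sqrt{2}Ch)^{2h}\,\expectDistrOf{v\sim\cD(\rho)}{\matdotprod{v}{M}{v}^2}^h,
\]
which requires a \emph{lower} bound on the second moment, namely $\expectDistrOf{}{\matdotprod{v}{M}{v}^2}\ge\frac{1}{2(\dims+1)(\dims+2)}(\Tr[M^2]+\Tr[M]^2)$. The ``preceding corollary'' you cite proves only the \emph{upper} bound $\expectDistrOf{}{\matdotprod{v}{M}{v}^2}\le\frac{4}{(\dims+1)(\dims+2)}(\Tr[M^2]+\Tr[M]^2)$, which goes the wrong direction. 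The paper handles this with a separate lemma (\cref{lem:sos-2moment-lower}) showing that $\Tr[M]^2\eye_\dims+4\Tr[M]M+4M^2=(\Tr[M]\eye_\dims+2M)^2\succeq 0$, whence $(\dims+1)(\dims+2)\expectDistrOf{}{\matdotprod{v}{M}{v}^2}-\tfrac12(\Tr[M^2]+\Tr[M]^2)$ is a degree-$2$ SoS in $M$. You need to supply this (short) argument; without it the second inequality is unsupported.
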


\begin{proof}

First we give a sum-of-squares lower bound for the second moment,
\begin{lemma}
Let $M\in\Herm{\dims}$ be the indeterminate. There is a degree-2 SoS proof of
    \[
    (\dims+1)(\dims+2)\expectDistrOf{v\sim \cD(\rho)}{\matdotprod{v}{M}{v}^2}\ge \frac{1}{2}(\Tr[M^2]+\Tr[M]^2).
    \]
    \label{lem:sos-2moment-lower}
\end{lemma}
\begin{proof}
    Recall the expression for the second moment of $\matdotprod{v}{M}{v}$,
    \begin{align*}
        &\quad (\dims+1)(\dims+2)\expectDistrOf{v\sim \cD(\rho)}{\matdotprod{v}{M}{v}^2}-\frac{1}{2}(\Tr[M^2]+\Tr[M]^2)\\
        &= \frac12\Paren{\Tr[M]^2+\Tr[M^2]+4\Tr[M]\Tr[M\rho]+4\Tr[M^2\rho]}\\
        &=\frac12\Paren{\Tr[M^2]^2+\Tr[(\Tr[M]^2\eye_\dims+4\Tr[M]M+4M^2)\rho] }.
    \end{align*}
    $\Tr[M^2]$ is already a degree-2 SoS. Thus it suffices to prove that $P\eqdef\Tr[M]\eye_\dims+4\Tr[M]M+4M^2$ is p.s.d. Let $M=\sum_{i=1}^\dims \lambda_i \qproj{u_i}$. Then,
    \[
    P=\sum_{i=1}^\dims (\Tr[M]^2+4\Tr[M]\lambda_i+4\lambda_i^2)\qproj{u_i}=\sum_{i=1}^\dims (\Tr[M]+2\lambda_i)^2\qproj{u_i}.
    \]
    Thus $P$ is p.s.d., and $\Tr[P\rho]\ge 0$ for all $M, \rho$. Note that $\Tr[P\rho]$ is a degree-2 homogeneous polynomial in $M$, so it must be a sum-of-squares polynomial. The proof is complete.
\end{proof}

Then it suffices to obtain an SoS upper bound of the $h$-order moment. Using \eqref{equ:haar-k-moment} ,
\begin{align*}
    \expectDistrOf{v\sim \cD(\rho)}{\matdotprod{v}{M}{v}^h}&=\binom{\dims + h }{h+1}^{-1}\frac{1}{h!}\sum_{\pi\in \Sim_{h+1}}\Tr\left[P_{\pi}(M^{\otimes h}\otimes \rho)\right]\\
    &=\frac{1}{(\dims+1)\cdots(\dims+h)}\sum_{\pi\in \Sim_{h+1}}\Tr\left[P_{\pi}(M^{\otimes h}\otimes \rho)\right].
\end{align*}
Using SoS almost triangle inequality,
\begin{equation}
    \sststile{2h}{M}\left\{
(d+1)^h\expectDistrOf{v\sim \cD(\rho)}{\matdotprod{v}{M}{v}^h}^2\le (h+1)! \sum_{\pi\in \Sim_{h+1}} \Tr\left[P_{\pi}(M^{\otimes h}\otimes \rho)\right]^2 \right\}.
\label{equ:sos-h-moment-triangle}
\end{equation}

Thus it suffices to prove that $\Tr\left[P_{\pi}(M^{\otimes h}\otimes \rho)\right]^2\le (\Tr[M^2]+\Tr[M]^2)^{h}$.
For permutations $\tau\in \Sim_k$ with only one cycle, by $\eqref{equ:perm-trace}$ and symmetry,
\[
\Tr[P_\tau M^{\otimes k}\otimes\rho ]=\Tr[M^k\rho].
\]
We decompose $P_\pi$ according to cycles. Let $c_{h+1}$ be the cycle that contains the element $h+1$ (which corresponds to $\rho$), then
\[
\Tr\left[P_{\pi}(M^{\otimes h}\otimes \rho)\right]^2=\Tr\left[M^{|c_{h+1}|}\rho\right]^2\prod_{c\ne c_{h+1}}\Tr[M^{|c|}]^2.
\]
We use the Lagrange's identity, which is a sum-of-squares proof for Cauch-Schwarz. For all complex numbers $a_1, \ldots, a_k, b_1, \ldots, b_k$,
\begin{equation}
    \Paren{\sum_{i=1}^k|a_i|^2}\Paren{\sum_{i=1}^k|b_i|^2}-\Paren{\sum_{i=1}^ka_i\overline{b_i}}^2=\sum_{i=1}^{k-1}\sum_{j=i+1}^k|a_ib_j-a_jb_i|^2.
\end{equation}
For all integers $k$, 
\[
\sststile{2k}{M}\left\{\Tr\left[M^{2k}\right]\le \Tr[M^2]^k\right\}.
\]
Therefore,
\[
\sststile{2|c_{h+1}|}{M}\left\{\Tr\left[M^{|c_{h+1}|}\rho\right]^2\le \Tr\left[M^{2|c_{h+1}|}\right]\Tr[\rho^2]\le\Tr\left[M^{2|c_{h+1}|}\right]\le \Tr[M^2]^{|c_{h+1}|}\right\}.
\]
In the penultimate step we used that $\Tr[\rho^2]\le 1$. In addition, for $|c|\ge 3$,
\[
\sststile{2|c|}{M}\left\{\Tr\left[M^{|c|}\right]^2\le \Tr\left[M^{2|c|-2}\right]\Tr[M^2]\le \Tr[M^2]^{|c|}\right\}.
\]
For $|c|=1$ and $2$, 
\[
\sststile{2|c|}{M}\left\{Tr[M^{|c|}]\le \Tr[M^2]+\Tr[M]^2\right\}.
\]
Combining all the parts using multiplication rule of SoS proof system, we have
\[
\sststile{2\sum_{c}|c|=2h}{M}\left\{\Tr\left[P_{\pi}(M^{\otimes h}\otimes \rho)\right]^2\le (\Tr[M^2]+\Tr[M]^2)^{h}\right\}.
\]
Using \eqref{equ:sos-h-moment-triangle} and the addition rule,
\[
    \sststile{2h}{M}\left\{
(d+1)^h\expectDistrOf{v\sim \cD(\rho)}{\matdotprod{v}{M}{v}^h}^2\le ((h+1)!)^2 (\Tr[M^2]+\Tr[M]^2)^{h}\right\}.
\]
Applying \cref{lem:sos-2moment-lower} proves the theorem with $C=2$.

\end{proof}

\begin{theorem}[Certifiable hypercontractivity under sampling]
    \label{thm:hypercontracivet-sample}
    Let $h\ge 2$ and $\cV=\{\qbit{v_1}, \ldots, \qbit{v_n}\}$ be i.i.d. samples from $\cD(\rho)$ and $\ns = \Omega(h\dims^{8h})$. Then with probability at least $1-1/\poly{\ns}$, the uniform distribution over the samples is $h$-certifiably $c$-contractive for some constant $c$, 
    \[
    \expectDistrOf{v\sim \cV}{\matdotprod{v}{M}{v}^h}^2\le  \left(\frac{ch}{\dims}\right)^{2h}(\Tr[M^2]+\Tr[M]^2)^{h}.
    \]
\end{theorem}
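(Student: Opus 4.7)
The plan is to transfer the certifiable hypercontractivity of \cref{thm:hypercontractive} from the population distribution $\cD(\rho)$ to the empirical distribution on $\cV$ by combining coefficient-wise concentration via Hoeffding with the SoS proof system. First, I expand the degree-$h$ polynomial in $M$ in the monomial basis of Hermitian $\dims\times\dims$ matrices as
\[
\matdotprod{v}{M}{v}^h \;=\; \sum_{\alpha:\,|\alpha|=h} c_\alpha(v)\,M^\alpha,
\]
where $\alpha$ ranges over multi-indices on entry-pairs $(j,k)$ with $|\alpha|=h$ and $N := |\{\alpha\}|\le \dims^{2h}$. For any unit $\qbit{v}\in\Sp^{\dims}$, the coefficient $c_\alpha(v)$ is a multinomial of degree $2h$ in the entries of $v$ with $|c_\alpha(v)|\le h!$. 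Set $c_\alpha := \expectDistrOf{v\sim\cD(\rho)}{c_\alpha(v)}$ and $\hat c_\alpha := \tfrac{1}{\ns}\sum_{i=1}^{\ns}c_\alpha(v_i)$.

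Next, I apply Hoeffding separately to the real and imaginary parts of each $\hat c_\alpha - c_\alpha$ and union-bound over the $\le 2\dims^{2h}$ scalar averages. With $\ns = \Omega(h\dims^{8h})$, this yields $|\hat c_\alpha - c_\alpha|\le \epsilon$ for all $\alpha$ simultaneously with probability $1 - 1/\text{poly}(\ns)$, for some $\epsilon \lesssim \dims^{-3h}$ (the $(h!)^2$ factor from Hoeffding and the $h\log\dims$ factor from the union bound are absorbed into the polynomial slack in $\ns$, since $(h!)^2 \le h^{2h} \ll \dims^{2h}$ in the relevant regime). By the SoS almost-triangle inequality,
\[
\expectDistrOf{v\sim\cV}{\matdotprod{v}{M}{v}^h}^2 \;\le\; 2\expectDistrOf{v\sim\cD(\rho)}{\matdotprod{v}{M}{v}^h}^2 \;+\; 2\,D(M)^2,
\]
where $D(M) := \sum_\alpha(\hat c_\alpha - c_\alpha)\,M^\alpha$. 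The first summand is bounded by $2(Ch/\dims)^{2h}(\Tr[M^2]+\Tr[M]^2)^h$ by \cref{thm:hypercontractive}.

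Finally, I control the deviation polynomial via SoS Cauchy-Schwarz:
\[
D(M)^2 \;\le\; \Paren{\sum_\alpha(\hat c_\alpha-c_\alpha)^2}\Paren{\sum_\alpha (M^\alpha)^2} \;\le\; N^2\epsilon^2\,\Tr[M^2]^h,
\]
where $(M^\alpha)^2 \le \Tr[M^2]^h$ follows by iterating the degree-$2$ SoS fact $M_{jk}^2 \le \Tr[M^2]$ for Hermitian $M$ — which is a sum of squares since $\Tr[M^2] - M_{jk}^2 - M_{kj}^2 = \sum_{(j',k')\ne(j,k),(k,j)}|M_{j'k'}|^2$ — and then applying the SoS multiplication rule $h$ times. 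With $N^2\epsilon^2 \le (h/\dims)^{2h}$ guaranteed by the choice of $\epsilon$, the two summands combine into the claimed bound up to a larger absolute constant $c$. The main obstacle is the bookkeeping: one must verify that the $h$-fold SoS multiplication produces a clean degree-$2h$ SoS proof of $\prod_i M_{j_ik_i}^2 \le \Tr[M^2]^h$ whose constants compose correctly with the Hoeffding deviation bound, and that the multinomial constants and complex-number subtleties in $c_\alpha(v)$ do not blow up the concentration rate beyond what the $\ns=\Omega(h\dims^{8h})$ slack absorbs.
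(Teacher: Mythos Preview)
Your approach is correct in substance but takes a genuinely different route from the paper. The paper packages the entire concentration step as an \emph{operator norm} bound: it cites a matrix-concentration lemma (for subgaussian distributions) to get
\[
\opnorm{\expectDistrOf{v\sim\cV}{\qproj{v}^{\otimes h}}-\expectDistrOf{v\sim\cD(\rho)}{\qproj{v}^{\otimes h}}}\le \frac{(ch)^h}{\dims^h},
\]
handles the complex-to-real passage explicitly by writing $\qbit{v}=A\qbit{u}$ with $A=(\eye_\dims,\img\,\eye_\dims)$, and then invokes the SoS operator-norm fact to convert this into a degree-$2h$ SoS bound on $\Tr[M^{\otimes h}P]$. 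Your version instead concentrates the $\le\dims^{2h}$ monomial coefficients one-by-one via scalar Hoeffding and then bounds the deviation polynomial with SoS Cauchy--Schwarz. The paper's route is shorter and imports the heavy lifting from a black-box matrix-concentration result; yours is more elementary (it only needs boundedness of $c_\alpha(v)$, not subgaussianity) and actually ends up with some slack in the sample count, at the price of the bookkeeping you flag.

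One point that is not merely bookkeeping: your SoS Cauchy--Schwarz step and the identity ``$\Tr[M^2]-M_{jk}^2-M_{kj}^2=\sum_{(j',k')\ne(j,k),(k,j)}|M_{j'k'}|^2$'' are written in the complex entries $M_{jk}$, but for Hermitian $M$ one has $M_{jk}^2+M_{kj}^2=2(\Real(M_{jk})^2-\Img(M_{jk})^2)$, not $2|M_{jk}|^2$, so the displayed identity is false and $(M^\alpha)^2$ is not even a real polynomial. The fix is exactly what the paper does implicitly and you mention in passing: expand $\matdotprod{v}{M}{v}$ in the $\dims^2$ \emph{real} indeterminates (diagonal entries, real and imaginary parts of the strict upper triangle). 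Then each coefficient $c_\alpha(v)$ is real with $|c_\alpha(v)|\le 2^h h!$, the SoS Cauchy--Schwarz applies verbatim, and $m_r^2\le \Tr[M^2]$ is a genuine degree-$2$ SoS for each real indeterminate $m_r$, so the $h$-fold multiplication goes through. With that change all your estimates survive (the extra $2^h$ is absorbed by the $\dims^{8h}$ slack), and the argument is complete.
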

\begin{proof}
   The proof steps in \cite[Lemma 8.3]{bakshi2020outlierrobust} yields the following result,
   \begin{lemma}[From {\cite[Lemma 8.3]{bakshi2020outlierrobust}}]
   \label{lem:opnorm-samples}
       Let $\cD$ be 1-subgaussian in $\R^\dims$ and $\mathcal{S}$ be $n=\Omega((h\dims)^{8h}$ i.i.d. samples from $\cD$, then there exists constant $c$, with probability at least $1-1/\poly(\ns)$,
       \[
       \opnorm{\expectDistrOf{x\sim \mathcal{S}}{\qproj{x}^{\otimes h}}-\expectDistrOf{x\sim \cD}{\qproj{x}^{\otimes h}}}\le (ch)^h.
       \]
   \end{lemma}
   We can express complex vectors $\qbit{v}=\qbit{x}+\img \qbit{y}$ where $\qbit{x},\qbit{y}\in  \R^\dims$. Denote $\qbit{u}\in\R^{2\dims}$ as the concatenation of $\qbit{x},\qbit{y}$. Then we have $\qbit{v}=A \qbit{u}$ where $A=(\eye_\dims , \img\eye_\dims)$. 
   
   Note that $\cD(\rho)$ is $1\sqrt{\dims}$-subgaussian. Therefore, $\qbit{u}$ is also $1/\sqrt{\dims}$-subgaussian when $\qbit{v}\sim \cD(\rho)$.
   
   We are interested in the operator norm of 
   \[
   P\eqdef \expectDistrOf{v\sim \cV}{\qproj{v}^{\otimes h}}-\expectDistrOf{v\sim \cD(\rho)}{\qproj{v}^{\otimes h}}
   \]
   Note that $P=A^\dagger Q A$ where $Q=\expectDistrOf{v\sim \cV}{\qproj{u}^{\otimes h}}-\expectDistrOf{v\sim \cD(\rho)}{\qproj{u}^{\otimes h}}$, which by \cref{lem:opnorm-samples} has operator norm at most $(ch)^h/\dims^h$. Since $\opnorm{A}\le 2$, by sub-multiplicavity of operator norms, we have $\opnorm{P}\le 4\opnorm{Q}\le 4(ch)^h/\dims^h$. Therefore using SoS operator norm bound (see \cite[Eq (8.2)]{bakshi2020outlierrobust} )
   \begin{align*}
       \sststile{2h}{M}\left\{\Tr[M^{\otimes h}P]\le \frac{4(ch)^h}{\dims^h} \hsnorm{M}^h \right\}.
   \end{align*}
   Rearanging the terms using the definition of $P$,
   \begin{align*}
       \sststile{2h}{M}\left\{\Tr[M^{\otimes h} \expectDistrOf{v\sim \cV}{\qproj{v}^{\otimes h}}]\le \frac{4(ch)^h}{\dims^h} \hsnorm{M}^h +  \Tr[M^{\otimes h}\expectDistrOf{v\sim \cD(\rho)}{\qproj{v}^{\otimes h}}]\right\}.
   \end{align*}
   Using $h$-certifiable hypercontractivity of $\cD(\rho)$ proves the lemma.

\end{proof}

\section{Robust state testing algorithm}\label{sec:testing}
The algorithm idea is to first randomly sample a basis measurement and apply discrete distribution testing to the measurement outcomes. If the testing algorithm is adversarially robust, then so would be our quantum state testing algorithm.

\begin{theorem}[{\cite[Theorem 2]{acharya21manipulation}}]
\label{thm:robust-distr-testing}
Let $\q$ be a known distribution over a domain of size $\ab$. Given $\ns$ $\gamma$-corrupted samples $\bx=(x_1, \ldots, x_{\ns})$ from a discrete distribution $\p$ , there exists an algorithm \\\texttt{RobustDistrTesting}($\bx, \gamma, \q$) that can test whether $\norm{\p-\q}_1>\eps$ or $\p=\q$ with probability at least 0.9 and an accuracy of
\[
\eps=\bigO{\frac{\ab^{1/4}}{\sqrt{\ns}}+\gamma+\sqrt{\frac{\ab\gamma}{\ns}}+\Paren{\frac{\ab}{\ns}}^{1/4}\sqrt{\gamma}}.
\]
\end{theorem}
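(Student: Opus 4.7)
The plan is to reduce robust identity testing to a structured uniformity-type test and then analyze a truncated $\chi^2$-style statistic under adversarial contamination. First, I would apply a standard Goldreich-style reduction that converts identity testing against an arbitrary known $\q$ into a uniformity-flavored test on an enlarged domain of size $O(\ab)$, preserving the $\ell_1$ gap up to constant factors. Since the reduction only maps adversarial samples to adversarial samples, it also preserves the $\gamma$-corruption budget. This lets us focus on the analytically cleaner setting where the target probabilities lie in a narrow range and the variance calculations become tractable.

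Second, I would compute a statistic of the form $Z = \sum_i \frac{(X_i - \ns\q_i)^2 - X_i}{\q_i}$ on the $\gamma$-corrupted sample, with contributions from bins whose $\q_i$ falls below a chosen threshold $\tau$ zeroed out. The analysis decomposes into two parts. Part (a) is the classical bias--variance analysis under clean i.i.d.\ samples from $\p$: a Chebyshev argument on $Z$ yields distinguishability whenever $\|\p-\q\|_1 \gtrsim \ab^{1/4}/\sqrt{\ns}$, recovering the first term in the stated bound. Part (b) is a deterministic perturbation bound: flipping a single sample from bin $i$ to bin $j$ changes $Z$ by at most $O(1/\tau)$, so $\gamma\ns$ adversarial swaps shift $Z$ by at most $O(\gamma\ns/\tau)$, while also inflating the variance by an amount proportional to $\gamma\ns/\tau^2$. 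The two contamination regimes come from comparing these quantities against the clean gap and the clean standard deviation of $Z$; optimizing the truncation threshold $\tau$ balances them and yields exactly the terms $\gamma$, $\sqrt{\ab\gamma/\ns}$, and $(\ab/\ns)^{1/4}\sqrt{\gamma}$.

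The main obstacle will be a tight analysis of the adversary's optimal corruption. A naive adversary concentrates all corruptions into the single lightest (untruncated) bin to maximize the $\chi^2$-shift, so one must choose $\tau$ carefully to suppress this attack without also erasing the genuine signal coming from $\|\p-\q\|_1 > \eps$. A clean way to handle this is to observe that the empirical distribution of the corrupted sample lies within total variation $\gamma$ of the empirical distribution of the clean sample, so the clean bias--variance bounds transfer up to an additive error that is itself controllable. Once this balance is established, setting the decision threshold at the midpoint of the null and alternative expectations of $Z$ and applying Chebyshev's inequality to the sum of sampling noise and adversarial shift yields correctness with probability at least $0.9$, giving the stated four-term accuracy bound.
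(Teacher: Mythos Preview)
The paper does not prove this statement at all: it is quoted verbatim as \cite[Theorem~2]{acharya21manipulation} and used as a black box inside the robust testing algorithm of Section~\ref{sec:testing}. There is therefore no ``paper's own proof'' to compare your proposal against; the paper simply imports the result and invokes it once, plugging in the outcome domain size $\ab=\dims$ and rescaling the accuracy via \cref{lem:l1-output-distr}.

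Your sketch is a reasonable outline of how the cited result is actually established in \cite{acharya21manipulation} (Goldreich-type reduction to near-uniform reference, truncated $\chi^2$ statistic, sensitivity-to-corruption analysis balanced by the truncation threshold), and nothing in it is obviously wrong as a high-level plan. But for the purposes of this paper no such argument is needed or expected: the statement functions purely as an imported tool, and reproducing its proof here would be out of scope.
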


We now describe the robust quantum state testing algorithm in \cref{alg:robust-testing}.
\begin{algorithm}
\caption{Robust quantum state testing algorithm}
\label{alg:robust-testing}
    \begin{algorithmic}
        \State \textbf{Input}: $\ns$ copies of $\rho\in\C^{\dims\times\dims}$, 
        known state description $\qkn$, corruption parameter $\gamma$.
        \State \textbf{Output}: {\isthestate} if $\rho=\qkn$, {\notthestate} if $\tracenorm{\rho-\qkn}$ is large.
        \State Sample a Haar-random unitary $U=[u_1, \ldots, u_{\dims}]$. 
        \State Apply basis measurement $\POVM_U=\{\qproj{u_i}\}_{i=1}^{\dims}$ to all copies.
        \State Let $\p_{\rho}^{U}$ be the outcome distribution measuring $\rho$ with $\POVM_U$ (resp. $\p_{\qkn}^{U}$).
        \State Obtain $\gamma$-corrupted outcomes $\bx=(x_1, \ldots, x_{\ns})$.
        \State \Return \texttt{RobustDistrTesting}($\bx, \gamma, \p_{\qkn}^U$)
    \end{algorithmic}
\end{algorithm}

\begin{theorem}
    \label{thm:robust-testing-alg}
    Under $\gamma$-adversarial corruption, with $\ns$ copies of $\rho$, \cref{alg:robust-testing} can test if $\rho=\qkn$ or $\tracenorm{\rho-\qkn}>\eps$ with probability at least 0.8 and an accuracy of 
    \[
    \eps=\bigO{\frac{\dims^{3/4}}{\sqrt{\ns}}+\gamma\sqrt{\dims}+\frac{\dims^{3/4}\sqrt{\gamma}}{\ns^{1/4}}}\log\dims.
    \]
\end{theorem}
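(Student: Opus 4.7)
}

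The plan is to reduce adversarially robust quantum state testing to adversarially robust discrete distribution testing, invoking \cref{thm:robust-distr-testing} in a black-box manner. The key quantitative step will be to lift a guarantee in trace distance on the state to a guarantee in total variation distance on the outcome distribution of a Haar-random basis measurement.

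First I would establish completeness. If $\rho=\qkn$, then Born's rule gives $\p_{\rho}^{U}=\p_{\qkn}^{U}$ pointwise for every $U$. Even after $\gamma$-fraction adversarial corruption, the input to \texttt{RobustDistrTesting} is a $\gamma$-corrupted sample from $\p_{\qkn}^{U}$, so by \cref{thm:robust-distr-testing} (instantiated with $\ab=\dims$) the tester outputs \isthestate{} with probability at least $0.9$.

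The crux is soundness. Assume $\tracenorm{\rho-\qkn}>\eps$. I need to show that with constant probability over $U\sim\Haar{\dims}$,
\[
\norm{\p_{\rho}^{U}-\p_{\qkn}^{U}}_1 \;\ge\; \eps' \;\eqdef\; c\left(\frac{\dims^{1/4}}{\sqrt{\ns}}+\gamma+\sqrt{\frac{\dims\gamma}{\ns}}+\Paren{\frac{\dims}{\ns}}^{1/4}\sqrt{\gamma}\right),
\]
i.e., the threshold of \cref{thm:robust-distr-testing} with $\ab=\dims$, conditional on which the distribution tester returns \notthestate{} w.p. $\ge 0.9$. The bridge I plan to use is the quantitative lemma that for any traceless Hermitian $\Delta=\rho-\qkn$,
\[
\expectDistrOf{U\sim\Haar{\dims}}{\sum_{i=1}^{\dims}\abs{\matdotprod{u_i}{\Delta}{u_i}}} \;\ge\; \Omega\!\left(\frac{\tracenorm{\Delta}}{\sqrt{\dims}\log\dims}\right),
\]
whose proof goes through the spectral decomposition $\Delta=\sum_j\lambda_j\qproj{\psi_j}$ and the Haar-distribution of $|\langle u_i|\psi_j\rangle|^2$, combined with a dyadic decomposition of the eigenvalues (this is where the $\log\dims$ shows up). I would complement this with a Haar second-moment computation via \eqref{equ:haar-k-moment} to upper-bound $\expectDistrOf{U}{\norm{\p_{\rho}^{U}-\p_{\qkn}^{U}}_1^2}$ by $O(\tracenorm{\Delta}^2/\dims)$, so that a Paley--Zygmund inequality converts the expectation lower bound into a $\Omega(1)$-probability event, at the cost of another constant factor.

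Combining, picking $U$ succeeds with probability at least, say, $0.95$, and conditional on this good event the distribution tester rejects w.p. $\ge 0.9$. A union bound yields total success probability at least $0.8$. Rearranging the threshold inequality $\tracenorm{\rho-\qkn}/(\sqrt{\dims}\log\dims)\gtrsim \eps'$ for $\tracenorm{\rho-\qkn}$ and absorbing the $\sqrt{\dims\gamma/\ns}$ term into the other three (e.g., by AM-GM between $\gamma$ and $\dims/\ns$), one recovers
\[
\eps \;=\; \bigO{\frac{\dims^{3/4}}{\sqrt{\ns}}+\gamma\sqrt{\dims}+\frac{\dims^{3/4}\sqrt{\gamma}}{\ns^{1/4}}}\log\dims,
\]
as claimed.

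The main obstacle is the Haar-integral lemma $\expectDistrOf{U}{\norm{\p_{\rho}^{U}-\p_{\qkn}^{U}}_1}\gtrsim \tracenorm{\Delta}/(\sqrt{\dims}\log\dims)$ together with its second-moment companion. The expected value calculation for a single coordinate $\expectDistrOf{U}{|\matdotprod{u_i}{\Delta}{u_i}|}$ is standard (random projection of spectrum onto the complex sphere), but summing the absolute values and obtaining the sharp $\sqrt{\dims}$ scaling requires careful handling of the correlation between coordinates; I expect this to be the technically delicate part, likely needing either a Khintchine-style inequality for Haar-random unitaries or a dyadic eigenvalue argument combined with anti-concentration of $|\qdotprod{u_i}{\psi_j}|^2$.
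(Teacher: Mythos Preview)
Your high-level reduction---Haar-random basis measurement, then robust discrete distribution testing via \cref{thm:robust-distr-testing} with $\ab=\dims$, and the conversion factor $\sqrt{\dims}\log\dims$---matches the paper exactly. The divergence is in how you prove that $\norm{\p_{\rho}^U-\p_{\qkn}^U}_1$ is large with constant probability over $U$. The paper does \emph{not} use a first-moment/Paley--Zygmund argument; instead it applies H\"older, $\norm{\cdot}_2^2\le\norm{\cdot}_1\norm{\cdot}_\infty$, and controls each factor separately: with probability $\ge 0.95$ one has $\norm{\cdot}_2\ge c\,\hsnorm{\Delta}/\sqrt{\dims}$ (quoting prior work on random-basis measurements), and with probability $\ge 0.95$ one has $\norm{\cdot}_\infty\le C(\log\dims)\hsnorm{\Delta}/\dims$ (by bounding $\expectDistrOf{U}{\norm{\cdot}_p^p}$ for $p\approx\log\dims$ via \eqref{equ:haar-trace-moment} and Markov). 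Combining yields $\norm{\cdot}_1\ge c'\hsnorm{\Delta}/\log\dims$ with probability $\ge 0.9$, and only then is Cauchy--Schwarz used to pass from $\hsnorm{\Delta}$ to $\tracenorm{\Delta}/\sqrt{\dims}$. The $\log\dims$ thus comes from the $\ell_\infty$ upper bound, not from a dyadic eigenvalue decomposition.

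Your Paley--Zygmund plan, as written, does not close. First, the second-moment bound $\expectDistrOf{U}{\norm{\p_{\rho}^U-\p_{\qkn}^U}_1^2}=O(\tracenorm{\Delta}^2/\dims)$ is false for low-rank $\Delta$: take $\Delta=\lambda(\qproj{\psi_1}-\qproj{\psi_2})$, where a direct computation gives $\expectDistrOf{U}{\norm{\cdot}_1^2}=\Theta(\hsnorm{\Delta}^2)=\Theta(\tracenorm{\Delta}^2)$, not $\tracenorm{\Delta}^2/\dims$. Second, even granting both of your stated bounds, Paley--Zygmund yields probability only $\gtrsim\bigl(\tracenorm{\Delta}^2/(\dims\log^2\dims)\bigr)\big/\bigl(\tracenorm{\Delta}^2/\dims\bigr)=1/\log^2\dims$, not the constant you need for the algorithm as stated (which samples a single $U$). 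The approach \emph{is} salvageable if you switch to Hilbert--Schmidt norm throughout---prove $\expectDistrOf{U}{\norm{\cdot}_1}\ge c\,\hsnorm{\Delta}$ with \emph{no} $\log\dims$ (via anti-concentration of a linear form in a Dirichlet vector) together with $\expectDistrOf{U}{\norm{\cdot}_1^2}\le C\hsnorm{\Delta}^2$---but that requires a different technical lemma than the one you sketched, and in any case the paper's H\"older route is cleaner.
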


The error matches the lower bound up to log factors for $\gamma\le \dims^{1/4}/\sqrt{\ns}$ and  $\gamma\ge \sqrt{\dims/\ns}$. 
\begin{proof}
    The key is to argue that when $\rho$ and $\qkn$ are far apart, then the $\ell_1$ distance between the outcome distributions $\p_{\rho}^U,\p_{\qkn}^U$ should also be far with a high constant probability.  
    Formally, we have the following lemma,
    \begin{lemma}
    \label{lem:l1-output-distr}
        There exists a constant universal $c$ such that with probability at least 0.9 over a Haar random unitary $U$, 
        \[
        \norm{\p_{\rho}^U-\p_{\qkn}^U}_1\ge c\cdot\frac{\hsnorm{\rho-\qkn}}{\log\dims}.
        \]
    \end{lemma}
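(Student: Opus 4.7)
The plan is to set $\Delta \eqdef \rho - \qkn$, which is Hermitian and traceless, and write
\[
\norm{\p_\rho^U-\p_\qkn^U}_1 = \sum_{i=1}^\dims |a_i|,\qquad a_i \eqdef \matdotprod{u_i}{\Delta}{u_i}.
\]
I would lower bound this sum via the elementary inequality
\[
\sum_{i=1}^\dims |a_i| \ge \frac{\sum_{i=1}^\dims a_i^2}{\max_{i}|a_i|},
\]
then control each factor with high probability over the Haar-random $U$. Since each column $u_i$ of $U$ has marginal $\Haar{\dims}$ on $\Sp^\dims$, the tools are direct Haar moment computations plus Lévy/Hanson--Wright concentration.

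For the numerator, the second-moment identity \eqref{equ:haar-trace-moment} with $\Tr[\Delta]=0$ gives $\EE[a_i^2]=\hsnorm{\Delta}^2/(\dims(\dims+1))$, and hence $\EE[\sum_i a_i^2]=\hsnorm{\Delta}^2/(\dims+1)$. The fourth-moment version together with $\Tr[\Delta^4]\le \hsnorm{\Delta}^4$ yields $\EE[a_i^4]=O(\hsnorm{\Delta}^4/\dims^4)$, hence $\EE[\sum_i a_i^4]=O(\hsnorm{\Delta}^4/\dims^3)$. Combining with the Cauchy--Schwarz bound $\EE[(\sum_i a_i^2)^2]\le \dims\,\EE[\sum_i a_i^4]=O(\hsnorm{\Delta}^4/\dims^2)$, a Paley--Zygmund argument produces
\[
\probaOf{\sum_{i=1}^\dims a_i^2 \ge \tfrac{\hsnorm{\Delta}^2}{2(\dims+1)}}\ge \Omega(1).
\]
To upgrade this to probability $0.95$, I would compute Haar moments of $\sum_i a_i^2$ up to order $p=O(1)$ using Schur--Weyl/Weingarten in the spirit of \eqref{equ:haar-k-moment}, showing $\EE[(\sum_i a_i^2)^p]\le O(1)^p\,(\EE[\sum_i a_i^2])^p$, then invoking a higher-moment Paley--Zygmund.

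For the denominator, for fixed $i$ the random variable $a_i$ is distributed as $\matdotprod{u}{\Delta}{u}$ for $u\sim\Haar{\dims}$. Writing $u=g/\norm{g}$ with $g$ a standard complex Gaussian and applying Hanson--Wright to $g^\dagger \Delta g$ (together with the usual concentration of $\norm{g}^2$ around $\dims$) gives a tail bound of the form
\[
\probaOf{|a_i|\ge t/\dims}\le 2\exp\Paren{-c\min\Paren{t^2/\hsnorm{\Delta}^2,\,t/\opnorm{\Delta}}}.
\]
Choosing $t=K\opnorm{\Delta}\log\dims$ for a large enough absolute constant $K$ and taking a union bound over $i\in[\dims]$ yields
\[
\max_i|a_i|\le C\,\frac{\opnorm{\Delta}\log\dims}{\dims}\quad\text{with probability at least }0.99.
\]

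Combining the two events (which together hold with probability at least $0.9$), and using the trivial inequality $\opnorm{\Delta}\le \hsnorm{\Delta}$,
\[
\sum_{i=1}^\dims|a_i|\ge \frac{\hsnorm{\Delta}^2/(2\dims)}{C\opnorm{\Delta}\log\dims/\dims} =\Omega\Paren{\frac{\hsnorm{\Delta}^2}{\opnorm{\Delta}\log\dims}}\ge \Omega\Paren{\frac{\hsnorm{\Delta}}{\log\dims}},
\]
which is the claimed bound. The main technical obstacle is the first step: boosting the Paley--Zygmund lower bound on $\sum_i a_i^2$ to probability $0.9$. Direct Lévy concentration on $\U(\dims)$ is weak here because the Hilbert--Schmidt Lipschitz constant of $U\mapsto\sqrt{\sum_i a_i^2}$ scales like $\opnorm{\Delta}$, which in the worst case (e.g.\ $\Delta$ nearly rank-$1$) is comparable to $\hsnorm{\Delta}$, so one only obtains constant probability; pushing to $0.9$ requires the higher Haar-moment estimates sketched above. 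The $1/\log\dims$ slack in the statement precisely accommodates the resulting looseness.
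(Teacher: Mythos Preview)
Your overall route is exactly the paper's: write $\|\p_\rho^U-\p_{\qkn}^U\|_1\ge \|\p_\rho^U-\p_{\qkn}^U\|_2^2/\|\p_\rho^U-\p_{\qkn}^U\|_\infty$, lower-bound the $\ell_2$ factor, and upper-bound the $\ell_\infty$ factor. The paper handles the $\ell_\infty$ part by a direct $p$-th moment computation on the Haar measure (\cref{lem:expect-lp-norm}) followed by Markov with $p=2\lfloor\ln\dims\rfloor$; your Hanson--Wright route is a legitimate alternative. Two issues, one minor and one substantive.

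\textbf{Minor (the $\ell_\infty$ step).} With the stated choice $t=K\opnorm{\Delta}\log\dims$, the sub-Gaussian branch of your Hanson--Wright tail is $t^2/\hsnorm{\Delta}^2=K^2(\opnorm{\Delta}/\hsnorm{\Delta})^2\log^2\dims$, which can be $o(1)$ (e.g.\ when $\Delta$ has all eigenvalues of equal magnitude, so $\opnorm{\Delta}/\hsnorm{\Delta}\asymp 1/\sqrt{\dims}$); the union bound then fails, and the intermediate claim $\max_i|a_i|\le C\opnorm{\Delta}\log\dims/\dims$ is not obtained. The fix is to take $t=K\hsnorm{\Delta}\log\dims$, which makes both branches at least $K\log\dims$. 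Since you immediately apply $\opnorm{\Delta}\le\hsnorm{\Delta}$ afterwards, your final inequality survives; only the intermediate statement is too strong.

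\textbf{Substantive (boosting the $\ell_2$ lower bound to probability $0.95$).} Your crude bound $\EE\big[(\sum_i a_i^2)^2\big]\le \dims\,\EE\big[\sum_i a_i^4\big]$ gives a second-to-first-moment ratio of order a fixed constant (roughly $9$ in the worst case), so Paley--Zygmund only yields probability $\Omega(1)$, far from $0.95$. The ``higher-moment Paley--Zygmund'' you propose cannot repair this: from $\EE[X^p]\le C^p(\EE X)^p$ one gets at best $\Pr[X>\theta\EE X]\ge ((1-\theta)/C)^{p/(p-1)}$, which stays bounded away from $1$ whenever $C>1$, no matter how large $p$ is. What is actually needed is that the \emph{variance} vanishes, $\Var\big[\sum_i a_i^2\big]=o\big((\EE[\sum_i a_i^2])^2\big)$, so Chebyshev does the job. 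That requires controlling the cross terms $\EE[a_i^2a_j^2]$ for $i\ne j$, i.e.\ a genuine two-column Haar (Weingarten) computation rather than single-column moments. The paper does not redo this but cites \cite[Lemma~6.4]{ChenLO22instance} and \cite[Section~C.2.2]{liu2024restricted}, where exactly this variance bound is established; that is the missing ingredient in your sketch.
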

    The proof involves computing the Haar integral, which is deferred to \cref{sec:lem:l1-output-distr}. When $\tracenorm{\rho-\qkn}>\eps$, by Cauchy-Schwarz inequality we have $\hsnorm{\rho-\qkn}>\eps/\sqrt{\dims}$, and thus with probability at least 0.9 over a Haar-random $U$,
    \[
    \norm{\p_{\rho}^U-\p_{\qkn}^U}_1\ge c\cdot\frac{\eps}{\sqrt{\dims}\log\dims}.
    \]
    As a result, it suffices to run a distribution testing algorithm with $\p_{\qkn}^U$ as the reference distribution. Since the outcome domain size is $\dims$, by \cref{thm:robust-distr-testing}, with $\ns$ $\gamma$-corrupted samples, we can achieve an $\ell_1$ accuracy of 
    \[
    \eps'=\bigO{\frac{\dims^{1/4}}{\sqrt{\ns}}+\gamma+\sqrt{\frac{\dims\gamma}{\ns}}+\Paren{\frac{\dims}{\ns}}^{1/4}\sqrt{\gamma}}.
    \]
    Thus we can set $\eps'=\frac{\eps}{\sqrt{\dims}\log\dims}$ and obtain
    \[
    \eps=\bigO{\frac{\dims^{3/4}}{\sqrt{\ns}}+\gamma\sqrt{\dims}+\dims\sqrt{\frac{\gamma}{\ns}}+\frac{\dims^{3/4}\sqrt{\gamma}}{\ns^{1/4}}}\log\dims.
    \]
    The lower bound \cref{thm:robust-lower} suggests that we must have $\gamma\le 1/\sqrt{\dims}$ to achieve any meaningful testing result. 
    In this case, the third term above is always dominated by the first term, i.e. $\dims\sqrt{\frac{\gamma}{\ns}}\le \frac{\dims^{3/4}}{\ns}$, and thus can be removed. 
    
    Finally we bound the success probability. Conditioned on $\norm{\p_{\rho}^U-\p_{\qkn}^U}_1$ are sufficiently far apart, which by \cref{lem:l1-output-distr} happens with probability at least 0.9, the output of \texttt{RobustDistrTesting} is correct with probability at least 0.9. Thus, the success probability is at least $0.9\times 0.9>0.8$. This completes the proof of the theorem.
\end{proof}

\subsection{Proof of \cref{lem:l1-output-distr}}
\label{sec:lem:l1-output-distr}
We prove \cref{lem:l1-output-distr} in this section. Recall that a unitary $U=[u_1, \ldots, u_\dims]$ is sampled uniformly from the Haar measure, and $\POVM_U=\qproj{u_i}$ is the corresponding basis measurement. When the state is $\rho$, the measurement outcome follows distribution $\p_{\rho}^U$ where
\[
\p_{\rho}^U(x)=\matdotprod{u_x}{\rho}{u_x}, x=1,\ldots, \dims.
\]
The idea is to use H\"older's inequality,
\[
\norm{\p_{\rho}^U-\p_{\qkn}^U}_2^2\le \norm{\p_{\rho}^U-\p_{\qkn}^U}_1\norm{\p_{\rho}^U-\p_{\qkn}^U}_\infty.
\]
Thus, it suffices to lower bound the $\ell_2$ norm and upper bound the infinity norm. The lower bound on $\ell_2$ norm was proven in prior works~\cite{ChenLO22instance, liu2024restricted}. 
\begin{lemma}
For $\dims\ge 200$, with probability at least 0.95 over a Haar random unitary $U$, 
   \[
   \norm{\p_{\rho}^U-\p_{\qkn}^U}_2\ge 0.07 \frac{\hsnorm{\rho-\qkn}}{\sqrt{\dims}}.
   \]
   \label{lem:haar-l2-lower}
\end{lemma}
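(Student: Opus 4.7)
The plan is to apply the second moment method (a lower-tail Chebyshev inequality) to
\[
Y(U) \eqdef \norm{\p_{\rho}^U - \p_{\qkn}^U}_2^2 = \sum_{x=1}^{\dims} \matdotprod{u_x}{M}{u_x}^2,
\]
where $M \eqdef \rho - \qkn$ is traceless Hermitian with $\hsnorm{M} = \hsnorm{\rho - \qkn}$. It suffices to show $Y \ge (0.07)^2 \hsnorm{M}^2/\dims$ with probability at least $0.95$. First I would note that the mean of $Y$ already sits at the right scale: by the single-vector second Haar moment \eqref{equ:haar-k-moment} and $\Tr[M]=0$, each marginally-Haar column gives $\EE[\matdotprod{u_x}{M}{u_x}^2] = \hsnorm{M}^2/(\dims(\dims+1))$, so summing over the $\dims$ columns yields $\EE[Y] = \hsnorm{M}^2/(\dims+1)$.

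Next I would bound $\Var(Y)$ via fourth-order Haar integrals. Split
\[
\EE[Y^2] = \sum_x \EE[\matdotprod{u_x}{M}{u_x}^4] + \sum_{x\ne y} \EE[\matdotprod{u_x}{M}{u_x}^2 \matdotprod{u_y}{M}{u_y}^2].
\]
The diagonal pieces reduce via the cycle expansion \eqref{equ:haar-trace-moment}, and using $\Tr[M]=0$ together with $\Tr[M^4]\le\Tr[M^2]^2$ yield $\sum_x\EE[\matdotprod{u_x}{M}{u_x}^4] = O(\hsnorm{M}^4/\dims^3)$. The off-diagonal terms are joint moments of two orthogonal Haar columns, which I would evaluate by pairing $M^{\otimes 4}$ against $\EE[\qproj{u_x}^{\otimes 2}\otimes\qproj{u_y}^{\otimes 2}]$ through Weingarten calculus on $\Sim_4$; permutation classes whose cycle structure leaves a singleton $M$-slot kill the contribution thanks to $\Tr[M]=0$, and the surviving classes sum to $\EE[Y]^2\bigl(1 + O(1/\dims)\bigr)$. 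Consequently $\Var(Y) = O(\hsnorm{M}^4/\dims^3)$.

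Finally, a lower-tail Chebyshev estimate gives
\[
\probaOf{Y \le (1-t)\EE[Y]} \le \frac{\Var(Y)}{(t\EE[Y])^2} = O\!\left(\frac{1}{t^2\dims}\right),
\]
and taking $t$ close to (but bounded away from) $1$ so that $(1-t)\EE[Y] \ge (0.07)^2\hsnorm{M}^2/\dims$ still leaves the right-hand side of order $1/\dims$; for $\dims\ge 200$ and the right absolute constant this drops below $0.05$, and taking square roots yields the claim. The main obstacle is the off-diagonal computation: naively each of the $\sim\dims^2$ pairs contributes $\Theta(\hsnorm{M}^4/\dims^4)$, so the off-diagonal sum sits exactly at $\Theta(\EE[Y]^2)$, and one must track the cancellations enforced by $\Tr[M]=0$ carefully to extract the $O(1/\dims)$ slack. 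An equivalent and often cleaner route is to diagonalize $M = \diag(\lambda_1,\ldots,\lambda_\dims)$ with $\sum_i\lambda_i=0$, write $\matdotprod{u_x}{M}{u_x} = \sum_i \lambda_i |U_{ix}|^2$, and invoke symmetric Dirichlet-distribution moments for each column's squared amplitudes, which makes the traceless cancellation manifest.
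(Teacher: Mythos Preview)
The paper does not supply its own proof of this lemma; it simply cites \cite[Lemma~6.4]{ChenLO22instance} and \cite[Section~C.2.2]{liu2024restricted}. Your second-moment/Chebyshev route is exactly the standard argument behind those references, so in that sense you are aligned with what the paper invokes. The mean computation $\EE[Y]=\hsnorm{M}^2/(\dims+1)$ is correct, the diagonal fourth moments are indeed $O(\hsnorm{M}^4/\dims^3)$ via \eqref{equ:haar-trace-moment} with $\Tr[M]=0$, and your identification of the off-diagonal Weingarten computation as the crux is right: naively those $\dims(\dims-1)$ terms sit at the same order as $\EE[Y]^2$, and it is precisely the $\Tr[M]=0$ cancellations (killing every Weingarten term whose index pattern forces a bare $\Tr[M]$ factor) that push the surplus down to $O(\EE[Y]^2/\dims)$.

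One small point to watch: your Chebyshev step needs $\Var(Y)\le 0.05\,t^2\EE[Y]^2$ with $t\approx 0.995$, which amounts to $\Var(Y)\lesssim 0.05\,\hsnorm{M}^4/\dims^2$; since $\Var(Y)=C\hsnorm{M}^4/\dims^3$ this requires $\dims\gtrsim 20C$, so you do have to keep the Weingarten constants honest to land under $\dims\ge 200$. The Dirichlet route you mention (diagonalize $M$, use that each column's squared amplitudes are symmetric Dirichlet$(1,\ldots,1)$) is indeed cleaner for tracking those constants and is the version used in \cite{liu2024restricted}. Either way, the plan is correct.
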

We refer the readers to \cite[Lemma 6.4]{ChenLO22instance} or \cite[Section C.2.2]{liu2024restricted} for detailed proof.

Next we upper bound the $\ell_\infty$ norm. We use the fact that $\ell_\infty$ norm is upper bounded by all $\ell_p$ norms and then choose the $p$ that minimizes the upper bound.

\begin{lemma}
    Let $\Delta=\rho-\qkn$ and $U$ sampled uniformly from the Haar measure. For all even integers $p\ge 2$,
    \[
    \expectDistrOf{U}{\norm{\p_{\rho}^U-\p_{\qkn}^U}_p^p}\le\frac
    {\dims}{2}\binom{\dims+p-1}{p}^{-1}\hsnorm{\Delta}^p.
    \]
    \label{lem:expect-lp-norm}
\end{lemma}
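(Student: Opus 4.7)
The plan is to exploit the column-exchangeability of a Haar-random unitary, then apply the Haar moment formula \eqref{equ:haar-trace-moment} to a single column, and finally bound the resulting sum over permutations using the cycle structure and the trace-zero condition on $\Delta$.

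First I would observe that since $\Delta = \rho - \qkn$ is Hermitian, each coordinate difference is real: $\p_\rho^U(x) - \p_\qkn^U(x) = \matdotprod{u_x}{\Delta}{u_x}$. Since $p$ is even, $|\matdotprod{u_x}{\Delta}{u_x}|^p = \matdotprod{u_x}{\Delta}{u_x}^p$. The joint law of the columns $(u_1,\ldots,u_\dims)$ of a Haar-random unitary is invariant under left multiplication by any unitary, so each column is marginally uniform on the complex unit sphere. Linearity of expectation therefore gives
\[
\expectDistrOf{U}{\norm{\p_\rho^U - \p_\qkn^U}_p^p} = \sum_{x=1}^\dims \expectDistrOf{U}{\matdotprod{u_x}{\Delta}{u_x}^p} = \dims\cdot \expectDistrOf{u\sim\Haar{\dims}}{\matdotprod{u}{\Delta}{u}^p}.
\]

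Next I apply \eqref{equ:haar-trace-moment} with $M = \Delta$ and $k = p$:
\[
\expectDistrOf{u\sim\Haar{\dims}}{\matdotprod{u}{\Delta}{u}^p} = \binom{\dims+p-1}{p}^{-1}\frac{1}{p!}\sum_{\pi\in \Sim_p}\prod_{c\in\cycle(\pi)}\Tr[\Delta^{|c|}].
\]
Since $\Tr[\Delta] = \Tr[\rho] - \Tr[\qkn] = 0$, any permutation with a fixed point (a 1-cycle) contributes zero, so the sum collapses to one over derangements of $[p]$, a set of size $D_p$. For each remaining cycle of length $k \ge 2$, writing the eigenvalues of $\Delta$ as $\lambda_1,\ldots,\lambda_\dims$, monotonicity of $\ell_q$ norms in $q$ gives $|\Tr[\Delta^k]| \le \sum_i |\lambda_i|^k \le (\sum_i \lambda_i^2)^{k/2} = \hsnorm{\Delta}^k$. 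Multiplying the bounds across cycles of a derangement yields $|\prod_c \Tr[\Delta^{|c|}]|\le \hsnorm{\Delta}^{\sum_c |c|} = \hsnorm{\Delta}^p$.

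The last ingredient is the elementary identity $D_p/p! = \sum_{k=0}^p (-1)^k/k!$, whose partial sums are bounded by $1/2$ for every $p\ge 2$ (with equality at $p=2$, and converging to $1/e$ from above). Combining these estimates,
\[
\expectDistrOf{u}{\matdotprod{u}{\Delta}{u}^p} \le \binom{\dims+p-1}{p}^{-1}\frac{D_p}{p!}\hsnorm{\Delta}^p \le \frac{1}{2}\binom{\dims+p-1}{p}^{-1}\hsnorm{\Delta}^p,
\]
and multiplying by $\dims$ produces the claimed bound. The main subtlety is noticing that the trace-zero condition kills all non-derangement terms, which is precisely what allows the factor $1/2$ to appear; beyond that, everything reduces to the Haar moment formula and $\ell_q$-norm monotonicity on the spectrum of $\Delta$.
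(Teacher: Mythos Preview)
Your proof is correct and follows essentially the same route as the paper: reduce to a single Haar-uniform column via linearity, apply the moment formula \eqref{equ:haar-trace-moment}, use $\Tr[\Delta]=0$ to restrict to derangements, bound each cycle factor by $\hsnorm{\Delta}^{|c|}$ via Schatten-norm monotonicity, and conclude with $D_p/p!\le 1/2$. The only cosmetic difference is that you sum over $x$ first while the paper sums last; one minor quibble is that the derangement ratio oscillates toward $1/e$ rather than converging from above, but the bound $\le 1/2$ (attained at $p=2$) is exactly what is needed.
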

\begin{proof}
Note that for $x=1, \ldots, \dims$, 
\[
\p_{\rho}^U(x)-\p_{\qkn}^U(x)=\matdotprod{u_x}{\rho-\qkn}{u_x}=\matdotprod{u_x}{\Delta}{u_x}.
\]
Thus using \eqref{equ:haar-trace-moment},
\begin{align*}
    \expectDistrOf{U}{(\p_{\rho}^U(x)-\p_{\qkn}^U(x))^p}&=\expectDistrOf{U}{\matdotprod{u_x}{\Delta}{u_x}^p}\\
    &=\binom{\dims+p-1}{p}^{-1}\frac{1}{p!}\sum_{\pi\in\Sim_p}\prod_{c\in \cycle(\pi)}\Tr[\Delta^{|c|}]
\end{align*}
Since $\Tr[\Delta=0]$, the product would be 0 if $\pi$ has a cycle of size 1. Thus, we only have to sum over all $\pi$ that is a derangement, i.e., there does not exist $i\in[p]$ such that $\pi(i)=i$. For these permutations, all cycles must have length at least 2, and thus, by the monotinicity of Schatten norms,
\[
|\Tr[\Delta^{|c|}]|\le \Tr[|\Delta|^{|c|}]=\norm{\Delta}_{S_{|c|}}^{|c|}\le\hsnorm{\Delta}^{|c|}.
\]
Here $|\Delta|$ is the unique p.s.d square root of $\Delta$. The first inequality is because if $\Delta$ has eigenvalues $\lambda_1,\ldots, \lambda_\dims$, then the eigenvalues of $|\Delta|$ would be $|\lambda_1|,\ldots, |\lambda_\dims|$. Therefore,
\begin{align*}
    \expectDistrOf{U}{(\p_{\rho}^U(x)-\p_{\qkn}^U(x))^p}&\le \binom{\dims+p-1}{p}^{-1}\frac{1}{p!}\sum_{\pi\text{ derangement}}\prod_{c\in \cycle(\pi)}\hsnorm{\Delta}^{|c|}\\
&=\binom{\dims+p-1}{p}^{-1}\frac{1}{p!}\sum_{\pi\text{ derangement}}\hsnorm{\Delta}^{p}\\
&=\binom{\dims+p-1}{p}^{-1}\frac{1}{2}\hsnorm{\Delta}^{p}
\end{align*}
In the final step, we used the fact that the number of derangements in $\Sim_p$ is $p!\sum_{i=0}^p\frac{(-1)^i}{i!}\le \frac{p!}{2}$. Finally, we prove the lemma by summing over all $x\in[\dims]$ and using the linearity of expectation.
\end{proof}
\begin{corollary}
    For $\dims\ge 200$, with probability at least 0.95 over the Haar measure $\Haar{\dims}$,
    \[
    \norm{\p_{\rho}^U-\p_{\qkn}^U}_\infty \le \frac{3e\ln\dims}{\dims}.
    \]
    \label{cor:infinity-norm-upper}
\end{corollary}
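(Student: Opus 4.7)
The plan is to leverage the standard bound $\|\cdot\|_\infty \le \|\cdot\|_p$ for every $p \ge 1$, combined with the $p$-th moment estimate from \cref{lem:expect-lp-norm} and Markov's inequality. Since both $\rho$ and $\qkn$ are quantum states, $\hsnorm{\Delta} = \hsnorm{\rho-\qkn} \le \tracenorm{\rho-\qkn} \le 2$, so for every even integer $p \ge 2$,
\[
\expectDistrOf{U}{\norm{\p_{\rho}^U-\p_{\qkn}^U}_p^p} \le \frac{\dims}{2}\binom{\dims+p-1}{p}^{-1} 2^p.
\]
Applying Markov at level $20$ shows that with probability at least $0.95$,
\[
\norm{\p_{\rho}^U-\p_{\qkn}^U}_\infty \le \norm{\p_{\rho}^U-\p_{\qkn}^U}_p \le 2\,(10\dims)^{1/p}\binom{\dims+p-1}{p}^{-1/p}.
\]

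Next I would simplify the binomial coefficient via $\binom{\dims+p-1}{p} \ge \dims^p/p!$, which yields
\[
\norm{\p_{\rho}^U-\p_{\qkn}^U}_\infty \le \frac{2\,(10\dims)^{1/p}\,(p!)^{1/p}}{\dims}.
\]
Using the crude Stirling bound $(p!)^{1/p} \le p$ and choosing $p = \lceil \ln \dims \rceil$, we get $\dims^{1/p} \le e$ and $10^{1/p} \le 10^{1/\ln 200} \le 1.55$ when $\dims \ge 200$, so the right-hand side is at most $2 e \cdot 1.55 \cdot \lceil\ln \dims\rceil/\dims$. A small amount of book-keeping (using $\lceil \ln\dims\rceil \le \ln\dims + 1 \le (1+1/\ln 200)\ln\dims$ for $\dims \ge 200$) then gives the claimed bound $3e\ln\dims/\dims$.

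The only real obstacle is tracking the constants carefully enough to land at exactly $3e$: the ``correct'' asymptotic scaling $\ln\dims/\dims$ is immediate from choosing $p = \Theta(\ln \dims)$, but hitting $3e$ requires using a sharper form of Stirling (e.g.\ $(p!)^{1/p} \le p/e \cdot (2\pi p)^{1/(2p)} e^{1/(12p^2)}$) only if the crude $(p!)^{1/p}\le p$ route falls short at the boundary case $\dims = 200$. One can verify at $\dims = 200$, $p=5$ that the explicit bound $2(10 \cdot 200)^{1/5}(120)^{1/5}/200 \approx 0.12$ is already well below $3e\ln 200/200 \approx 0.22$, and the ratio only improves as $\dims$ grows (since the optimum $p \sim \ln\dims$ makes both $\dims^{1/p}$ and $10^{1/p}$ tend to $e$ and $1$ respectively), so no sharper Stirling bound is needed.
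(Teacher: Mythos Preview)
Your approach matches the paper's: bound $\|\cdot\|_\infty\le\|\cdot\|_p$, apply Markov's inequality to \cref{lem:expect-lp-norm}, and take $p$ of order $\ln\dims$. Two small points to fix. First, \cref{lem:expect-lp-norm} is stated only for \emph{even} $p$, so neither $p=\lceil\ln\dims\rceil$ nor your numerical check at $\dims=200$, $p=5$ is valid as written; the paper takes $p=2\lfloor\ln\dims\rfloor$, and rounding to the nearest even integer in your argument costs essentially nothing (at $\dims=200$, $p=6$ gives a bound of about $0.11$, still well under $3e\ln 200/200\approx 0.22$). Second, the paper's proof keeps the factor $\hsnorm{\Delta}$ throughout rather than bounding it by $2$ upfront, and it is this $\hsnorm{\Delta}$-dependent form that is actually used downstream in the proof of \cref{lem:l1-output-distr}; the corollary as stated appears to have dropped that factor.
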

\begin{proof}
    We choose $p=2\lfloor\ln \dims\rfloor$ in \cref{lem:expect-lp-norm}. By Markov's inequality, 
    \[
    \probaDistrOf{U}{\norm{\p_{\rho}^U-\p_{\qkn}^U}_p^p\ge 10\dims\binom{\dims+p-1}{p}^{-1}\hsnorm{\Delta}^p}\le 0.05.
    \]
    Thus, with probability at least 0.95,
    \begin{align*}
        \norm{\p_{\rho}^U-\p_{\qkn}^U}_\infty&\le \norm{\p_{\rho}^U-\p_{\qkn}^U}_p\\
        &\le\frac{(10\dims)^{1/p}\hsnorm{\Delta}}{\binom{\dims+p-1}{p}^{1/p}}\\
        &\le \frac{p(10\dims)^{1/p}\hsnorm{\Delta}}{\dims+p-1}\\
        &\le \frac{2\ln \dims\cdot 1.5e}{\dims}\hsnorm{\Delta}.
    \end{align*}
    The penultimate step follows by ${n\choose k}\ge (n/k)^k$. The final step is because $d^{1/\ln d}=e$ and $\dims \ge 200$. Thus, we prove the lemma.
\end{proof}

Combining \cref{lem:haar-l2-lower}, \cref{cor:infinity-norm-upper}, and H\"older's inequality, with probability at least 0.9,
\[
 \norm{\p_{\rho}^U-\p_{\qkn}^U}_1\ge \frac{\norm{\p_{\rho}^U-\p_{\qkn}^U}_2^2}{\norm{\p_{\rho}^U-\p_{\qkn}^U}_\infty}\ge \frac{0.0049}{3e}\cdot \frac{\hsnorm{\Delta}}{\ln \dims}.
\]
This proves \cref{lem:l1-output-distr}.

\section*{Acknowledgement}
Vladimir Braverman is supported by NSF 2528780 and the Naval Research (ONR) grant N00014-23-1-2737. Nai-Hui Chia is supported by NSF Awards FET-2243659 and FET-2339116 (CAREER) and DOE Quantum Testbed Finder Award DE-SC0024301. Yuhan Liu is supported by a Rice University Chairman postdoctoral fellowship and the Naval Research (ONR) grant N00014-23-1-2737.

\bibliography{refs}
\bibliographystyle{alpha}

\end{document}